\numberwithin{equation}{section}
\providecommand{\w}{W} 
\providecommand{\wt}{\w_{3D}} 
\providecommand{\ft}{f_{3D}} 
\renewcommand{\wr}{W_r} 
\providecommand{\ud}{\, \mathrm{d}}
\providecommand{\dxy}{\ud x'}
\providecommand{\dxyz}{\ud x}
\providecommand{\dz}{\ud x_3}
\providecommand{\dS}{\ud S}
\providecommand{\dr}{\ud r}
\providecommand{\grad}{D}
\providecommand{\eps}{{\varepsilon_*}}
\providecommand{\epse}{\varepsilon_*}
\providecommand{\R}{\mathbb{R}}
\providecommand{\dist}{\,\mathrm{dist}}
\providecommand{\Omegah}{{\Omega\times (-h/2,h/2)}} 
\providecommand{\A}{\mathcal{A}} 
\providecommand{\ut}{\tilde u}
\renewcommand{\lll}{\lambda_1,\lambda_2,\lambda_3}
\renewcommand{\ll}{\lambda_1,\lambda_2}
\providecommand{\bo}{w_{3D}}
\providecommand{\relaxed}{{\Omega_R}}
\providecommand{\nonrelaxed}{{\Omega_N}}
\providecommand{\NR}{N_\rho} 
\providecommand{\RR}{R_\rho} 
\providecommand{\B}{B_{h}^{3D}} 
\providecommand{\E}{E_h^{3D}} 
\renewcommand{\o}{w} 
\providecommand{\ug}{u_h^{3D}} 
\providecommand{\eg}{E_{h}^{3D}} 
\newcommand{\footnotedva}[1]{}
\DeclareMathOperator*{\argmin}{argmin\,}
\newtheorem{theorem}{Theorem}
\newtheorem{lemma}{Lemma}[section]
\newtheorem*{lemma*}{Lemma}
\newtheorem*{prop*}{Proposition}
\newtheorem*{cor*}{Corollary}
\newtheorem*{remark*}{Remark}
\newtheorem{remark}[lemma]{Remark}
\begin{document}

\title{Wrinkles as a relaxation of compressive stresses in an~annular thin film}
\author{Peter Bella\footnote{P. Bella gratefully acknowledges partial support from NSF through grant DMS-0807347.} ~and
Robert V. Kohn\footnote{R. V. Kohn gratefully acknowledges partial support from NSF grants DMS-0807347 and OISE-0967140.}\\
Courant Institute of Mathematical Sciences\\
New York University\\
bella@cims.nyu.edu, kohn@cims.nyu.edu}

\maketitle


\begin{abstract}
It is well known that an elastic sheet loaded in tension will wrinkle and that the length scale of the wrinkles tends to zero with vanishing thickness of the sheet  [Cerda and Mahadevan, Phys. Rev. Lett. 90, 074302 (2003)]. We give the first mathematically rigorous analysis of such a problem. Since our methods require an explicit understanding of the underlying (convex) relaxed problem, we focus on the wrinkling of an annular sheet loaded in the radial direction [Davidovitch et al., PNAS {\bf 108} (2011), no. 45]. 
%
Our main achievement is identification of the scaling law of the minimum energy as the thickness of the sheet tends to zero. This requires proving an upper bound and a lower bound that
scale the same way. We prove both bounds first in a simplified Kirchhoff-Love setting
and then in the nonlinear three-dimensional setting. To obtain the optimal upper bound, we need to adjust a naive construction (one family of wrinkles superimposed on a planar deformation) by introducing a cascade of wrinkles.
The lower bound is more subtle, since it must be ansatz-free. 
\end{abstract}


\section{Introduction}


In the last few years the wrinkling and folding of thin elastic sheets has attracted a lot
of attention in both the mathematics and physics communities (see, e.g., the recent book by Audoly and Pomeau~\cite{bib-audolypomeau}).
Wrinkled configurations can be viewed as (local) minimizers of a suitable elastic energy, 
consisting of a non-convex ``membrane energy'' plus a higher-order singular perturbation
representing ``bending energy''.
Though the physically relevant wrinkled configurations are local minimizers, 
we can begin to understand their character by focusing on (i) the minimum 
value of the elastic energy, and (ii) the properties of low-energy deformations. 
In this paper we identify the {\it scaling law} of the minimum energy for
an annular sheet stretched in the radial direction. 
This requires proving an upper bound and a lower bound that
scale the same way. A naive approach to the upper bound, based on a single length scale of wrinkling, fails to achieve the optimal scaling~\cite{bib-benny1}; the successful approach uses a cascade of wrinkles. The lower bound is more subtle, since it must be ansatz-free. We prove it  
first in a reduced Kirchhoff-Love setting and later in a general ansatz-free three-dimensional setting.

As mentioned above, the behavior of thin elastic sheets has attracted considerable attention from the physics community (see, e.g., work on sheets of graphene~\cite{bib-graphene}). Mahadevan and Cerda considered the stretching of a rectangular elastic sheet with clamped boundaries~\cite{bib-cerdamaha1} (see also~\cite{bib-mahariz} for experiments and~\cite{bib-friedl} for numerical computation), by minimizing the elastic energy of the membrane within a particular ansatz. 
The problem we consider here is similar, but our viewpoint and achievement are different: we prove an upper bound and a matching ansatz-free lower bound on the elastic energy. Our analysis does not assume a specific form for the solution.

Our treatment requires knowledge of the underlying convex relaxed problem. In the radial setting (see~\cite{bib-benny1}) the relaxed problem reduces to a simple one dimensional variational problem 
which can be analyzed quite completely. For this reason we focus on 
an annular sheet stretched in the radial direction as a convenient model problem, rather than addressing the case considered in~\cite{bib-cerdamaha1}.


One might ask why we are so interested in the scaling law of the minimal energy.
As mentioned above, stable configurations are local minimizers of the elastic energy. 
It seems difficult to find such configurations analytically
(the associated fourth-order PDE is highly nonlinear). But we expect the
physically-relevant configurations to have relatively low energy. Therefore we
can obtain some information about them by identifying the energy scaling law, then
investigating the properties of configurations that achieve this law. 
While the present paper focuses mainly on the energy scaling law, certain consequences
are immediately evident. In particular, since our scaling law is linear in $h$ and the
bending energy is $h^2 \int |\nabla ^2 u_3|^2$, it is immediately evident that the low-energy
configurations become increasingly complex as $h \rightarrow 0$.

\subsection{Context}

Motivated by experiments, several physics papers have studied the wrinkling of thin elastic films from a theoretical point of view. As already mentioned, Davidovitch et al.~\cite{bib-benny1} considered an annular film stretched in the radial direction (see also~\cite{bib-geminard} for related results). Dead loads applied both on the inside and outside boundary cause the film to wrinkle in some region. Indeed, if the loads inside are large enough compared to the loads on the outer boundary, the deformation in the radial direction forces the concentric circles of the material near the inner boundary to decrease their length by more than is required by the Poisson ratio of the material. Therefore, the membrane needs to waste this excess in the circumference either by compression or by buckling out of plane, contributing to the energy with some amount which depends on $h$.
In~\cite{bib-benny1} they found an optimal solution (using energy minimization methods) within a particular ansatz and using a linear stress-strain law, obtaining conclusions about the extent of wrinkled region and the period and amplitude of wrinkles. In the present paper we consider the same problem using a nonlinear $3D$ model. We will prove an upper bound and a matching lower bound without assuming any ansatz.

Our problem seems related to the experiment reported in~\cite{bib-huang}. It consists of a circular thin elastic film placed on a liquid substrate with a droplet on top of the film. In this case, the capillary forces at the boundary stretch the film in the radial direction and the capillary forces from the drop force the film to wrinkle. This experiment was studied theoretically in~\cite{bib-cerdavella} (see also~\cite{benny2011}), though a lot of questions still remain open. We believe that our methods may also be useful in the study of this problem.

The idea of proving an upper bound and a matching lower bound for the minimum of the energy has a long history; see e.g., the work of Kohn and M\"uller on a model for martensitic phase transformation~\cite{bib-bob+muller1} (see also~\cite{bib-contibranching} for subsequent progress). As in our setting, the energy in~\cite{bib-bob+muller1} was composed of a nonconvex function of $\grad u$ singularly perturbed by a higher order term. In the setting of~\cite{bib-bob+muller1} 
the minimizer develops a fine branching structure. Similar phenomena are seen 
in uniaxial ferromagnets and type I superconductors (see, e.g.,~\cite{bib-choksi-conti-bob-otto},~\cite{bib-choksikohnotto}, and~\cite{bib-lifschitz}). 
In all these settings there is a ``relaxed problem'', whose minimizers are the weak limits of optimal configurations as $h \to 0$. The minimal energy for $h > 0$ is that of the relaxed problem plus a small correction that scales with $h$.  
One difference here is the special character of the singular perturbation -- bending energy rather than surface energy -- which leads to creation of smooth wrinkles rather than walls. 

The main focus here is the scaling of wrinkles associated with $h>0$. This is different from mere identification of the extent of the wrinkled region, which can be done by studying the relaxed problem or using the tension field theory.
There is a lot of literature on this application of tension field theory (see, e.g., the 1961 NASA report~\cite{bib-nasa}, \cite{epstein-tension}, or a recent work on balloons~\cite{bib-baginski2011}).


\subsection{The main idea}

Before starting with rigorous arguments let us outline our result and the main ideas of its proof. We will work variationally, considering the 
sum of the elastic energy of the thin sheet and boundary terms representing the work done by the loads. We first consider a simplified two-dimensional setting where the elastic energy is further split into a membrane and a bending term defined in terms of the midplane deformation $u: \R^2 \to \R^3$. The membrane term is written as the integral of a reduced $2D$ stored energy density obtained in a systematic way from the original $3D$ stored energy density. As the bending term we choose the $L^2$ norm of the second derivatives of the out-of-plane displacement as often seen in the linear F\"oppl-von K\'arm\'an energy used for small slopes and deformations. This $2D$ model is a curious hybrid since we use a nonlinear stretching term together with a linear bending term. One can ask why we don't also use a linear stretching part? In fact, that scenario would be very limited since it would lead to a very restrictive linear model for the relaxed problem. 

Since our focus is the limiting behavior as the thickness $h$ of the sheet tends to zero, 
we divide the energy by $h$ to get the energy per unit thickness $E_h$. The first step toward identification of the scaling law is to separate the contributions to $E_h$ from wrinkling and from the bulk deformation. This is done by considering a relaxed problem, where instead of the original stored energy density we use its quasiconvexification and formally set $h = 0$ (see~\cite{bib-pipkin2} for more detail on this topic).

Under mild assumptions the relaxed energy is convex, smaller than the original energy, and independent of the thickness. Moreover, we show that it possesses a unique solution $u_0$ (up to a translation) which is radially symmetric and planar. 
We denote the relaxed energy of $u_0$ by $\mathcal{E}_0$; as we will see the energy $E_h(u_0)$ is strictly larger than $\mathcal{E}_0$. This is 
because $u_0$ involves compression in the hoop direction in a region close to the inner boundary (we will call it the ``relaxed'' region). Since the thickness is small the sheet prefers to wrinkle rather than to compress. 

The idea of the construction for the upper bound on the minimum energy is to superimpose wrinkles upon $u_0$. After optimizing the amplitude and wavelength of the wrinkles in a naive ansatz we obtain a solution with energy $\mathcal{E}_0 + Ch\left|\log h\right|$. To remove the logarithmic factor (i.e. to get the same scaling as the lower bound) we need to work harder. We observe that the out-of-plane part of the deformation decreases suddenly at the boundary of the relaxed region; this is the source of the $|\log h|$ factor. At the same time, since the amplitude of the out-of-plane deformation is vanishing, the bending is vanishing as well. Therefore we can introduce branching of the wrinkles (changing their period) near the boundary of the relaxed region; this increases the bending but at the same time decreases the amplitude of the out-of-plane deformation, making the decrease of the out-of-plane displacement less steep. By arguing this
way we obtain a construction whose energy is bounded by $\mathcal{E}_0 + Ch$.

The lower bound $\min E_h \ge \mathcal{E}_0 + ch$ ($c>0$ independent of $h$) is proved using an argument by contradiction. In the simplified two-dimensional setting, we first use the relaxed problem to prove an estimate on the out-of-plane displacement $u_3$. Then using interpolation we show smallness of $\grad u_3$, allowing us to project the solution into the plane without changing its energy too much. Finally, we compare this projection with $u_0$ and obtain a contradiction from an argument about the area of the deformed annulus.
The generalization to the nonlinear three-dimensional setting uses the main arguments of the 2D setting coupled with rigidity estimates first derived by Friesecke, James, and M\"uller~\cite{bib-rigid1}.

The paper is organized as follows. In Section~\ref{sec-model} we describe the three dimensional energy together with a reduced two dimensional model. The definition of the relaxed problem and a theorem about its unique minimizer is in Section~\ref{sec-relax}. Section~\ref{sec-2d} contains both the upper bound and the lower bound in the reduced 2D setting. In Section~\ref{sec3d} we generalize the upper and lower bounds proved in Section~\ref{sec-2d} to the three-dimensional setting. The last section contains a brief discussion of our achievements together with some open questions.

We will use notation $x = (x_1,x_2,x_3) = (x',x_3)$ for points in $\R^3$, $A:B = \mathrm{tr}(A^TB)$ for the Frobenius inner product on matrices, and $\partial_1$, $\partial_2$ for partial derivatives with respect to the first and second variable.

\section{The model}\label{sec-model}

We are interested in deformations of isotropic elastic thin films of annular shape. 
We consider a nonlinear three-dimensional elastic energy (per unit thickness) in a cylindrical domain with small thickness~$h$ 
\begin{equation*}
\E(u) = \frac{1}{h} \int_{\Omegah} \wt(Du) \dxyz.
\end{equation*}
The stored energy density $\wt(M)$ is assumed to be isotropic, so it can be written as a symmetric function of the eigenvalues of $\sqrt{M^TM}$:
\begin{equation*}
 \wt(M) = \ft(\lambda_1,\lambda_2,\lambda_3). 
\end{equation*}
Here and below, we assume that $M=\grad u$ has strictly positive determinant; this is natural, since $M$ is the gradient of an elastic deformation.


As already mentioned, we are interested in deformations of annular thin films. We consider a thin cylindrical domain $\Omegah$ with a cross-section
\begin{equation*}
 \Omega = \left\{ x \in \mathbb{R}^2 : R_{in} < |x| < R_{out} \right\}
\end{equation*}
for some radii $0 < R_{in} < R_{out}$. The dead loads are applied on the inner boundary in the radial direction (with magnitude $T_{in}$, pointing inwards) and on the outer boundary (with magnitude $T_{out}$, pointing outwards), so the film will mostly stretch in the radial direction. These loads contribute to the total energy as
\begin{equation*}
 \B(u) := \frac{T_{in}}{h} \int_{|\hat x|=R_{in}} u(x) \cdot\frac{\hat x}{R_{in}} \dS - \frac{T_{out}}{h} \int_{|\hat x|=R_{out}} u(x) \cdot\frac{\hat x}{R_{out}} \dS,
\end{equation*}
where $\hat x = (x_1,x_2,0)$ and $\dS$ denotes surface measure.
We will show in Theorem~\ref{thm-3d} that under suitable assumptions on the elastic energy density~$\wt$, radii $R_{in},R_{out}$, and forces $T_{in},T_{out}$, we have
\begin{equation}\label{eq12}
\min_{u} \E(u) + \B(u) = \mathcal{E}_0 + O(h),
\end{equation}
where $\mathcal{E}_0$ is a constant (depending on $\Omega, T_{in}, T_{out}$ and $\wt$). Since $\mathcal{E}_0$ is the limiting energy as $h \to 0$, we view it as representing
the ``bulk energy'' of the deformation. The order-$h$ correction is the contribution from the wrinkling of the membrane. 

\subsection{The reduced model}

Since we consider domains which are thin in the $x_3$-direction, we can gain insight 
by first considering a~reduced two dimensional Kirchhoff-Love model. In this setting we are interested only in the deformation of one cross section (e.g. the mid-plane $x_3 = 0$), knowing that we can extend the deformation to the thin three-dimensional body by assuming that straight lines normal to the plane remain straight and normal to the plane after deformation. Assuming this ansatz, the energy per unit thickness is the sum of the ``membrane'' and ``bending'' energies
\begin{equation*}
\int_\Omega \w(\grad u)\dxy + h^2 \int_\Omega \mathcal{Q}(\grad \nu) \dxy,
\end{equation*}
where $\nu$ is the normal to the mid-surface of the deformation, $\mathcal{Q}$ is a certain quadratic function (derived from $\wt$) and the form of $\w(\grad u)$ will be discussed in Section~\ref{sect-w}. The second term (the bending energy) can be expressed using the first and second derivative of $u$. In the 2D analysis we will replace the bending term $\mathcal{Q}(\grad \nu)$ by a simpler term $|D^2 u_3|^2$.
Though the new term doesn't represent a physically correct bending energy, it is mathematically more convenient and still captures the main phenomenon. After including the boundary terms the two-dimensional energy has the form:
\begin{equation}\label{eq001}
  E_h(u) := \int_\Omega \w(\grad u) + h^2 |\grad^2 u_3|^2 \dxy + B(u),
\end{equation}
where the boundary terms are 
\begin{equation}\label{eq003}
 B(u) :=  T_{in} \int_{|x'|=R_{in}}  u(x') \cdot\frac{x'}{R_{in}} \dS - T_{out} \int_{|x'|=R_{out}} u(x') \cdot\frac{x'}{R_{out}} \dS.
\end{equation}

As in the general $3D$ setting our main result is a scaling law for the minimum of the energy. We will show that if $E_h$ is defined by~\eqref{eq001} then
\begin{equation*}
 \min_u E_h(u) = \mathcal{E}_0 + O(h)
\end{equation*}
for any sufficiently small $h>0$. The constant $\mathcal{E}_0$ is the minimum of the relaxed energy (the same constant as in~\eqref{eq12}). The order-$h$ correction is the contribution from the wrinkling of the membrane.

\subsection{The energy density}\label{sect-w}

In this section we will describe the assumptions we impose on the elastic energy density $\wt$. 

Since the energy density~$\wt(M)$ is isotropic, it is convenient to represent it as a function of the principal strains (i.e., the eigenvalues $\lambda_1 \ge \lambda_2 \ge \lambda_3 \ge 0$ of $(M^TM)^{1/2}$):
\begin{equation}\nonumber
 \wt(M) = \ft(\lll).
\end{equation}
We assume that
\begin{equation}\label{eq031}
\begin{gathered}
 \ft \in C^2( [0,\infty)^3),\quad \ft(1,1,1) = 0,\quad \ft \ge 0, \quad 0 < D^2 \ft \le C, \\
 \ft(\lll) \ge C_0(\lambda_1^2 + \lambda_2^2 + \lambda_3^2)^{p/2} - C_1,
\end{gathered}
\end{equation}
where $p \in (1,2]$. 

The main motivation to consider a $p$-th power lower bound for the energy density for large strains rather than the quadratic growth is to include a broader range of materials. For example, Agostiniani et al.~\cite{bib-ago-dalmaso-desimone} showed that the growth condition in~\eqref{eq031} with $p=3/2$ is satisfied by both the {\it neo-Hookean} compressible model and {\it Mooney-Rivlin} compressible model, whereas these models do not satisfy quadratic growth for large matrices~(i.e.~\eqref{eq031} with $p=2$).

To define an elastic energy in the reduced two-dimensional setting (i.e. for a map $u : \R^2 \to \R^3$), we need to define a stored energy density as a function of $\grad u$, i.e. for $3\times 2$ matrices. 
One way to do this is to optimize the missing third component. 
We set
\begin{equation}\label{eq002}
 \w(M) := \min_{\xi \in \R^3} \wt(M|\xi),
\end{equation}
where $M \in \R^{3\times 2}$ and $M|\xi$ denotes a $3 \times 3$ matrix with first two columns identical with $M$ and $\xi$ as the third column. It turns out that if we write
\begin{gather}\nonumber
 \wt(F) = g(I_1,I_2,J),\\
 J := \det (F), \ \ C := F^TF,\ \  I_1 := J^{-2/3}\, \mathrm{tr}(C),\ \  I_2 := \frac{J^{-4/3}}{2}\left( (\mathrm{tr}(C))^2 - \mathrm{tr}(C^2) \right),\nonumber
\end{gather}
and if  
\begin{gather}\label{eq049}
 \frac{\partial g}{\partial I_1}(I_1,I_2,J) \ge 0, \quad \frac{\partial g}{\partial I_2}(I_1,I_2,J) \ge 0, \quad \frac{\partial g}{\partial I_1}(I_1,I_2,J) + \frac{\partial g}{\partial I_2}(I_1,I_2,J) > 0,
\end{gather}
then the $\xi$ that achieves the minimum in~\eqref{eq002} has to satisfy $\xi \perp M$.
For completeness, we give a proof of this fact in the Appendix (see Lemma~\ref{lm5}).


Isotropy of the energy density implies that $\w(M)$ is a symmetric function of eigenvalues of $\sqrt{M^TM}$:
\begin{align}\label{w=f}
 \w(M) &= f(\lambda_1,\lambda_2).
\end{align}
It is easy to see that the function $f$ is related to $\ft$. Let 
\begin{equation}\nonumber
 \bo(\ll) := \argmin_{t > 0} \ft(\ll,t).
\end{equation}
Then we immediately obtain
\begin{equation}\nonumber
 \w(M) = f(\ll) = \ft(\ll,\bo(\ll)).
\end{equation}
Moreover, the function $f$ inherits properties of $\ft$. Indeed, \eqref{eq031} implies that
\begin{equation}\label{eq032}
\begin{gathered}
 f\in C^2( [0,\infty)^2), \quad f(1,1)=0, \quad f \ge 0, \quad 0 < D^2 f \le C, \\
 f(\ll) \ge C_0(\lambda_1^2 + \lambda_2^2)^{p/2} - C_1.
\end{gathered} 
\end{equation}
In contrast with three dimensions, the case $p=2$ in~\eqref{eq032} is not very restrictive in two dimensions. For example, two-dimensional energy densities obtained from incompressible three-dimensional models have often quadratic growth at infinity, and so they satisfy quadratic lower bound for large strains. We will prove our main results assuming $p=2$ in two dimensions and $1 < p \le 2$ in three dimensions.

For a given $\lambda_1 > 1$ we define $w(\lambda_1)$ as the point of minimum for the function $f(\lambda_1,\cdot)$, i.e.
\begin{equation}\label{eqw}
 f(\lambda_1,w(\lambda_1)) = \min_{t > 0} f(\lambda_1,t);
\end{equation}
we call $w(\lambda_1)$ the {\it natural width} of the strip with first principal strain $\lambda_1$. We assume that for $\lambda_1 > 1$ 
\begin{equation}
\o(\lambda_1) \textrm{ is a differentiable and non-increasing function} \label{eq027}.
\end{equation}

We also assume that for $\lambda_1 > 1$ and $\lambda_2 > \o(\lambda_1)$ the following conditions hold:
\begin{gather}
 \left(\partial_1f(\ll) - \partial_2f(\ll)\right)\left(\lambda_1 - \lambda_2\right) \ge 0, \label{eq034}\\
 \partial_{11} f(\lambda_1,\o(\lambda_1)) + \partial_{12} f(\lambda_1,\o(\lambda_1))\o'(\lambda_1) > 0, \label{eq035} \\
 \partial_{12}f(\ll) \ge 0. \label{eq038}
\end{gather}
The meaning of these relations will become apparent in a moment. Briefly stated we use them to show convexity of the relaxed energy (see Section~\ref{sec-relax}). The strict inequality in~\eqref{eq035} is not a typo --- it is associated with strict convexity of the (relaxed) energy density in $2D$ in the tensile direction. 

Finally, we assume for $\lambda_1 > 1$ that
\begin{equation}
 \label{eq18}
\det D^2f(\lambda_1,\o(\lambda_1)) \cdot \left(\lambda_1 - \o(\lambda_1)\right) >  \ \partial_{12}f(\lambda_1,\o(\lambda_1)) \cdot \partial_1f(\lambda_1,\o(\lambda_1)). 
\end{equation}
Unlike~(\ref{eq034}-\ref{eq038}), this inequality does not seem to have a simple interpretation; however it is satisfied by typical choices of $f$ (e.g. the one associated with an incompressible neoHookean $3D$ model). Condition~\eqref{eq18} will be used in our analysis of the relaxed problem (Lemma \ref{lm2.2} and equation~\eqref{eqgrowth}).

\section{The relaxed problem}\label{sec-relax}

In this section we study relaxed problem and the properties of its minimizer. 

We define the relaxed energy density $\wr(M)$ as the quasiconvexification of $\w(M)$ (see Pipkin~\cite{bib-pipkin2} for more details) and the relaxed functional as
\begin{equation}\label{eq4}
 E_0(u) := \int_\Omega \wr(\grad u) \dxy + B(u),
\end{equation}
where $B(u)$ was defined in~\eqref{eq003}.

It will be crucial to our analysis that $W_r(M)$ is a convex function of the $3 \times 2$ matrix $M$. Pipkin proved in~\cite{bib-pipkin1} that this is true whenever the unrelaxed density $W(M)$ is a convex function of $M^TM$. This is true for a broad range of material models. We shall assume throughout this paper that
\[\wr(M) \textrm{ is a convex function of }M.\]
As with $\wt$ and $\w$, it is convenient to represent the relaxed density $\wr$ 
as a function of the principal strains: 
\begin{equation}\nonumber
 \wr(M) = f_r(\lambda_1,\lambda_2).
\end{equation}

We would like to write down $f_r$ explicitly. To do that we follow the idea of Pipkin~\cite{bib-pipkin2}. Using the natural width $\o(\lambda)$ defined by~\eqref{eqw}, we define 
\begin{gather}\label{eq033}
 f_m(\ll) = 
 \begin{cases}
 f(\ll) & \lambda_1 \ge \o(\lambda_2) \textrm{~ and ~} \lambda_2 \ge \o(\lambda_1), \\
 f(\lambda_1,\o(\lambda_1)) & \lambda_2 < \o(\lambda_1) \textrm{~ and ~} \lambda_1 > 1, \\
 f(\lambda_2,\o(\lambda_2)) & \lambda_1 < \o(\lambda_2) \textrm{~ and ~} \lambda_2 > 1, \\
 0 & \lambda_1 \le 1 \textrm{~ and ~} \lambda_2 \le 1,
 \end{cases} 
\end{gather}
and $W_m(M) := f_m(\lambda_1,\lambda_2)$. Pipkin showed in \cite{bib-pipkin2} that
$$W_r(M) \le W_m(M).$$
We will show in a moment that under our hypotheses $W_m(M)$ is convex, in particular $W_m(M) \le W_r(M)$, from which it follows immediately that $W_m(M) = W_r(M)$. 

We want to show that $W_m$ is convex. Pipkin~\cite{bib-pipkin2} showed that this is equivalent to showing that the function $f_m$ is convex and monotone in both variables:
\begin{equation}\label{c1} 
 \grad^2 f_m \ge 0, \qquad \partial_\alpha f_m \ge 0, \quad \alpha = 1,2
\end{equation}
and satisfies the ordered force condition
\begin{equation}\label{c2} 
 \left(\partial_1f_m(\ll) - \partial_2f_m(\ll)\right)\left(\lambda_1 - \lambda_2\right) \ge 0.
\end{equation}
(When $\wt$ is the energy density of the incompressible neo-Hookean material, $f_m$ takes a particularly simple form, studied in~\cite{bib-pipkin2}). 

It is natural to express~\eqref{c1} and~\eqref{c2} using some conditions on $f$. First, in the region where $\lambda_1 \ge \o(\lambda_2), \lambda_2 \ge \o(\lambda_1)$ condition~\eqref{c1} follows from~\eqref{eq032}, whereas the latter condition~\eqref{c2} is equivalent to~\eqref{eq034}.

In the second case of~\eqref{eq033}, we see from~\eqref{eq032} that
\begin{equation}\nonumber
 \partial_1 f_m(\ll) = \partial_1 f(\lambda_1,\o(\lambda_1)) + \underbrace{\partial_2 f(\lambda_1,\o(\lambda_1))}_0 \o'(\lambda_1)) = \partial_1 f(\lambda_1,\o(\lambda_1)) \ge 0,
\end{equation}
and obviously $\partial_2 f_m(\ll) = 0$; therefore~\eqref{c2} is satisfied in this case. By~\eqref{eq035} we see that 
\begin{equation}\label{eq035a}
 \partial_{11} f_m(\ll) = \partial_{11} f(\lambda_1,\o(\lambda_1)) + \partial_{12} f(\lambda_1,\o(\lambda_1))\o'(\lambda) > 0,
\end{equation}
and by definition of $f_m$ also $\partial_{12}f_r(\ll) = \partial_{22}f_r(\ll) = 0$. 
Since $f_m$ is symmetric, the third case follows immediately. 

Finally, in the last case $\lambda_1 \le 1, \lambda_2 \le 1,$ both~\eqref{c1} and~\eqref{c2} are trivially satisfied. We have shown that if $f$ satisfies~\eqref{eq032}, \eqref{eq034}, and~\eqref{eq035}, then $f_m$ satisfies~\eqref{eq033}, in particular $W_m$ is convex. Therefore $f_r = f_m$ and $f_r$ has the form~\eqref{eq033}.

\subsection{The one-dimensional variational problem}

We want to find a minimizer of the relaxed energy~\eqref{eq4}. Assuming it is radially symmetric, we formulate a one-dimensional variational problem, which admits a unique minimizer~$v$. Afterward we show some properties of $v$.


To look for a radially symmetric minimizer, we consider 
\begin{equation}\label{eq3}
 u_0(r,\theta) = (v(r),\theta)
\end{equation}
in polar coordinates. Then~\eqref{eq4} becomes the one-dimensional variational problem
\begin{equation}\label{eq13}
 \min_{v\in W^{1,p}(R_{in},R_{out})} \int_{R_{in}}^{R_{out}} r\cdot f_r(v'(r), v(r)/r) \ud r + R_{in}T_{in}v(R_{in}) - R_{out}T_{out}v(R_{out}).
\end{equation}

The function $f_r(\lambda_1,\lambda_2)$ is defined for $\lambda_1 > 0$ and $\lambda_2 > 0$. Since we do not assume a priori that $v' \ge 0$ or $v > 0$ a.e., we need to extend the domain of $f_r$. It is convenient to do it in the following way: 
\begin{equation}\label{eq041}
 f_r(\lambda_1,\lambda_2) := \begin{cases} f_r(\lambda_1,\o(\lambda_1)) & \textrm{ if } \lambda_1 > 0, \lambda_2 \le 0,\\
 f_r(|\lambda_1|,\lambda_2) & \textrm{ if } \lambda_1 < 0. \end{cases}
\end{equation}

Under our assumptions both $\wr$ and $f_r$ are convex functions, and so we expect this variational problem to be solvable using direct methods of Calculus of Variations provided $T_{in}R_{in} < T_{out}R_{out}$. 

\begin{remark}\label{rmk-nonrelaxed}
If $T_{in}R_{in} < T_{out}R_{out}$, we claim that any minimizer $v$ of~\eqref{eq13} has tensile hoop stress somewhere in $(R_{in},R_{out})$. Indeed, if we denote by $\sigma_r$ and $\sigma_\theta$ radial and hoop stress, respectively, the optimality condition reads $(r\sigma_r)' = \sigma_\theta$, $\sigma_r(R_{in})=T_{in}, \sigma_r(R_{out})=T_{out}$. Integrating the equation gives
\[ \int_{R_{in}}^{R_{out}} \sigma_\theta(r) \dr = T_{out}R_{out} - T_{in}R_{in} > 0,\]
and so $\sigma_\theta(r) > 0$ for some $r \in (R_{in},R_{out})$.
\end{remark}

\begin{remark}\label{rmk-rel}
In the case $T_{in}R_{in}=T_{out}R_{out}$, the Euler-Lagrange equation implies that the hoop stress is identically zero, and a minimizer of~\eqref{eq13} is unique only up to an additive constant (i.e. only $v'$ is uniquely determined). 

If $T_{in}R_{in} > T_{out}R_{out}$, it is easy to see that the energy in~\eqref{eq13} is not bounded from below and so the minimization problem has no solution.
\end{remark}

\begin{remark}\label{rmk-f}
It is important to understand what are the consequences of extension~\eqref{eq041}. We observe that $f_r$ being even in $\lambda_1$ means that $\partial_1 f_r(\lambda_1,\lambda_2) \le 0$ for $\lambda_1 \le 0$. Therefore, 
\[\partial_1 f_r(\lambda_1,\lambda_2) > 0\]
implies
\[ \lambda_1 > 0.\]
Also, for any $\lambda_2 < 0$, the hoop stress
\[ \sigma_\theta = \partial_2 f_r(\lambda_1,\lambda_2) = 0\]
no matter how large $|\lambda_2|$ is.  
\end{remark}

From now on we will always assume that 
\begin{equation}\label{eq030}
 T_{in}R_{in} < T_{out}R_{out}.
\end{equation}
We claim that under this condition there exists a unique solution $v$ to the variational problem~\eqref{eq13}.

\begin{theorem}\label{thm-rel}
Let $f$ satisfy {\rm (\ref{eq032}--\ref{eq18})} and  
%
let $0 < R_{in} < R_{out}$ and $T_{out}$ be fixed. Then there exists a range of inner-boundary loads $T_{in}$, a subset of $(T_{out},T_{out}R_{out}/R_{in})$, 
such that minimizer $v$ of~\eqref{eq13} exists, it is unique, and the following holds:
\begin{itemize}
\item There exists $L \in (R_{in},R_{out})$ such that 
\begin{equation*}
 \frac{v(r)}{r} < \o(v'(r))\quad r \in (R_{in},L),\qquad \frac{v(r)}{r} \ge \o( v'(r))\quad r \in  (L,R_{out}),
\end{equation*}
i.e. there is tensile hoop stress in $(L,R_{out})$ whereas in $(R_{in},L)$ there is compression in the hoop direction. We will call the region with a tensile hoop stress a {\rm non-relaxed} region and its complement a {\rm relaxed} region. 
\item The deformation $v$ avoids interpenetration, i.e.
\begin{equation}\nonumber 
 v(r) > 0, v'(r) >  0 \textrm{ for } r \in (R_{in},R_{out}).
\end{equation}
\item Consider the function $h(r) = \o( v'(r)) - \frac{v(r)}{r}$, representing the amount of arclength we need to waste in the relaxed region $(R_{in},L)$. Then 
\begin{equation}\label{eq-th2}
 h'(L) < 0
\end{equation}
(and obviously $h(L)=0$.)
\end{itemize}
\end{theorem}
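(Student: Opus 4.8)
The plan is to establish the theorem by a careful study of the one-dimensional Euler--Lagrange equation, treating $T_{in}$ as a parameter near the degenerate endpoint $T_{in}R_{in} = T_{out}R_{out}$, where the relaxed region occupies the whole annulus and the solution $v$ is purely radial with zero hoop stress. At that endpoint (cf.\ Remark~\ref{rmk-rel}) the minimizer satisfies $\partial_1 f(v',0) = \sigma_r$ with $r\sigma_r \equiv T_{out}R_{out}$, so $v'$ is determined pointwise by inverting $\lambda_1 \mapsto \partial_1 f(\lambda_1, \o(\lambda_1))$ (this map is increasing by~\eqref{eq035a}), and $v' > 1$ everywhere since the stress is tensile. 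Existence and uniqueness of $v$ for $T_{in}$ in a neighborhood of this endpoint follows from the direct method together with convexity of $\wr$ (hence of $f_r$): coercivity comes from the $p$-growth in~\eqref{eq032}, and strict convexity in the tensile direction (the strict inequality in~\eqref{eq035}) plus the extension~\eqref{eq041} give uniqueness. Here I would invoke that $T_{in}R_{in} < T_{out}R_{out}$ forces a tensile hoop stress somewhere (Remark~\ref{rmk-nonrelaxed}), so the non-relaxed region $(L,R_{out})$ is nonempty.

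Next I would localize $L$: define $L$ as the supremum of radii at which $v(r)/r < \o(v'(r))$. On $(R_{in},L)$ the density seen by the Euler--Lagrange equation is $f_r(\lambda_1,\lambda_2) = f(\lambda_1,\o(\lambda_1))$, which is independent of $\lambda_2$; hence $\sigma_\theta = \partial_2 f_r = 0$ there, and the radial equation becomes $(r\sigma_r)' = 0$, i.e.\ $r\,\partial_1 f(v',\o(v')) = \text{const} = T_{in}R_{in}$ on $(R_{in},L)$, while $\sigma_r(R_{in}) = T_{in}$. On $(L,R_{out})$ we are in the genuinely two-dimensional regime $f_r = f$, with the full system $(r\sigma_r)' = \sigma_\theta$, $\sigma_r = \partial_1 f$, $\sigma_\theta = \partial_2 f$, and $\sigma_r(R_{out}) = T_{out}$. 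Matching at $L$ requires continuity of $v$, $v/r = \o(v')$ at $L^-=L^+$, and continuity of $\sigma_r$ (the natural transmission condition). A continuity/perturbation argument in $T_{in}$ then shows that for $T_{in}$ close enough to $T_{out}R_{out}/R_{in}$ one has $L$ close to $R_{out}$, so $L \in (R_{in},R_{out})$ as claimed, and the characterization of the two regions holds by construction of $L$ and the definition of $f_r$.

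For non-interpenetration, on $(R_{in},L)$ the relation $r\,\partial_1 f(v',\o(v')) = T_{in}R_{in} > 0$ together with Remark~\ref{rmk-f} (which says $\partial_1 f_r > 0 \Rightarrow \lambda_1 > 0$) gives $v' > 0$; in fact $\partial_1 f(v',\o(v'))$ is bounded below by a positive constant there, and monotonicity of $\lambda_1\mapsto \partial_1 f(\lambda_1,\o(\lambda_1))$ gives $v' \ge c_0 > 0$, indeed $v' > 1$ since the radial stress is tensile. On $(L,R_{out})$ the hoop stress is tensile, so $v/r > 1 \ge$ something positive, while $v' > 0$ follows from $\sigma_r > 0$ and~\eqref{eq049}-type monotonicity of $f$ in $\lambda_1$; integrating $v' > 0$ from the already-established positivity on $(R_{in},L)$ propagates $v > 0$ across the whole interval.

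The main obstacle I anticipate is the strict inequality $h'(L) < 0$ in~\eqref{eq-th2}, since $h(r) = \o(v'(r)) - v(r)/r$ vanishes at $L$ and we need to rule out tangential contact. I would compute $h'(L)$ using the one-sided derivatives of $v'$ at $L$ coming from the two ODE regimes. Differentiating $r\,\partial_1 f(v',\o(v')) = T_{in}R_{in}$ on $(R_{in},L)$ gives $(v')'$ on the relaxed side in terms of $\partial_{11}f + \partial_{12}f\,\o'$, which is positive by~\eqref{eq035}; combined with $h = \o(v') - v/r$ and $\o' \le 0$ (assumption~\eqref{eq027}) one controls $h'(L^-)$. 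On the non-relaxed side one uses the linearized system at $L$, where $\sigma_\theta = 0$ (continuity from the relaxed region), so $v/r = \o(v')$ and the Jacobian of $(\sigma_r,\sigma_\theta)$ is $D^2 f(v',\o(v'))$, whose positive-definiteness (from~\eqref{eq032}) lets one solve for $(v',v/r)$ derivatives; the key nondegeneracy that produces the \emph{strict} sign is precisely condition~\eqref{eq18}, $\det D^2 f\cdot(\lambda_1 - \o(\lambda_1)) > \partial_{12}f\cdot\partial_1 f$ evaluated at $\lambda_1 = v'(L)$, which is exactly what is needed to show $h'(L^-) < 0$ (and $h'(L^+) < 0$, so $h < 0$ just inside the relaxed region, consistent with $L$ being the boundary). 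This is where all the structural hypotheses~\eqref{eq027}, \eqref{eq035}, and~\eqref{eq18} are consumed; verifying the sign after the linear algebra, while elementary, is the delicate point, and keeping track of which one-sided limit is being used (and that $v'$ is $C^1$ up to $L$ from each side by the implicit function theorem applied to the respective ODE systems) is the part that needs care.
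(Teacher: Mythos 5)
Your blueprint follows the paper's quite closely (existence and uniqueness via direct methods and convexity, localization of $L$ via the zero-hoop-stress structure in the relaxed region, and the linear-algebra computation at $L$ turning condition~\eqref{eq18} into $h'(L)<0$). The treatment of $h'(L)<0$ in particular is essentially identical to the paper's; that part is fine. But there are two real gaps earlier on.

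\textbf{The non-interpenetration claim $v(R_{in})>0$ is not established.} You derive $v'>1$ on the relaxed side and $v>0$ on the non-relaxed side, and then say ``integrating $v'>0$ from the already-established positivity on $(R_{in},L)$ propagates $v>0$.'' But positivity of $v$ on $(R_{in},L)$ was never established, only positivity of $v'$. Since $v$ is \emph{increasing} in $r$, knowing $v(L)>0$ and $v'>1$ propagates positivity \emph{outward} (toward $R_{out}$), not inward; as $r$ decreases from $L$ to $R_{in}$, the value $v(r)=v(L)-\int_r^L v'$ shrinks and could well become negative, and indeed the extension~\eqref{eq041} (which declares $\sigma_\theta=0$ for all $\lambda_2\le 0$) is precisely there so that the variational problem remains meaningful even when $v$ goes negative. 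The paper's Lemma~\ref{lm2.1} is devoted exactly to closing this gap: it shows $v(R_{in};\cdot)$ is continuous on $\overline{\mathcal{I}}$, equals $\kappa R_{in}>\o(V_{max})R_{in}$ at $T=T_{out}$, and is $<\o(V_{max})R_{in}$ at $T=T_{out}R_{out}/R_{in}$, and then extracts a range of $T$ where $0<v(R_{in};T)<\o(V_{max})R_{in}$ by the intermediate value theorem. Your strategy of anchoring near the \emph{right} endpoint makes this worse, not better: that is where the relaxed interval $(R_{in},L)$ is longest and $v(R_{in})$ is most at risk of being non-positive. Without some IVT-type argument you have no control over the sign of $v(R_{in})$.

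\textbf{The relaxed region must be shown to be a single interval of the form $(R_{in},L)$.} You define $L$ as the supremum of radii at which $v(r)/r<\o(v'(r))$ and then treat $(R_{in},L)$ as entirely relaxed. This conclusion is not automatic: a priori the relaxed set $\{r: v(r)/r\le\o(v'(r))\}$ could be a union of intervals, or an interval not touching $R_{in}$. The paper's Lemma~\ref{lm2.2} rules this out by showing that at a hypothetical interior left endpoint $A>R_{in}$ of a relaxed interval, $\sigma_\theta'(A)>0$ (using~\eqref{eq18}), so the hoop stress cannot be vanishing immediately to the right of $A$ while positive to the left. You should incorporate this step; the localization of $L$ is not purely a definition but requires a short argument of the same flavor as the $h'(L)<0$ computation.
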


\begin{remark*}
 Condition~\eqref{eq-th2} means that the excess arclength associated with wrinkling at radius $r$ grows linearly as a function of the distance from $L$; we will see later that this introduces some difficulties in the upper bound.
\end{remark*}
\begin{remark*}
 It is easy to show that the relaxed energy associated with an incompressible neo-Hookean material $W(\grad u) = f(\lambda_1,\lambda_2) = C(\lambda_1^2 + \lambda_2^2 + \lambda_1^{-2}\lambda_2^{-2} - 3)$ satisfies all assumptions of Theorem~\ref{thm-rel}. In the case of a material with a linear stress-strain law, the conclusions of Theorem~\ref{thm-rel} are much more easily seen in the geometrically linear setting by explicitly writing down the solution $v$ (see~\cite{bib-benny1}).
\end{remark*}

The proof of Theorem~\ref{thm-rel} consists of several steps. First, we prove the existence of a solution $v$ for~\eqref{eq13} (Lemma~\ref{rel_existence}). Next, we show some elementary properties of $v$ (Lemma~\ref{lm-pos}), which will allow us to show the uniqueness of $v$ (Lemma~\ref{rel_uniqueness}). 
Afterward we prove the remaining properties of $v$. This consists of the following steps:
\begin{itemize}
 \item we show that any relaxed interval has to start at $R_{in}$ (Lemma~\ref{lm2.2});
 \item we show $v(R_{in};T) > 0$ and $v(R_{in};T) < \o(v'(R_{in};T))R_{in}$ for some loads $T$ (Lemma~\ref{lm2.1});
 \item we prove that $(R_{in},R_{out})$ splits into a relaxed and a non-relaxed interval, both of them non-empty;
 \item we show~\eqref{eq-th2}.
\end{itemize}

\begin{lemma}\label{rel_existence}
 Under the assumptions of Theorem~\ref{thm-rel}, there exists a minimizer $v$ of~\eqref{eq13}. 
\end{lemma}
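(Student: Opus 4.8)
The plan is to run the direct method of the calculus of variations. Write $I[v]$ for the functional in~\eqref{eq13}. First, $I$ is proper: taking $v(r)=r$ gives $v'=v/r=1$, so the integrand vanishes (since $f_r(1,1)=f(1,1)=0$) and $I[v]=R_{in}^2 T_{in}-R_{out}^2 T_{out}<\infty$. Let $(v_n)$ be a minimizing sequence with $I[v_n]\le M_0$. The heart of the matter is an a priori bound for $(v_n)$ in $W^{1,p}(R_{in},R_{out})$; once this is in hand, along a subsequence $v_n\rightharpoonup v$ weakly in $W^{1,p}$, and the compact embedding $W^{1,p}(R_{in},R_{out})\hookrightarrow\hookrightarrow C([R_{in},R_{out}])$ gives $v_n\to v$ uniformly. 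Hence the boundary terms $R_{in}T_{in}v_n(R_{in})-R_{out}T_{out}v_n(R_{out})$ converge, $v_n/r\to v/r$ strongly in $L^p$, and $v_n'\rightharpoonup v'$ weakly in $L^p$. The integrand $(r,s,\xi)\mapsto r\,f_r(\xi,s/r)$ is continuous, bounded below by $c_0 r|\xi|^p-C$, and convex in $\xi$ (convexity of $f_r$ in its first argument follows from convexity of $W_r$ restricted to ``diagonal'' $3\times2$ matrices, together with~\eqref{eq035} in the region where the extension~\eqref{eq041} is in force); thus the classical Tonelli--Serrin weak lower semicontinuity theorem applies and $I[v]\le\liminf_n I[v_n]=\inf I$, so $v$ is a minimizer.

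It remains to establish the coercivity bound. I would first record the growth inherited by $f_r$. Since $\o$ is non-increasing and $\o(\lambda_1)\le\o(1)=1$ for $\lambda_1\ge1$, the piecewise formula~\eqref{eq033} together with~\eqref{eq032} yields
\[
 f_r(\lambda_1,\lambda_2)\ge c_0|\lambda_1|^p-C_1\ \text{ for all }\lambda_1,\lambda_2,\qquad f_r(\lambda_1,\lambda_2)\ge c_0\lambda_2^{\,p}-C_1\ \text{ for }\lambda_2\ge0,
\]
the second bound because a large positive hoop strain forces one into the non-relaxed regime (first or third case of~\eqref{eq033}), where $f_m=f$ has $p$-growth; for $\lambda_2\le0$ the extension~\eqref{eq041} is not coercive in $\lambda_2$, but this will be harmless. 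Writing the boundary terms as $-(R_{out}T_{out}-R_{in}T_{in})v_n(R_{in})-R_{out}T_{out}\int_{R_{in}}^{R_{out}}v_n'$ and invoking~\eqref{eq030}, the coefficient of $v_n(R_{in})$ is negative, so the danger is $v_n(R_{in})\to+\infty$. I would argue by a dichotomy. If $v_n(R_{in})\le2\|v_n'\|_{L^1}$, then from $\int_{R_{in}}^{R_{out}}r|v_n'|^p\ge R_{in}\|v_n'\|_{L^p}^p$ and $\|v_n'\|_{L^1}\le C\|v_n'\|_{L^p}$ the estimate $M_0\ge I[v_n]\ge c_0R_{in}\|v_n'\|_{L^p}^p-C-C\|v_n'\|_{L^p}$ and Young's inequality (using $p>1$) bound $\|v_n'\|_{L^p}$, hence also $v_n(R_{in})$. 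If instead $v_n(R_{in})>2\|v_n'\|_{L^1}$, then $v_n(r)\ge v_n(R_{in})-\|v_n'\|_{L^1}\ge\tfrac12 v_n(R_{in})>0$ for every $r$, so $v_n(r)/r\ge v_n(R_{in})/(2R_{out})$ and the second growth bound gives $\int_{R_{in}}^{R_{out}}r\,f_r(v_n',v_n/r)\ge c\,v_n(R_{in})^p-C$; combining with the first growth bound and again using Young to absorb $-(R_{out}T_{out}-R_{in}T_{in})v_n(R_{in})$ bounds both $v_n(R_{in})$ and $\|v_n'\|_{L^p}$. In either case $v_n(R_{in})$ is bounded (also from below, since a large negative value makes $-(R_{out}T_{out}-R_{in}T_{in})v_n(R_{in})$ a large positive contribution that cannot be cancelled by the other terms, contradicting $I[v_n]\le M_0$), and $\|v_n'\|_{L^p}$ is bounded; with $\|v_n\|_\infty\le|v_n(R_{in})|+\|v_n'\|_{L^1}$ this gives the $W^{1,p}$ bound.

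The main obstacle is precisely this coercivity step. The difficulty is that the work done by the loads is \emph{linear} in $v$ with an unfavorable sign as $v(R_{in})\to+\infty$, while $f_r$ is flat in the hoop strain throughout the relaxed region, so one cannot simply absorb the boundary terms into the stored energy density. The resolution exploits two structural facts: the admissibility condition~\eqref{eq030}, $T_{in}R_{in}<T_{out}R_{out}$, and the observation that a large positive vertical shift of $v$ moves the entire annulus into the non-relaxed regime, where superlinear ($p>1$) growth in the hoop strain is available. One must also be mildly careful because the extension~\eqref{eq041} to negative strains is not coercive, but this is immaterial since negative $v(R_{in})$ is heavily penalized by the boundary term. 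Everything else---lower semicontinuity and passage to the limit in the boundary terms---is routine direct-method machinery, and the finer properties $v>0$, $v'>0$, etc.\ are deferred to Lemma~\ref{lm-pos}.
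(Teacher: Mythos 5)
Your proof is correct, but it takes a genuinely different route to coercivity than the paper does. The paper introduces the linear interpolant $\varphi(r)=\frac{R_{out}-r}{R_{out}-R_{in}}R_{in}T_{in}+\frac{r-R_{in}}{R_{out}-R_{in}}R_{out}T_{out}$ and rewrites the boundary term as $-\int_{R_{in}}^{R_{out}}(\varphi v)'\,\dr = -\int_{R_{in}}^{R_{out}}r\bigl(\varphi'\tfrac{v}{r}+\tfrac{\varphi}{r}v'\bigr)\dr$, a bulk integral whose coefficients $\varphi/r$ and $\varphi'$ are bounded and whose sign (namely $\varphi'>0$, which is precisely~\eqref{eq030}) is favorable. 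Combined with the bound $f_r(\lambda_1,\lambda_2)\ge c\bigl(|\lambda_1|^p+(\lambda_2)_+^p\bigr)-C$, Young's inequality then gives coercivity in both $v'$ and $v/r$ in a single stroke, with no case distinction. You instead decompose $B(v)=-(R_{out}T_{out}-R_{in}T_{in})v(R_{in})-R_{out}T_{out}\int v'$, which is linear in the unfavorable quantity $v(R_{in})$, and run a dichotomy on $v(R_{in})$ versus $\|v'\|_{L^1}$: when $v(R_{in})\le 2\|v'\|_{L^1}$ you only need the $|\lambda_1|^p$ growth, and when $v(R_{in})>2\|v'\|_{L^1}$ the whole profile is pushed into the regime $v(r)/r\gtrsim v(R_{in})$, where the $(\lambda_2)_+^p$ growth kicks in. Both approaches are sound; the paper's trick is slicker because the coercivity drops out immediately from the pointwise lower bound on the effective integrand, whereas your argument requires the two-case analysis and a separate remark to rule out $v(R_{in})\to-\infty$. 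The subsequent direct-method steps (boundedness in $W^{1,p}\cap L^\infty$, weak convergence of $v_n'$, convexity of $f_r$ in its first argument after the extension~\eqref{eq041}, and lower semicontinuity) are handled the same way in substance, though the paper leaves them tacit as ``direct methods of Calculus of Variations'' for a convex problem while you invoke Tonelli--Serrin explicitly.
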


\begin{proof}
We start by rewriting~\eqref{eq13} in the form
\begin{equation*}
 \min_{v\in W^{1,p}(R_{in},R_{out})} \int_{R_{in}}^{R_{out}} r\cdot f_r( v'(r), v(r)/r) - (\varphi(r) v(r))' \ud r,
\end{equation*}
where $\varphi(r)\! =\! \frac{R_{out}-r}{R_{out}-R_{in}} R_{in}T_{in} + \frac{r-R_{in}}{R_{out}-R_{in}} R_{out}T_{out}$ is the linear interpolation between $R_{in}T_{in}$ and $R_{out}T_{out}$.
We observe that $(\varphi v)' = r\left(\varphi' \frac{v}{r} + \frac{\varphi}{r} v'\right)$.
Since $f=f_r$ for large strains by~\eqref{eq033} and $f$ has p-th power growth by~\eqref{eq032}, $f_r$ has also p-th power growth for large strains. Then the previous integral is bounded from below by 
\begin{equation*}
C + \int_{R_{in}}^{R_{out}} \left( |v'(r)|^p - \frac{\varphi(r)}{r} v'(r) + \left(\frac{v(r)}{r}\right)_+^p - \varphi'(r)\frac{v(r)}{r}\right) r \dr.
\end{equation*}
The assumption~\eqref{eq030} is equivalent to $\varphi'(r) > 0$. Hence it is clear that the energy is bounded from below and that 
\begin{equation}\nonumber
 \textrm{ any minimizing sequence is bounded in } W^{1,p} \cap L^\infty(R_{in},R_{out}).
\end{equation}
Therefore we can use direct methods of Calculus of Variations to obtain a minimizer $v$ for this convex problem. 
\end{proof}

In the following, we keep $T_{out}$ fixed and treat $T_{in}$ as a parameter, and write 
$v(r;T)$ for the minimizer of~\eqref{eq13} with $T_{in}=T$.
We also define an interval \[\mathcal{I}:= (T_{out},T_{out}R_{out}/R_{in}).\]

Now we prove a bound on $v'$, which will be useful afterward in showing the uniqueness and some properties of a minimizer $v$:

\begin{lemma}\label{lm-pos}
 Under the assumptions of Theorem~\ref{thm-rel} there exist constants $1 < V_{min} < V_{max}$, such that for any $T \in \mathcal{I}$ the minimizer $v(\cdot;T)$ satisfies 
 \begin{equation}\nonumber
  V_{min} \le v'(r;T) \le V_{max} \textrm{~ for ~} r \in (R_{in},R_{out}).
 \end{equation}
\end{lemma}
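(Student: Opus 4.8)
The plan is to work with the Euler--Lagrange equation of the convex one-dimensional problem~\eqref{eq13}. Writing $\lambda_1(r) = v'(r)$, $\lambda_2(r) = v(r)/r$ and introducing the radial and hoop stresses
\[
 \sigma_r(r) := \partial_1 f_r(\lambda_1(r),\lambda_2(r)), \qquad \sigma_\theta(r) := \partial_2 f_r(\lambda_1(r),\lambda_2(r)),
\]
the first variation gives $(r\sigma_r)' = \sigma_\theta$ on $(R_{in},R_{out})$ with natural boundary conditions $\sigma_r(R_{in}) = T$ and $\sigma_r(R_{out}) = T_{out}$; here one uses that $f_r = f_m$ is convex and $C^1$ (the branches in~\eqref{eq033} match since $\partial_2 f(\lambda_1,\o(\lambda_1)) = 0$) and that minimizing sequences are bounded in $W^{1,p}\cap L^\infty$ (Lemma~\ref{rel_existence}), so variations are admissible. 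Two uniform facts follow at once. Since $f_m$ is non-decreasing in each variable (condition~\eqref{c1}), $\sigma_\theta \geq 0$, so $r\sigma_r$ is non-decreasing from $R_{in}T$ to $R_{out}T_{out}$; hence $0 < R_{in}T_{out}/R_{out} \leq \sigma_r(r) \leq R_{out}T_{out}/R_{in}$ for all $r$, uniformly in $T\in\mathcal I$ (using $T>T_{out}$). And re-running the coercivity bound from the proof of Lemma~\ref{rel_existence} against the fixed competitor $v^*(r)=r$, noting that $\varphi,\varphi'$ stay bounded on $\overline{\mathcal I}$, yields $\|v\|_{W^{1,p}}\leq C$ and hence $\|v\|_{L^\infty}\leq C$ uniformly, so $\lambda_2$ stays in a fixed compact set.

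For the upper bound I would introduce $g(\lambda):=\partial_1 f(\lambda,\o(\lambda))=\frac{d}{d\lambda}\min_t f(\lambda,t)$. Then $g(1)=0$ (as $f$ is minimized at $(1,1)$) and $g$ is strictly increasing on $(1,\infty)$ by~\eqref{eq035}; moreover $\partial_{12}f\geq 0$ (condition~\eqref{eq038}) gives $\sigma_r(r)\geq g(\lambda_1(r))$ for every $r$, with equality wherever $\lambda_2\leq\o(\lambda_1)$. Combined with $\sigma_r\leq R_{out}T_{out}/R_{in}$ this forces $\lambda_1(r)\leq g^{-1}(R_{out}T_{out}/R_{in})=:V_{max}$. (One could equally well invoke convexity of $f_r$ in $\lambda_1$ together with its $p$-growth and the uniform bound on $\lambda_2$.)

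The lower bound $\lambda_1\geq V_{min}>1$ is the core of the lemma, and the step I expect to be the real obstacle. Where $\lambda_2(r)\leq\o(\lambda_1(r))$ (in particular, in the ``relaxed'' part) one has $\sigma_r=g(\lambda_1)$, so $\sigma_r\geq R_{in}T_{out}/R_{out}>0=g(1)$ forces $\lambda_1\geq g^{-1}(R_{in}T_{out}/R_{out})>1$. For the remaining points I would exploit the ordered-force structure: since $\partial_1 f-\partial_2 f$ vanishes on the diagonal and (by~\eqref{c2}) has the sign of $\lambda_1-\lambda_2$, rewriting the Euler--Lagrange equation as $r\,\partial_{11}f\,v''=\sigma_\theta-\sigma_r-\partial_{12}f\,(\lambda_1-\lambda_2)$ and using $\partial_{11}f>0$, $\partial_{12}f\geq 0$ shows that $v''$ has the sign of $\lambda_2-\lambda_1$. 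Consequently an interior minimum of the continuous function $v'$ must occur where $\lambda_1=\lambda_2$, which (by the case analysis above, since such a point is then ``relaxed'' unless $\lambda_1=\lambda_2=1$, and the latter is excluded by $\sigma_r>0$) has $v'>1$; and a minimum at $R_{out}$ is excluded using $\sigma_r(R_{out})=T_{out}>0$ together with the sign information. The delicate case is a minimum at $r=R_{in}$: one must use the boundary condition $\sigma_r(R_{in})=T$, the monotonicity of $\sigma_r$ on the component of $\{\lambda_1\leq\lambda_2\}$ adjacent to $R_{in}$ (which, via $r\sigma_r\leq R_{out}T_{out}$, confines that component to $(R_{in},R_{out}T_{out}/T)$), and the uniform $L^\infty$ bound on $v$, to derive a contradiction. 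Finally, to turn the strict inequality $\lambda_1(r)>1$ into a bound $\lambda_1(r)\geq V_{min}>1$ uniform in $T\in\mathcal I$, I would argue by compactness: a sequence $T_n\in\mathcal I$ with $\inf_r\lambda_1^{(n)}\to 1$ would, by the uniform $C^{1-1/p}$ estimate on $v$ and continuity of $\lambda_1$, produce a limiting configuration violating the strict convexity $D^2 f>0$ and the strict inequalities~\eqref{eq035},~\eqref{eq18}.
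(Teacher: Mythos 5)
Your upper bound $V_{max}$ is obtained in essentially the same way as the paper (the function $g$ you introduce is the paper's $H(\lambda):=\partial_1 f(\lambda,\o(\lambda))$, and the inequality $\sigma_r \ge H(v')$ from~\eqref{eq038} plus $r\sigma_r$ non-decreasing is the same computation).

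The lower bound $V_{min}>1$, however, has a genuine gap, and you yourself flag it (``the real obstacle,'' ``the delicate case is a minimum at $r=R_{in}$\,\dots\ to derive a contradiction'' with no actual derivation given). Your strategy of locating interior minima of $v'$ via the sign of $v''$, then separately excluding minima at $R_{out}$ and $R_{in}$, never closes the $R_{in}$ case; and the concluding compactness argument for uniformity in $T$ is not carried out and is also unnecessary. The paper proceeds differently and more directly: it first proves the \emph{global} strict inequality $\sigma_r(r)>\sigma_\theta(r)$ on all of $(R_{in},R_{out})$ by a uniqueness argument for the second-order ODE satisfied by $v$ (if $\sigma_r(r_0)=\sigma_\theta(r_0)$ for some $r_0$ then $v$ would coincide with the linear solution $\kappa r$, contradicting the boundary data for $\sigma_r$). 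From $\sigma_r>\sigma_\theta$ and~\eqref{eq034} one gets $v'(r)>v(r)/r$ everywhere, and then~\eqref{eq038} gives
\begin{equation*}
 \partial_1 f_r\!\left(v'(r),v'(r)\right)\;\ge\;\partial_1 f_r\!\left(v'(r),v(r)/r\right)\;=\;\sigma_r(r)\;\ge\;\frac{R_{in}}{R_{out}}T_{out}>0,
\end{equation*}
so that $v'(r)\ge V_{min}$ where $V_{min}>1$ is determined by $f_r$ and the fixed number $R_{in}T_{out}/R_{out}$ alone. This gives the uniform-in-$T$ lower bound in one stroke, with no boundary case analysis and no compactness step. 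In short: you are missing the key lemma-within-the-lemma, namely that $\sigma_r>\sigma_\theta$ globally, which is what makes the lower bound both complete and automatically uniform.
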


\begin{proof}
 Let $T \in \mathcal{I}$ be fixed and let $v(r) := v(r;T)$. We write $\sigma_r$ and $\sigma_\theta$ for the corresponding radial and hoop stress, respectively:
\begin{equation}\label{eq020}
 \sigma_r = \partial_1 f_r(v', v/r),\qquad
 \sigma_\theta = \partial_2 f_r(v', v/r).
\end{equation} 
\begin{enumerate}
 \item
 Since $v$ is minimizer of~\eqref{eq13}, it satisfies Euler-Lagrange equation:
 \begin{equation}\label{eq021}
  (r\sigma_r(r))' = \sigma_\theta \ge 0,\quad \textrm{i.e.}\quad \sigma_r' = \frac{1}{r}\left( \sigma_\theta - \sigma_r \right),
 \end{equation}
 where the inequality $\sigma_\theta \ge 0$ follows from the definition of $f_r$. Therefore $r\sigma_r(r)$ is non-decreasing, and thus we obtain
 \begin{equation}\label{eq025}
  \partial_1f_r(v'(r),v(r)/r) = \sigma_r(r) \ge \frac{R_{in}}{r}\sigma_r(R_{in}) = \frac{R_{in}}{r}T_{in} \ge \frac{R_{in}}{R_{out}}T_{in} \ge \frac{R_{in}}{R_{out}}T_{out} > 0.
 \end{equation}
 We see from Remark~\ref{rmk-f} that
 \begin{equation}\nonumber
  v'(r) > 0  \textrm{~~for } r \in (R_{in},R_{out}).
 \end{equation}
 
 We define
 \begin{equation}\nonumber
  H(\lambda) := \partial_1 f(\lambda,\o(\lambda)) \quad \textrm{ for } \lambda > 1.
 \end{equation}
 This quantity represents force required to uniaxially stretch an elastic body to $\lambda$ times of its original length. It is natural to expect monotonicity of $H$. Under our assumptions this is true. Indeed, \eqref{eq035} implies
 \begin{equation}\nonumber
  H'(\lambda) = (\partial_1 f(\lambda,\o(\lambda)))' = \partial_{11} f(\lambda,\o(\lambda)) + \partial_{12} f(\lambda,\o(\lambda))\o'(\lambda) > 0
 \end{equation}
 and so 
 \begin{equation}\nonumber
  H \textrm{ is a strictly increasing function}.
 \end{equation}
 
Using~\eqref{eq038} and the monotonicity of $r\sigma_r(r)$, $H(v'(r))$ satisfies
\begin{equation}\nonumber
 H(v'(r)) \le \partial_1 f_r(v'(r),v(r)/r) = \sigma_r(r) \le T_{out}R_{out}/R_{in}.
\end{equation}
So by monotonicity of $H$
\begin{equation}\nonumber
 v'(r) \le H^{-1}(T_{out}R_{out}/R_{in}) =: V_{max}.
\end{equation}
 
 \item \label{pt02} We want to show that $\sigma_r > \sigma_\theta$ in $(R_{in},R_{out})$. First, let us prove that $\sigma_\theta \neq \sigma_r$ everywhere. Otherwise let $r_0$ be such that $\sigma_r(r_0) = \sigma_\theta(r_0)$. By differentiating~\eqref{eq020} we obtain
\begin{equation}\nonumber  
 \sigma_r' = \partial_{11}f_r\cdot v'' + \partial_{12}f_r \cdot \frac{1}{r}\left(v' - \frac{v}{r}\right).
\end{equation}
Then using Euler-Lagrange equation~\eqref{eq021} we see that $v$ is a solution to the second order ODE
\begin{equation*}
 \frac{1}{r}\left( \partial_2f_r(v',v/r) - \partial_1f_r( v',v/r)\right) = \partial_{11}f_r(v',v/r)\cdot v'' + \partial_{12}f_r(v',v/r) \cdot \frac{1}{r}\left(v' - \frac{v}{r}\right)
\end{equation*}
with values $v'(r_0) = v(r_0) / r_0 =: \kappa$ for some $\kappa$. At the same time we see that $V(r) := \kappa r$ is a~solution to the same ODE with $V'(r_0) = v'(r_0)$ and $V(r_0) = v(r_0)$. Since $\partial_{11}f_r > 0$ the ODE satisfies the uniqueness principle and so $v = V$ in $(R_{in},R_{out})$, a contradiction with the values of $\sigma_r$ at $R_{in}$ and $R_{out}$.

Since $\sigma_r \neq \sigma_\theta$, we have either $\sigma_\theta > \sigma_r$ or $\sigma_\theta < \sigma_r$ in the whole interval. 
In the first case~\eqref{eq021} would imply that $\sigma_r$ is a non-decreasing function of $r$, 
a contradiction with the boundary conditions for $\sigma_r$. Therefore 
\begin{equation}\label{eq022}
 \sigma_r > \sigma_\theta \textrm{ in }(R_{in},R_{out}).
\end{equation}

 \item By virtue of~\eqref{eq034} and~\eqref{eq022} we see 
\begin{equation}\nonumber  
 v'(r) > v(r)/r\textrm{ for } r\in (R_{in},R_{out}).
\end{equation}
Then it follows from~\eqref{eq038} and~\eqref{eq025} that
\begin{equation}\nonumber
 \partial_1 f_r(v'(r),v'(r)) \ge \partial_1 f_r(v'(r),v(r)/r) \ge \frac{R_{in}}{R_{out}}T_{out} > 0
\end{equation}
and immediately
\begin{equation}\nonumber
 v'(r) \ge V_{min} > 1,
\end{equation}
where $V_{min}$ depends only on $f_r$ and $R_{in}T_{out}/R_{out}$.  
\end{enumerate} 
\end{proof}

We have seen in~\eqref{eq033} that the relaxed density $f_r$ can be expressed in terms of $f$. As a consequence, $f_r$ partially inherits the strict convexity of $f$. 
We use the convexity to show uniqueness of 
$v$:

\begin{lemma}\label{rel_uniqueness}
 Under the hypotheses of Theorem~\ref{thm-rel}, a minimizer $v$ of~\eqref{eq13} is unique.
\end{lemma}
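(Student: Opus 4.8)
The plan is to exploit the convexity of the functional in~\eqref{eq13} together with the strict convexity of $f_r$ in the tensile (non-relaxed) regime. Suppose $v_1$ and $v_2$ are two minimizers. Since $\wr$ is a convex function of the $3\times 2$ matrix $\grad u$, the one-dimensional integrand $r\mapsto r f_r(v'(r),v(r)/r)$ is convex as a function of the pair $(v'(r),v(r)/r)$, hence the whole functional in~\eqref{eq13} (including the linear boundary terms) is convex along the segment $v_t := (1-t)v_1 + tv_2$. Therefore $v_t$ is a minimizer for every $t\in[0,1]$, and in particular $t\mapsto \int_{R_{in}}^{R_{out}} r f_r(v_t',v_t/r)\,\dr$ is affine in $t$. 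By convexity of the integrand this forces, for a.e.\ $r$, that $s\mapsto f_r\big((1-s)(v_1'(r),v_1(r)/r) + s(v_2'(r),v_2(r)/r)\big)$ is affine; i.e.\ $f_r$ is affine along the segment joining $P_1(r):=(v_1'(r),v_1(r)/r)$ and $P_2(r):=(v_2'(r),v_2(r)/r)$.

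Next I would use the structure of $f_r = f_m$ from~\eqref{eq033} to rule out $P_1(r)\neq P_2(r)$. By Lemma~\ref{lm-pos} applied to both minimizers, $v_1'(r),v_2'(r)\in[V_{min},V_{max}]$ with $V_{min}>1$, so the first coordinates of $P_1,P_2$ lie in $(1,\infty)$; in particular the segment between them stays in the region $\{\lambda_1>1\}$ and never touches the trivial branch $f_m\equiv 0$. In the non-relaxed case ($\lambda_2\ge \o(\lambda_1)$) we have $f_m=f$, and $\grad^2 f>0$ by~\eqref{eq032}, so $f_r$ is \emph{strictly} convex there and cannot be affine on a nondegenerate segment; hence on the set where at least one of $P_1(r),P_2(r)$ lies strictly inside the non-relaxed region we must have $P_1(r)=P_2(r)$. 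In the relaxed case ($\lambda_2<\o(\lambda_1)$) we have $f_m(\lambda_1,\lambda_2)=f(\lambda_1,\o(\lambda_1))$, a function of $\lambda_1$ alone, with second derivative $\partial_{11}f(\lambda_1,\o(\lambda_1)) + \partial_{12}f(\lambda_1,\o(\lambda_1))\o'(\lambda_1)>0$ by~\eqref{eq035} (this is exactly~\eqref{eq035a}); so $f_r$ restricted to this region is strictly convex in $\lambda_1$, which forces $v_1'(r)=v_2'(r)$ on the part of that set where the segment is horizontal, and a short case analysis at the interface between the two regions (using that $f_r\in C^1$ and the one-sided strict convexity just established) handles segments that cross from one region to the other.

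From $P_1(r)=P_2(r)$ a.e.\ one extracts $v_1'(r)=v_2'(r)$ a.e., hence $v_1-v_2$ is constant; and then either the hoop-direction information $v_1(r)/r=v_2(r)/r$ on a set of positive measure, or more simply the boundary term structure, pins down the constant to be zero, giving $v_1\equiv v_2$. (If one prefers, once $v_1'=v_2'$ a.e.\ one may invoke the strict inequality~\eqref{eq035} at a single tensile radius, where $v_i/r$ is genuinely active, to conclude $v_1=v_2$ there and hence everywhere.)

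The main obstacle I anticipate is the boundary between the relaxed and non-relaxed regions: $f_r$ is only strictly convex in a rank-one-deficient sense (strictly convex in $\lambda_1$ but degenerate in $\lambda_2$) on the relaxed side, so a segment joining $P_1(r)$ and $P_2(r)$ that is exactly vertical (same $v'$, different $v/r$, both in the relaxed region) is a genuine direction of affineness of $f_r$ and is \emph{not} excluded by pointwise strict convexity alone. Handling this requires a nonlocal argument: such a vertical discrepancy in $v/r$ would have to be consistent with $v_1'=v_2'$, i.e.\ $v_1-v_2$ constant, and then one must show this constant cannot be nonzero — which is where the boundary conditions and the fact (from the not-yet-proved parts of Theorem~\ref{thm-rel}, or already from Lemma~\ref{lm-pos} via $\sigma_r>\sigma_\theta$) that the non-relaxed region is nonempty, so that $v_i/r$ is strictly active on a set of positive measure, must be brought in.
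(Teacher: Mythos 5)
Your proposal is correct and follows essentially the same route as the paper: strict convexity of $f_r$ in $\lambda_1$ (the paper quotes \eqref{eq035a}) fixes $v'$ a.e., and the existence of a radius with tensile hoop stress (Remark~\ref{rmk-nonrelaxed}) fixes the additive constant via strict convexity of $f$ in $\lambda_2$ (i.e.\ $\partial_{22}f>0$ from \eqref{eq032}). Two small points worth flagging: your fallback suggestion that ``the boundary term structure pins down the constant'' does not work on its own, because a shift $v\mapsto v+c$ generally changes the stretching integral as well, so one genuinely needs the non-relaxed radius; and the reference you cite for that final step should be $\partial_{22}f>0$ together with Remark~\ref{rmk-nonrelaxed}, not \eqref{eq035} or Lemma~\ref{lm-pos}.
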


\begin{proof}
Let $v$ be a minimizer of~\eqref{eq13}. By Lemma~\ref{lm-pos} we know that
\[ 1 < V_{min} \le v'(r) \le V_{max} \textrm{ for } r \in (R_{in},R_{out}). \]
From~\eqref{eq032} and \eqref{eq035a} we have that $\partial_{11}f_r(\ll) > 0$ for $\lambda_1 > 1$, which together with convexity of $f_r$ in both variables implies the uniqueness of $v'$. 
Moreover, we know by Remark~\ref{rmk-nonrelaxed} that there exists $r_0 \in (R_{in},R_{out})$ with nontrivial hoop stress:
\begin{equation*}
 \sigma_\theta(r_0) > 0, \textrm{ i.e. } v(r_0)/r_0 > \o(v'(r_0)).
\end{equation*}
Since $f_r(\ll) = f(\ll)$ for $\lambda_1 > 1, \lambda_2 > \o(\lambda_1)$, strict convexity of $f$ (in particular the fact $\partial_{22}f > 0$) implies that $v(r_0)$ is uniquely determined. This together with the uniqueness of $v'$ completes the argument.
\end{proof}


\begin{lemma}\label{lm2.2}
 Let us assume that there exists a non-empty relaxed region, i.e. there is a maximal interval $(A,B) \subset (R_{in},R_{out})$ such that
 \[ v(r)/r \le \o(v'(r)) \textrm { for } r \in (A,B).\]
 Then under the assumptions of Theorem~\ref{thm-rel} we have $A=R_{in}$. 
\end{lemma}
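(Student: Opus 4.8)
The plan is to argue by contradiction: suppose the maximal relaxed interval $(A,B)$ has $A > R_{in}$. Then on $(R_{in},A)$ we are in the non-relaxed regime, so $v(r)/r \ge \o(v'(r))$ there, while at $r=A$ equality $v(A)/A = \o(v'(A))$ holds by maximality and continuity. Inside $(A,B)$ the hoop stress vanishes: $\sigma_\theta = \partial_2 f_r(v',v/r) = 0$, because $f_r = f_m$ takes the form $f(\lambda_1,\o(\lambda_1))$ there (second case of~\eqref{eq033}, extended by~\eqref{eq041}), which is independent of $\lambda_2$. Hence the Euler--Lagrange equation~\eqref{eq021} reads $(r\sigma_r)' = 0$ on $(A,B)$, so $r\sigma_r(r) \equiv A\,\sigma_r(A)$ is constant on the relaxed interval; moreover in the relaxed region $\sigma_r(r) = \partial_1 f_r(v',v/r) = \partial_1 f(v',\o(v')) = H(v'(r))$ by the second case of~\eqref{eq033}. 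Thus $r\,H(v'(r))$ is constant on $(A,B)$, which determines $v'$ there implicitly, and since $H$ is strictly increasing (shown in Lemma~\ref{lm-pos} from~\eqref{eq035}), $v'$ is continuous and strictly decreasing in $r$ across the relaxed interval.

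Next I would compare the behavior of $\sigma_r$ just to the left and just to the right of $A$. On $(R_{in},A)$ (non-relaxed), by step~\eqref{pt02} in the proof of Lemma~\ref{lm-pos} we have $\sigma_r > \sigma_\theta$, hence from~\eqref{eq021} $\sigma_r' = \tfrac1r(\sigma_\theta - \sigma_r) < 0$, so $r\sigma_r$ is strictly increasing on $(R_{in},A)$: that is, $\big(r\sigma_r\big)'(A^-) > 0$ in the appropriate one-sided sense, or more robustly, $r\sigma_r(r) < A\,\sigma_r(A)$ for $r$ slightly less than $A$ — wait, I need to be careful with signs: $\big(r\sigma_r\big)' = \sigma_\theta \ge 0$ always, and it is $> 0$ precisely where the hoop stress is tensile. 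So $r\sigma_r$ is \emph{non-decreasing} everywhere, strictly increasing wherever $\sigma_\theta > 0$. The point is the \emph{rate}: on the non-relaxed side immediately left of $A$ we want to show $\sigma_\theta(A^-) > 0$, so that $r\sigma_r$ is strictly increasing there, whereas on $(A,B)$ it is exactly constant. To get $\sigma_\theta(A^-) > 0$: at $r = A$ we have $v(A)/A = \o(v'(A))$, and for $r < A$ close to $A$ we have $v(r)/r \ge \o(v'(r))$ with the natural-width curve; I would use the ODE for $v$ together with condition~\eqref{eq18} (which the lemma's hypotheses grant via Theorem~\ref{thm-rel}) to show that $v(r)/r$ cannot coincide with $\o(v'(r))$ on a left-neighborhood of $A$ — i.e. the set $\{v/r = \o(v')\}$ cannot accumulate at $A$ from the left without forcing $(A,B)$ to extend further left, contradicting maximality. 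This is where~\eqref{eq18} enters: it controls the sign of the derivative of $\o(v'(r)) - v(r)/r$ at a crossing point, analogous to~\eqref{eq-th2}, and rules out tangential contact from the non-relaxed side.

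So the contradiction comes in the following form: on one hand, the function $r \mapsto r\sigma_r(r)$ is continuous on $(R_{in},R_{out})$, equals the constant $A\sigma_r(A)$ throughout $(A,B)$, and is non-decreasing everywhere; on the other hand, just to the left of $A$ it must be strictly less than $A\sigma_r(A)$ (because $\sigma_\theta > 0$ on an interval ending at $A$, by the~\eqref{eq18} argument), which is consistent — so the real contradiction must instead come from matching $v'$ and $v/r$ across $A$ and pushing the relaxed region leftward. Concretely: since $r\sigma_r = rH(v')$ is forced to be constant on $(A,B)$ and equals $A\,H(v'(A))$, and since by continuity of $v,v'$ the same relation $v(A)/A = \o(v'(A))$ holds, I claim the unique solution of the relaxed-region ODE system extends to the left of $A$ and still satisfies $v(r)/r \le \o(v'(r))$ there — contradicting maximality of $(A,B)$ unless $A = R_{in}$. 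Establishing that the relaxed solution extends leftward past $A$ (rather than the non-relaxed solution) is the heart of the matter: it amounts to checking that at a putative transition point interior to $(R_{in},R_{out})$ the inequality $h'= \big(\o(v') - v/r\big)' $ has the ``wrong'' sign for a non-relaxed-to-relaxed transition as $r$ increases, which is exactly what~\eqref{eq18} is designed to prevent. I expect this sign analysis at the transition point — carefully combining the Euler--Lagrange equation, the definition~\eqref{eq033} of $f_r$ in the two regimes, and inequality~\eqref{eq18} — to be the main obstacle; the rest is bookkeeping with the monotonicity of $r\sigma_r$ and of $H$.
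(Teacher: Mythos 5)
You have identified the right overall strategy --- argue by contradiction, focus on the behavior near the transition point $A$, and use condition~\eqref{eq18} to obtain a sign contradiction --- but the crucial computation is never carried out, and the route you take to get there wanders through a dead end. The long middle discussion about the monotonicity of $r\sigma_r$ (constant on $(A,B)$, non-decreasing everywhere) is correct as far as it goes, but you yourself observe that it is ``consistent'' and yields no contradiction; this is exactly right, and that detour could be cut entirely. Your claim that ``the unique solution of the relaxed-region ODE system extends to the left of $A$ and still satisfies $v(r)/r \le w(v'(r))$ there'' is conceptually off: the minimizer $v$ is a single fixed function, and on $(R_{in},A)$ it is by assumption \emph{not} in the relaxed regime, so there is no ``relaxed solution'' to extend. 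The contradiction must come from the behavior of $v$ itself near $A$, not from a hypothetical continuation of a different ODE.

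The paper's actual argument is a single short computation you could have carried out: for $r$ just to the left of $A$ (non-relaxed, so $f_r=f$ and $\sigma_\theta>0$), differentiate $\sigma_\theta = \partial_2 f(v',v/r)$ and use the Euler--Lagrange equation to eliminate $v''$; taking $r\nearrow A$ and using $\sigma_\theta(A)=0$, $r\sigma_r'(r)\to -\sigma_r(A)$, one finds
\[
\sigma_\theta'(A) = \frac{1}{A\,\partial_{11}f}\Bigl(\det D^2 f\cdot\bigl(v'(A)-v(A)/A\bigr) - \partial_{12}f\cdot\partial_1 f\Bigr),
\]
which is strictly positive by~\eqref{eq18} (with $\lambda_1=v'(A)$, noting $v(A)/A=w(v'(A))$). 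But $\sigma_\theta>0$ on a left neighborhood of $A$ with $\sigma_\theta(A)=0$ forces $\sigma_\theta'(A^-)\le 0$, a contradiction. Your alternative --- computing $h'(A)$ with $h=w(v')-v/r$ from the relaxed side (for $r\searrow A$), exactly as the paper does for $L$ in deriving~\eqref{eqgrowth}, and observing $h'(A^+)<0$ conflicts with $h(A)=0$ and $h\ge0$ on $(A,B)$ --- would also work and is equivalent in substance, but as written you never compute $h'$, never specify from which side, and explicitly flag the calculation as ``the main obstacle'' rather than doing it. So the proposal is not a complete proof: the sign analysis at $A$, which is the entire content of the lemma, is left as a gap.
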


\begin{proof}
 
 
 To prove that $A = R_{in}$, let us assume that $A > R_{in}$. Then $\sigma_\theta(A)=0$ and $\sigma_\theta$ is strictly positive in a left neighborhood $\mathcal{U}$ of $A$. This in particular means that $f(v',v/r) = f_r(v',v/r)$ in $\mathcal{U}$. Hence we can do all our computations with $f$ instead of $f_r$.  
 
 We differentiate~\eqref{eq020} in $\mathcal{U}$ to obtain
\begin{align*}
 \sigma_\theta' = \partial_{12}f\cdot v'' + \frac{1}{r}\partial_{22} f \left(v' - \frac{v}{r}\right) = \frac{1}{r \partial_{11}f} \left( \det D^2f \left(v' - \frac{v}{r}\right) + \partial_{12}f\cdot \sigma_r' r \right),
\end{align*}
where we have used $\sigma_r' = \partial_{11}f(v',v/r)v'' + \partial_{12}f(v',v/r)(v/r)'$ to express $v''$. Now consider the limit $r \nearrow A$. Since $\sigma_\theta(r) \to 0$, from the Euler-Lagrange equation 
$r\sigma_r' = \sigma_\theta - \sigma_r$ we know that $\sigma_r'(r) r \to -\sigma_r(A) = -\partial_1f(v'(A),v(A)/A)$. Therefore we get
\begin{equation*}
\sigma_\theta'(A) = \frac{1}{A \partial_{11}f} \left( \det D^2f \left(v'(A) - \frac{v(A)}{A}\right) - \partial_{12}f\cdot \partial_1f \right) > 0,
\end{equation*}
where the last inequality follows from~\eqref{eq18} with $\lambda_1 = v'(A)$ (note that by the definition of $A$, $w(v'(A)) = v(A) / A$). Therefore there exists no such point $A$ and the relaxed region has to start at $R_{in}$.
\end{proof}

The next lemma shows that there exists a range of loads $T$ such that $v(r;T) > 0$ for $r \in (R_{in},R_{out})$ and $v(\cdot;T)$ has a non-empty relaxed region. 

\begin{lemma}\label{lm2.1}
 Under the assumptions of Theorem~\ref{thm-rel} there exists a range of loads $T$ in $\mathcal{I}$ such that 
 \begin{equation}\label{eq19}
 \frac{v(R_{in};T)}{R_{in}} < \o(v'(R_{in};T)) \textrm{ and } v(R_{in};T) > 0.
\end{equation}
\end{lemma}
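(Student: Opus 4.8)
The plan is to treat $T := T_{in}$ as a parameter ranging over $\mathcal{I} = (T_{out}, T_{out}R_{out}/R_{in})$ and to follow the dependence of the boundary data $v(R_{in};T)$, $v'(R_{in};T)$ of the minimizer on $T$. First I would record that $T \mapsto \big(v(R_{in};T), v'(R_{in};T)\big)$ is continuous on $\mathcal{I}$: by Lemmas~\ref{rel_existence} and~\ref{lm-pos} the minimizers are uniformly bounded in $W^{1,\infty}(R_{in},R_{out})$, the Euler--Lagrange equation~\eqref{eq021} upgrades this to a uniform $W^{2,\infty}$ bound, so from any sequence $T_n \to T$ one extracts a $C^1$ limit which --- by lower semicontinuity of the convex energy and uniqueness of the minimizer (Lemma~\ref{rel_uniqueness}) --- must be $v(\cdot;T)$; the same argument gives continuous extensions to the endpoints $T_{out}$ and $T_c := T_{out}R_{out}/R_{in}$ (at $T_c$ the minimizer is unique only up to an additive constant by Remark~\ref{rmk-rel}, but $v'(R_{in};\cdot)$ has a limit and $v(R_{in};\cdot)$ converges to a definite value). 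Write $\psi(T) := v(R_{in};T)/R_{in} - \o(v'(R_{in};T))$; the two requirements to be proved are $\psi(T)<0$ and $v(R_{in};T)>0$.

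The key step is $\psi(T) < 0$ for $T$ in a one-sided neighbourhood of $T_c$. For $T < T_c$ there is a nontrivial tensile region (Remark~\ref{rmk-nonrelaxed}), so if the relaxed region is also nonempty then by Lemma~\ref{lm2.2} it equals $(R_{in}, B_T)$ for some $B_T$, whence $\psi(T) \le 0$. On $(R_{in}, B_T)$ the Euler--Lagrange equation gives $r\sigma_r \equiv R_{in}T$ and hence $v'(r) = H^{-1}(R_{in}T/r)$, and the matching condition $v(B_T)/B_T = \o(v'(B_T))$ yields $v(R_{in};T) = \phi_T(B_T)$ with $\phi_T(r) := r\,\o(v'(r)) - \int_{R_{in}}^r v'$; since $v(r) \le r\,\o(v'(r))$ on $(R_{in},B_T)$, the function $\phi_T$ is minimized over $[R_{in},B_T]$ at $r = B_T$. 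A short computation (using $\o' = -\partial_{12}f/\partial_{22}f$ and $H' = \det\grad^2 f/\partial_{22}f$ at $(\lambda_1,\o(\lambda_1))$) identifies $\phi_T'(R_{in}) < 0$ with assumption~\eqref{eq18} at $\lambda_1 = v'(R_{in};T) > 1$; hence $\phi_T$ strictly decreases at $R_{in}$ and $v(R_{in};T) = \phi_T(B_T) < \phi_T(R_{in}) = R_{in}\,\o(v'(R_{in};T))$, i.e. $\psi(T) < 0$ strictly. What remains is to rule out an empty relaxed region near $T_c$: if it were empty along some $T_n \to T_c$, the $v(\cdot;T_n)$ would solve the smooth, uniformly elliptic $f$-Euler--Lagrange equation on all of $(R_{in},R_{out})$ with $\partial_2 f(v_n',v_n/r) > 0$; passing to the $C^1$ limit $\bar v$ gives $\partial_2 f(\bar v', \bar v/r) \ge 0$ everywhere, while $\bar v$ --- being a minimizer at $T_c$ --- is fully relaxed, so $\bar v/r \le \o(\bar v')$ by Remark~\ref{rmk-rel}; thus $\bar v/r = \o(\bar v')$ identically, and substituting this back into the equation (exactly the computation of $\sigma_\theta'$ in the proof of Lemma~\ref{lm2.2}) forces the equality $(\lambda_1 - \o(\lambda_1))\det\grad^2 f = \partial_{12}f\,\partial_1 f$ at $(\lambda_1,\o(\lambda_1))$ for every $\lambda_1 = \bar v'(r) \ge V_{min} > 1$, contradicting~\eqref{eq18}.

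Finally I would assemble the conclusion, noting that $G := \{\, T \in \mathcal{I} : v(R_{in};T) > 0,\ \psi(T) < 0 \,\}$ is open, so it is enough to see $G \ne \emptyset$. If $v(R_{in};T_c) > 0$ then $v(R_{in};\cdot) > 0$ on a one-sided neighbourhood of $T_c$, which by the key step lies in $\{\,\psi < 0\,\}$, so $G \ne \emptyset$. Otherwise $v(R_{in};T_c) \le 0$, and I would use two elementary observations. First, for $T$ near $T_{out}$ the film is everywhere in tension (nearly balanced loads produce no wrinkling), so there $v(R_{in};T)/R_{in} > \o(v'(R_{in};T)) \ge \o(V_{max}) > 0$ --- using $v'(R_{in};T) \le V_{max}$ (Lemma~\ref{lm-pos}) and monotonicity of $\o$ --- hence $v(R_{in};T) > R_{in}\o(V_{max})$. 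Second, the same monotonicity shows that whenever $0 < v(R_{in};T) < R_{in}\o(V_{max})$ one automatically has $v(R_{in};T)/R_{in} < \o(V_{max}) \le \o(v'(R_{in};T))$, i.e. $\psi(T) < 0$. Since $v(R_{in};\cdot)$ is continuous on $\mathcal{I}$ and takes a value $> R_{in}\o(V_{max})$ (near $T_{out}$) and a value $\le 0$ (near $T_c$), the intermediate value theorem gives an open interval of $T$ with $0 < v(R_{in};T) < R_{in}\o(V_{max})$; by the second observation this interval is contained in $G$. In either case $G \ne \emptyset$, which is the assertion.

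The main obstacle is the middle step: establishing that wrinkling genuinely reaches the inner boundary (not merely marginally) for loads near $T_c$. This is exactly where hypothesis~\eqref{eq18} enters, in two roles --- excluding the over-determined fully-relaxed limiting profile, and making the auxiliary function $\phi_T$ strictly decreasing at $R_{in}$. A secondary point is the assertion that no wrinkling occurs for loads close to $T_{out}$: it is geometrically evident (the net applied load is expansive) but needs a brief separate verification, and it is used only to locate a sign change of $v(R_{in};\cdot)$ in the regime --- wide annuli --- where positivity already fails near $T_c$.
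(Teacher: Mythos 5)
Your proof is correct and follows the same overall architecture as the paper's: establish continuity of $T\mapsto v(R_{in};T)$ on $\mathcal{I}$ with extensions to the endpoints, pin down the behavior at $T_{out}$ and at $T_c:=T_{out}R_{out}/R_{in}$, and invoke the intermediate value theorem. The genuine difference is in how you treat the right endpoint. The paper takes $T_k\nearrow T_c$, shows the limit $\tilde v$ is the unique fully-relaxed profile normalized by $\tilde v(R_{out})=R_{out}\o(\tilde v'(R_{out}))$, and then asserts $v(R_{in};T_c)<\o(V_{max})R_{in}$ without spelling out the strictness (the inequality $\tilde v(R_{in})/R_{in}\le\o(\tilde v'(R_{in}))=\o(V_{max})$ is immediate, but $<$ needs the $h'<0$ argument tied to \eqref{eq18}). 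You instead work at $T<T_c$, introduce the auxiliary function $\phi_T(r)=r\o(v'(r))-\int_{R_{in}}^r v'$ on the relaxed interval $(R_{in},B_T)$, verify that $\phi_T'(R_{in})<0$ is precisely \eqref{eq18}, and conclude $\psi(T)<0$ strictly. This fills in the strictness gap the paper glosses over, at the cost of the extra step of ruling out an empty relaxed region near $T_c$ --- a step the paper avoids entirely because it reasons about the limit rather than about the minimizers at $T<T_c$. Both your $\phi_T$ computation and your contradiction with \eqref{eq18} for the hypothetical fully-tensile limit are correct, and the case split in the assembly (on the sign of $v(R_{in};T_c)$) is exhaustive. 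One small remark: the assertion ``for $T$ near $T_{out}$ the film is everywhere in tension'' is true but unnecessary --- all the assembly needs is $v(R_{in};T_{out})=\kappa R_{in}>R_{in}\o(V_{max})$, which follows from $\kappa>1\ge\o(V_{max})$ and feeds directly into the IVT.
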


\begin{proof}

By Lemma~\ref{lm-pos} and~\eqref{eq027} we know that $\o(v'(R_{in};T)) \ge \o(V_{max})$, and so it is enough to show that
\begin{equation}\nonumber
 0 < v(R_{in};T) < \o(V_{max})R_{in}
\end{equation}
for some range of loads $T$. 

Let $T \in \mathcal{I}$ be arbitrary. As a preliminary step, we observe that $T \mapsto v(R_{in};T)$ is continuous function for $T \in \mathcal{I} \cup T_{out}$. This follows in the standard way from the uniqueness of minimizers. 
We also see that $T \mapsto v'(R_{in};T)$ is a continuous function of $T \in \mathcal{I} \cup T_{out}$. This is an immediate consequence of the optimality condition 
$\partial_1 f_r(v'(R_{in};T),v(R_{in};T)/R_{in}) = \sigma_r(R_{in}) = T$ and monotonicity of 
$\partial_1 f_r(\lambda_1,\lambda_2)$ in the first variable. Then using~\eqref{eq027} we see that $T \mapsto \o(v'(R_{in};T))$ is also a continuous function. By the same argument the same is true for $v'(R_{out};T)$ and $\o(v'(R_{out};T))$ as well. 

\smallskip

Now we turn to the main point: the value of $v(R_{in};\cdot)$ at the endpoints of $\mathcal{I}$. 
If $T = T_{out}$, the solution has the form $v(r;T_{out}) = \kappa r$ for some $\kappa > 1$ such that $\partial_1 f_r(\kappa,\kappa) = T_{in} = T_{out}$, and so clearly $v(R_{in};T_{out}) = \kappa R_{in} > 0$ in this case. In fact, there is no relaxed region (and therefore no compression in the hoop direction) when $T=T_{out}$.

At the other endpoint $T = T_{out}R_{out}/R_{in}$ we no longer have uniqueness of a minimizer for~\eqref{eq13} (see Remark~\ref{rmk-rel}). Nevertheless, such minimizer is unique up to a translation, i.e. $v'(\cdot;T)$ is uniquely determined, and it has to satisfy $v(R_{out};T)/R_{out} \le \o(v'(R_{out};T))$. Indeed, since $(r\sigma_r)' = \sigma_\theta \ge 0$ and $R_{in}\sigma_r(R_{in}) = R_{out}\sigma_r(R_{out})$, we have that $\sigma_\theta \equiv 0$. This immediately implies $v(R_{out};T)/R_{out} \le \o(v'(R_{out};T))$.

Now we consider a sequence of loads $T_k \in \mathcal{I}$, $T_k \nearrow T = T_{out}R_{out}/R_{in}$. As before we have a sequence of unique minimizers $v_k = v(\cdot;T_k)$, where each of them has a non-empty non-relaxed region (see Remark~\ref{rmk-nonrelaxed}). We have shown in Lemma~\ref{lm2.2} that any relaxed region has to start at $R_{in}$, and therefore having a~non-relaxed region implies $v_k(R_{out})/R_{out} > \o(v_k'(R_{out}))$.
We also know that $v(\cdot;T)$ is determined only up to an additive constant, and so theoretically it is possible that different subsequences of $\{ v_k \}$ are converging to different minimizers $v(\cdot;T)$. We show that this is not the case, i.e. that $v_k$ converges to one particular minimizer $v(\cdot;T)$. Let us take a subsequence of $\{v_k\}$ (labeled the same) which converges to some $\tilde v := v(\cdot;T)$. Since $v_k(R_{out})/R_{out} > \o(v_k'(R_{out}))$, continuity of $v(R_{out};\cdot)$ and $\o(v'(R_{out};\cdot))$ implies \[\tilde v(R_{out})/R_{out} \ge \o(\tilde v'(R_{out})).\]
At the same time, we know that $\tilde v$ doesn't have a non-relaxed region, and so $\tilde v(R_{out})/R_{out} \le \o(\tilde v'(R_{out}))$. This shows that $\tilde v(R_{out}) = \o(\tilde v'(R_{out}))R_{out}$, i.e. the limiting $v(\cdot;T)$ is uniquely determined. Therefore the whole sequence $\{ v_k \}$ converges to this particular minimizer. This in particular shows that $v(R_{in};\cdot)$ is continuous from the left at $T = T_{out}R_{out}/R_{in}$. 

To summarize, we have shown that 
\begin{equation}\nonumber
\begin{array}{ll}
 v(R_{in};T) > 0 \qquad\qquad & \textrm{ for } T = \textrm {the left endpoint of } \mathcal{I}; \\
 v(R_{in};T) < \o(V_{max})R_{in} & \textrm{ for } T = \textrm {the right endpoint of } \mathcal{I}.
\end{array}
\end{equation}
Continuity of $v(R_{in};\cdot)$ and the fact that $0 < \o(V_{max})$ implies that there are some $T \in \mathcal{I}$ such that 
\[ 0 < v(R_{in};T) < \o(V_{max})R_{in}.\]
This completes the proof of the lemma. 
\end{proof}

\begin{cor*}
 Let $T \in \mathcal{I}$ be such that~\eqref{eq19} is true. Then under the assumptions of Theorem~\ref{thm-rel} there exists $L \in (R_{in},R_{out})$ such that the interval $(R_{in},R_{out})$ splits into a relaxed region $(R_{in},L)$  and a non-relaxed region $(L,R_{out})$. 
\end{cor*}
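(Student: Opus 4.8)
The statement to prove says that the annulus divides into a relaxed region $(R_{in},L)$, where the hoop stress $\sigma_\theta = \partial_2 f_r(v',v/r)$ vanishes (equivalently $v(r)/r \le \o(v'(r))$), and a non-relaxed region $(L,R_{out})$, where $\sigma_\theta > 0$ (equivalently $v(r)/r > \o(v'(r))$). Three already-proven facts do essentially all the work. By the hypothesis~\eqref{eq19} of Lemma~\ref{lm2.1}, together with continuity of $v$ and $v'$, we have $v(R_{in})/R_{in} < \o(v'(R_{in}))$, so the relaxed set $\{v/r \le \o(v')\}$ contains a right-neighborhood $(R_{in},R_{in}+\delta)$ of $R_{in}$. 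By Remark~\ref{rmk-nonrelaxed} there is tensile hoop stress somewhere in $(R_{in},R_{out})$, so the non-relaxed set is non-empty. And by Lemma~\ref{lm2.2}, every maximal open subinterval of $(R_{in},R_{out})$ on which $v/r \le \o(v')$ has left endpoint $R_{in}$.

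The plan is then as follows. The interior (relative to $(R_{in},R_{out})$) of the relaxed set $\{v/r \le \o(v')\}$ is relatively open, hence a disjoint union of maximal open intervals on which $v/r \le \o(v')$; by Lemma~\ref{lm2.2} each of these has left endpoint $R_{in}$, so there is at most one, and by the first fact above there is exactly one --- call it $(R_{in},L)$. If $L = R_{out}$ then $v/r \le \o(v')$ throughout, leaving no room for the non-empty non-relaxed set; hence $L < R_{out}$, and since $L \ge R_{in}+\delta$ we get $L \in (R_{in},R_{out})$. By continuity $v(L)/L = \o(v'(L))$. Since the complement of $(R_{in},L]$ in $(R_{in},R_{out})$ is the single interval $(L,R_{out})$, the non-relaxed region is exactly $(L,R_{out})$ and is non-empty. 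This produces the splitting.

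To make the argument airtight one must exclude two degenerate possibilities for the function $h(r) := \o(v'(r)) - v(r)/r$. First, $h$ cannot vanish identically on a subinterval: there $\sigma_\theta \equiv 0$ would force $\sigma_\theta' \equiv 0$, but differentiating the Euler--Lagrange equation exactly as in the proof of Lemma~\ref{lm2.2} and using~\eqref{eq18} yields $\sigma_\theta' = \tfrac{1}{r\,\partial_{11}f}\bigl(\det D^2 f \cdot (v' - \o(v')) - \partial_{12}f\cdot\partial_1 f\bigr) > 0$; equivalently, on such an interval $r\sigma_r = r\,\partial_1 f(v',\o(v'))$ is constant, which determines $v$ and turns the identity $v/r \equiv \o(v')$ into the reverse of~\eqref{eq18}. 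Second, $h$ cannot vanish at an isolated $r' \in (L,R_{out})$ with $h < 0$ on a punctured neighborhood of $r'$: there $\sigma_\theta \ge 0$ attains the local minimum value $0$ at $r'$, so $\sigma_\theta'(r') = 0$, but near such a point $f_r = f$ (we are on the non-relaxed side), so the Lemma~\ref{lm2.2} computation applies verbatim at $r'$ and again gives $\sigma_\theta'(r') > 0$. Ruling these out upgrades $\{v/r \le \o(v')\}$ to exactly $(R_{in},L]$.

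The hard part is not any single computation --- everything substantive already sits in Lemma~\ref{lm2.2} --- but rather the bookkeeping, together with the care needed because $f_r$ is only $C^1$, not $C^2$, across the locus $\lambda_2 = \o(\lambda_1)$. Consequently the derivative identity for $\sigma_\theta$ is legitimate only when evaluated from the closed non-relaxed side (or at a point with $f_r = f$ in a full neighborhood), which is precisely why it disposes of the second degenerate case above but not, by itself, of an isolated zero of $h$ lying \emph{inside} the relaxed region; excluding that last case (needed for the strict inequality $v/r < \o(v')$ on $(R_{in},L)$ asserted in Theorem~\ref{thm-rel}) requires instead the explicit structure $r\sigma_r = r\,\partial_1 f(v',\o(v')) = \mathrm{const}$ valid throughout the relaxed region, combined once more with~\eqref{eq18}. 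If one is content to define the relaxed region simply as $\{\sigma_\theta = 0\}$, the short argument of the second paragraph already suffices.
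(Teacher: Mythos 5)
Your proof is correct and rests on exactly the same three ingredients the paper uses: hypothesis~\eqref{eq19} gives a nonempty relaxed set near $R_{in}$, Remark~\ref{rmk-nonrelaxed} gives a nonempty non-relaxed set, and Lemma~\ref{lm2.2} pins every maximal relaxed interval to the left endpoint $R_{in}$. The paper's own proof is three sentences: it asserts that the relaxed region ``has to be of the form $(R_{in},L)$'' directly from Lemma~\ref{lm2.2}, without comment.

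What you add, and what the paper's terse argument genuinely leaves open, is the degenerate-case analysis in your third paragraph. Lemma~\ref{lm2.2}, as stated, constrains only nondegenerate maximal intervals $(A,B)$ contained in $\{v/r\le \o(v')\}$; it says nothing a priori about an isolated zero of $h(r)=\o(v'(r))-v(r)/r$ lying in the interior of the non-relaxed set, nor about the possibility $h\equiv 0$ on a subinterval of $(R_{in},L)$. Theorem~\ref{thm-rel} asserts the strict inequality $v/r<\o(v')$ on $(R_{in},L)$, so these cases do need to be ruled out somewhere, and your computations do this correctly: the one-sided derivative formula for $\sigma_\theta'$ from the proof of Lemma~\ref{lm2.2}, combined with~\eqref{eq18}, disposes of the isolated zero (it would force $\sigma_\theta'\le 0$ from one side while the formula gives $\sigma_\theta'>0$), and the identity $r\sigma_r=r\,\partial_1 f(v',\o(v'))=\mathrm{const}$ on an interval with $h\equiv 0$, together with $\o'=-\partial_{12}f/\partial_{22}f$, produces exactly the equality case of~\eqref{eq18}, contradicting the strict inequality assumed there. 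Your observation that $f_r$ is only $C^1$ across $\lambda_2=\o(\lambda_1)$, so that all $\sigma_\theta'$ computations must be taken one-sided from where $f_r=f$ (or from within the interior of the relaxed region where the explicit constant-$r\sigma_r$ structure applies), is the right cautionary note and is implicitly used in the paper's Lemma~\ref{lm2.2} proof as well. In short: same route, with the bookkeeping and the boundary-regularity care made explicit rather than left implicit.
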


\begin{proof}
 Since $T$ satisfies~\eqref{eq19}, the relaxed region is non-empty. Then by Lemma~\ref{lm2.2} it has to be of the form $(R_{in},L)$ for $L \in (R_{in},R_{out}]$. Using Remark~\ref{rmk-nonrelaxed} we see that the non-relaxed region is also non-empty, and so necessarily $L < R_{out}$. This completes the proof. 
\end{proof}

It remains to show~\eqref{eq-th2}. Let us compute the derivative of $h$ for $r < L$:
\begin{equation}\label{eq-h}
h'(r) = (\o)'(v'(r))\cdot v''(r) - \frac{1}{r}\left(v'(r)-\frac{v(r)}{r}\right).
\end{equation}
Since $r < L$, the hoop stress $\sigma_\theta = 0$ and together with Euler-Lagrange equation ($0=(r\sigma_r)'$ in this case) we obtain
\begin{align*}
 0 = \sigma_\theta(r) = \partial_2f (v'(r), \o(v'(r))),\qquad \
 \frac{\alpha}{r} = \sigma_r(r) = \partial_1f (v'(r), \o(v'(r))),
\end{align*}
($\alpha$ is a positive constant) and by differentiating 
\begin{align*}
 0 &= \partial_{12}f ( v'(r), \o(v'(r))) v''(r) + \partial_{22}f (v'(r), \o(v'(r)))\cdot(\o)'(v'(r))\cdot v''(r),\\
-\frac{\alpha}{r^2} &= \partial_{11}f (v'(r), \o(v'(r))) v''(r) + \partial_{12}f (v'(r), \o(v'(r)))\cdot(\o)'(v'(r))\cdot v''(r).
\end{align*}
We solve this linear system for $v''(r)$ and $(\o)'(v'(r)) v''(r)$ and substitute the result into~\eqref{eq-h}.
At $r=L$ we have $v' - v/r = v' - \o(v')$, thus $h'(L) < 0$ is equivalent to
\begin{equation}\label{eqgrowth}
\det D^2f(v'(r),\o(v'(r))) \cdot \left(v'(r) - \o(v'(r))\right) >  \ \partial_{12}f(v'(r),\o(v'(r))) \cdot \partial_1f(v'(r),\o(v'(r))), 
\end{equation}
exactly matching condition~\eqref{eq18}. 

This completes the proof of Theorem~\ref{thm-rel}.

\subsection{Uniqueness of the minimizer}

In this section we show the 
uniqueness of the minimizer for the relaxed problem~\eqref{eq4}. 

First, it is easy to check that $u_0$ defined in~\eqref{eq3} is a minimizer for the relaxed energy $E_0$ (since the functional is convex, any solution of the Euler-Lagrange
equations is a minimizer).
Since $E_0$ is not strictly convex, the uniqueness of the minimizer is not clear. It is however true: we now show that $u_0$ is (up to an additive constant) the only minimizer of the relaxed problem.
Indeed, suppose that there exists another minimizer $u_1$ of the relaxed problem.
By convex duality for $E_0$ we have:
\begin{equation}\nonumber
 \min_{u \in W^{1,p}(\Omega,\mathbb{R}^3)} E_0(u) = \
\max_{ \substack{\sigma \in L^{p'}(\Omega,\mathbb{R}^{3\times 3}) \\ \mathrm{div} \sigma = 0,\ \sigma . n = T\,  \mathrm{ at }\, \partial\Omega}} \int_{\Omega} -D(\sigma) \dxy,
\end{equation}
where $D(\sigma) := \sup_{F \in \R^{3\times3}} \left<\sigma, F\right> - \wr(F)$ is the convex conjugate of $\wr$. Since $u_0$ is a minimizer of $E_0$, we know that the maximum on the RHS is attained for $\sigma_0 = \frac{\partial \wr}{\partial F}(\grad u_0)$. From the definition of the convex conjugate $D$ we see that 
$\left< \sigma_0 , \grad u_1\right> - \wr(\grad u_1) \le D(\sigma_0)$, and after integration $E_0(u_1) \ge \int_\Omega -D(\sigma) \dxy$.
Since the deformation $u_1$ is a minimizer of $E_0$ as well, we obtain an equality in the last relation. Hence 
$\left< \sigma_0 , \grad u_1\right> - \wr(\grad u_1) = D(\sigma_0)$ a.e. in $\Omega$ and consequently $\sigma_0 = \frac{\partial \wr}{\partial F}(\grad u_1)$ a.e. in $\Omega$. 
It follows that 
\begin{equation}\label{equalgrad}
 \grad u_0 = \grad u_1
\end{equation}
at points where $\wr$ is strictly convex at $Du_0$ (i.e. at points where the eigenvalues $\lambda_1 \ge \lambda_2$ of $(Du_0^TDu_0)^{1/2}$ satisfy $\lambda_1 > 1$ and $\lambda_2 > \o(\lambda_1)$). 


We have proved in Theorem~\ref{thm-rel} that the hoop stress is tensile exactly in the non-relaxed region 
\begin{equation}\label{eq-relaxed}
\nonrelaxed := \left\{ x : L < |x| \le R_{out}\right\} 
\end{equation}
with $R_{in} < L < R_{out}$. By $\relaxed = \Omega \setminus \nonrelaxed$ we denote its complement, i.e. the relaxed region. Since the both stresses are tensile in $\nonrelaxed$, as a consequence of~\eqref{eq033} we have $\wr(\grad u_0) = \w(\grad u_0)$.

Using the strict convexity of $\wr$ in the non-relaxed region $\nonrelaxed$ we see that 
\begin{equation}\label{u0=u1}
 u_0(x) = u_1(x) + C
\end{equation}
for $x \in \nonrelaxed$. We may assume without loss of generality that $C = 0$. In the relaxed region $\relaxed$ we need to replace~\eqref{equalgrad} by
\begin{equation*}
 \grad u_0(x) \cdot n(x) = \grad u_1(x) \cdot n(x),\qquad x\in \relaxed
\end{equation*}
where $n(x) = \frac{x}{|x|}$ is a unit vector in the radial direction. This concludes the proof of uniqueness, because integrating the last relation in the radial direction extends validity of~\eqref{u0=u1} to the whole set $\Omega$.

\section{The 2D result}\label{sec-2d}

In this section we show the matching upper and lower bounds in the simplified Kirchhoff-Love setting. After stating the main result of this section we prove the upper bound by superimposing wrinkles on the solution $u_0$ of the relaxed problem. We use Lemma~\ref{lm-gamma1} to create simple wrinkles whereas Lemma~\ref{lm-gamma2} provides a tool to create a family of wrinkles with changing wavelength near the free boundary. In the rest of the section we prove a matching lower bound. Using Lemmas~\ref{lm1}, \ref{lm2},  and~\ref{lm3} we show that if $E_h(u_h)$ is close to $\mathcal{E}_0$, then also $\grad u_h$ has to be close to $\grad u_0$. As a consequence we obtain a bound on the out-of-plane displacement $u_{h,3}$. By interpolation we show the smallness of $\grad u_{h,3}$, which allows us to project $u_h$ to the plane without changing its energy too much. The final ingredient is a comparison of $u_0$ with the projection of $u_h$ (Lemma~\ref{lm9}).

For Theorem~\ref{thm-2d} we will assume that the lower bound in~\eqref{eq032} holds with $p=2$ rather than just $1 < p \le 2$. 
The stronger assumption $p=2$ is only required for the second half of the proof of the lower bound (e.g. for the interpolation), whereas the first half of the proof of the lower bound (especially Lemmas~\ref{lm1}, \ref{lm2}, \ref{lm3}, and the Poincar\'e inequality) requires only $1 < p \le 2$. Since the real purpose of the two-dimensional result is to lay the ground for the proof of the three-dimensional case, 
many of the preparatory lemmas are proved in the more general setting $1 < p \le 2$.

Let us now state the main result of this section:

\begin{theorem}\label{thm-2d}
Let us assume \eqref{eq032} with $p=2$ 
and all hypotheses of Theorem~\ref{thm-rel}. 
Then there exist constants $0 < C_1 < C_2$ independent of $h$ such that
\begin{equation}\label{eq-2d}
 \mathcal{E}_0 + C_1h \le \min_{u} E_h(u) \le \mathcal{E}_0 + C_2h,
\end{equation}
where $\mathcal{E}_0$ is the minimal value of the relaxed problem~\eqref{eq4}.
\end{theorem}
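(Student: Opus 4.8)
The plan is to prove the two inequalities in~\eqref{eq-2d} separately, with the upper bound being constructive and the lower bound proceeding by contradiction. For the \emph{upper bound} $\min_u E_h(u) \le \mathcal{E}_0 + C_2 h$, I would start from the radial minimizer $u_0 = (v(r),\theta)$ of the relaxed problem from Theorem~\ref{thm-rel} and superimpose an out-of-plane oscillation, so the trial map is roughly $u = u_0 + (\text{small radial correction}) + A(r)\cos\!\big(k(r)\,\theta\big)\,e_3$ with amplitude $A$ and angular wavenumber $k$ to be chosen. The oscillation ``eats up'' the excess hoop arclength $h(r) = \o(v'(r)) - v(r)/r \ge 0$ in the relaxed region $(R_{in},L)$, so one needs $A^2 k^2 \sim h(r)$ (to leading order the hoop membrane strain from the wrinkle is $\sim A^2 k^2/r^2$), and this reduces the membrane cost to essentially $\mathcal{E}_0$; the remaining cost is the bending term $h^2\int |\grad^2 u_3|^2 \sim h^2 \int A^2 k^4$. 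Because $h(L^-)=0$ with $h'(L)<0$ (Theorem~\ref{thm-rel}, \eqref{eq-th2}), the naive choice of a single wavenumber $k$ forces the amplitude $A(r)\sim (h(r)/k^2)^{1/2}$ to drop abruptly near $r=L$, and $\int |A'|^2 k^2$ picks up a logarithm — this is the $\mathcal{E}_0 + Ch|\log h|$ barrier mentioned in the introduction. To remove the log one refines $k=k(r)$ near $L$ (a wrinkle cascade / branching): letting $k$ grow as $r\to L$ lets $A$ decay more gently, trading a larger bending contribution (which is cheap there since $A$ is small) for a smaller $\int |A'|^2 k^2$. I expect the excerpt to package the single-wavelength step as Lemma~\ref{lm-gamma1} and the variable-wavelength/branching step as Lemma~\ref{lm-gamma2}; optimizing the branching rate against $h(r)\sim c(L-r)$ should yield a total excess of order $h$, giving $C_2$.

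For the \emph{lower bound} $\min_u E_h(u) \ge \mathcal{E}_0 + C_1 h$, I would argue by contradiction: suppose $E_h(u_h) \le \mathcal{E}_0 + \epsilon h$ for $\epsilon$ arbitrarily small along a sequence $h\to 0$. Since $\wr \le \w \le f$-type densities and $E_0 \le$ the unrelaxed energy, $E_h(u_h) - E_0(u_h) \le \epsilon h + (E_0(u_h)-\mathcal{E}_0)$; combined with convexity of $\wr$ and strict convexity of $\wr$ at $\grad u_0$ in the non-relaxed region, the proximity of $E_h(u_h)$ to the relaxed minimum $\mathcal{E}_0$ should force $\grad u_h$ to be $L^2$-close to $\grad u_0$ up to a rigid motion — this is where Lemmas~\ref{lm1},~\ref{lm2},~\ref{lm3} and a Poincaré inequality enter (here only $1<p\le 2$ is used). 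In particular one gets a smallness bound on the out-of-plane displacement $u_{h,3}$ in $L^2$, and on the membrane-strain defect. Then an interpolation inequality of the form $\|\grad u_{h,3}\|_{L^2}^2 \lesssim \|u_{h,3}\|_{L^2}\,\|\grad^2 u_{h,3}\|_{L^2}$, together with the bending bound $h^2\|\grad^2 u_{h,3}\|_{L^2}^2 \lesssim \mathcal{E}_0+\epsilon h$, shows $\grad u_{h,3}$ is small (this is the step that needs $p=2$, to control $\|\grad u_{h,3}\|$ in terms of the energy). Smallness of $\grad u_{h,3}$ lets me project $u_h$ onto the plane, $\bar u_h := (u_{h,1},u_{h,2})$, with $E_0(\bar u_h) \le E_h(u_h) + o(h) \le \mathcal{E}_0 + o(1)$, so $\bar u_h$ is a near-minimizer of the planar relaxed problem. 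Finally I would compare $\bar u_h$ with $u_0$ via Lemma~\ref{lm9}, which I anticipate is an area/monotonicity statement: in the relaxed region the hoop compression is capped at $\o(v')$, so a genuinely planar deformation cannot shrink the circles near $R_{in}$ as much as the physically wrinkled $u_h$ effectively does; quantitatively, the area of the deformed annulus (or the integral of the hoop stretch) under any planar competitor close to $u_0$ differs from that of $u_h$ by an amount bounded below by $c\cdot h$, contradicting $\epsilon h$ with $\epsilon$ small. That contradiction yields $C_1>0$.

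The main obstacle I foresee is the lower bound's final comparison step (Lemma~\ref{lm9}) combined with making all the ``closeness'' estimates quantitative with the correct power of $h$. It is not enough that $\bar u_h \to u_0$; one needs that \emph{any} admissible $u_h$ whose planar projection is close to $u_0$ still carries an unavoidable order-$h$ surplus, and this must be extracted purely from (i) the relaxed energy inequality, (ii) the bending term, and (iii) a geometric rigidity coming from the area/arclength of the image of the annulus — with no ansatz on the wrinkle structure. Getting the interpolation and projection errors to be genuinely $o(h)$ (rather than merely $o(1)$) relative to the $\epsilon h$ budget, while only having $L^2$-type control from convexity, is the delicate point; this is presumably why $p=2$ is imposed in 2D. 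On the upper-bound side the only real subtlety is tuning the branching law for $k(r)$ so the bending cost of the cascade stays $O(h)$ rather than $O(h|\log h|)$, which is a clean optimization once Lemmas~\ref{lm-gamma1}–\ref{lm-gamma2} are in place.
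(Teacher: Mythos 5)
Your high-level architecture matches the paper's on both bounds: for the upper bound, you correctly anticipate a naive single-wavelength wrinkle giving $\mathcal{E}_0 + Ch|\log h|$ and a dyadic wrinkle cascade (Lemmas~\ref{lm-gamma1}--\ref{lm-gamma2}) removing the logarithm; for the lower bound, you correctly anticipate a contradiction argument, pointwise convexity-defect estimates (Lemmas~\ref{lm1}--\ref{lm3}), a Poincar\'e step, an interpolation to control $\grad u_{h,3}$, projection onto the plane, and a comparison with $u_0$ via Lemma~\ref{lm9}. However, two of your lower-bound steps contain genuine errors in the quantitative bookkeeping, and they obscure exactly the mechanism by which the contradiction is closed.

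\textbf{The bending bound.} You write the bending estimate as $h^2\|\grad^2 u_{h,3}\|_{L^2}^2 \lesssim \mathcal{E}_0 + \epsilon h$, i.e.\ bounded by a constant. That is too weak: interpolating it against $\|u_{h,3}\|_{L^2}^2 \lesssim \epsilon h$ gives $\|\grad u_{h,3}\|_{L^2}^2 \lesssim (\epsilon h)^{1/2}\cdot h^{-1} = \epsilon^{1/2}h^{-1/2}$, which blows up as $h\to 0$ and does not yield smallness of $\grad u_{h,3}$. The paper instead bounds the bending term by the \emph{excess} alone: since $\int_\Omega W(\grad u_h) + B(u_h) \ge \int_\Omega \wr(\grad u_h) + B(u_h) = E_0(u_h) \ge \mathcal{E}_0$, one has $h^2\|\grad^2 u_{h,3}\|^2 \le E_h(u_h) - \mathcal{E}_0 \le \delta h$, i.e.\ $\|\grad^2 u_{h,3}\|^2 \le \delta/h$. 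Interpolating this against $\|u_{h,3}\|^2 \le C\delta h$ yields $\|\grad u_{h,3}\|^2 \le C\delta$, uniformly in $h$. Without the observation that the membrane-plus-boundary terms already account for $\mathcal{E}_0$, the interpolation step fails.

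\textbf{The projection/comparison budget.} You worry about getting the interpolation and projection errors to be $o(h)$ ``relative to the $\epsilon h$ budget,'' and you phrase the final area argument as producing an $h$-dependent surplus of order $ch$. This is not what happens, and reflects a misunderstanding of how the contradiction is closed. After Step~5 the projected planar map $u_h^{12}$ satisfies $E(u_h^{12}) - \mathcal{E}_0 \le C\delta^{1/2}$, a bound in which $h$ no longer appears. Lemma~\ref{lm9} then says that \emph{no} planar deformation can bring $E$ within an $h$-independent constant gap of $\mathcal{E}_0$: the area argument is carried out entirely in $\delta$, using the strict inequality $\int_{\relaxed}\det F_0 > |u_0(\relaxed)|$ (a fixed positive constant), together with $L^2$-proximity giving $|u(\relaxed)| \le |u_0(\relaxed)| + C\delta^{1/3}$ and $\big||u(\relaxed)| - \int_{\relaxed}\det F_0\big| \le C\delta^{1/2}$. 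Taking $\delta$ small enough produces the contradiction, and the resulting constant $C_1$ comes from the critical $\delta$, not from any $o(h)$ control. So the demands on the interpolation/projection step are much lighter than you anticipate ($O(\delta^{1/2})$, not $o(h)$) --- but, per the previous point, they are only achievable after the bending term is controlled by the excess.
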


\begin{remark*}
Theorem~\ref{thm-2d} doesn't necessarily require all hypotheses of Theorem~\ref{thm-rel}. In fact, some of them can be replaced by assumptions on the solution $u_0$ of the relaxed problem (which are consequences of Theorem~\ref{thm-rel}).
\end{remark*}

\subsection{The upper bound in the two-dimensional setting}\label{sect-ub2d}

To obtain the upper bound, we must construct a test function $u_h$ for any (small) $h > 0$ with energy 
\begin{equation*}
 E_h(u_h) \le \mathcal{E}_0 + Ch,
\end{equation*}
where the constant $C$ is independent of $h$. A naive approach would be to superimpose a ``single family of wrinkles'' (with a well-chosen period independent of $r$, and a well-chosen amplitude that depends on $r$), c.f.~\cite{bib-benny1} (see also \cite{epstein-wrinkles} for a similar calculation). The energy associated with this $u_h$ has the expected scaling 
away from $r=L$ (the edge of the wrinkled region). However the membrane and bending energies are both singular at $r=L$; as a result, the total energy (after integration) is too large, of order $\mathcal{E}_0 + O(h |\log h|)$. The singularity in the bending term can be avoided by introducing a boundary layer, however we have not found a similar way of avoiding the singularity in the membrane term (which is associated with stretching in the radial direction).

To get a linear correction a more complicated construction seems necessary, using a ``cascade of wrinkles'' rather than a ``single family of wrinkles.'' In other words, the period of the wrinkling changes repeatedly as one approaches the edge of the wrinkled region. Constructions of this type have been used in other settings, for example in studies of compressed thin film blisters~\cite{bib-blisters,bib-jin-sternberg}.

Recall that the solution of the relaxed problem has compressive hoop strain when $r<L$. The essential purpose of the wrinkling is to avoid this compressive hoop strain by out-of-plane buckling. In the following lemmas, we write $\eps(r)$ for the compressive hoop strain to be avoided. Up to a factor of $2\pi r$, this amounts to ``the amount of arclength to be wasted by wrinkling'' along the image of the circle of radius $r$.

\begin{lemma}[see Lemma 2 in \cite{bib-blisters}]\label{lm-gamma1}
For every $\eps > 0$ there exists a smooth $\mathcal{C}^\infty$ 
planar curve $\gamma(\eps) = (\gamma_1(\eps)(t),\gamma_2(\eps)(t)) : \R \to \R^2$ with properties
\begin{gather*}
|\partial_t\gamma| = 1 + \eps, \quad \partial_t\gamma_1\ge 0, \quad \gamma(-t) = -\gamma(t) \\
\gamma(t+2\pi) = \gamma(t) + \begin{pmatrix}2\pi\\0\end{pmatrix},
\end{gather*}
and satisfying the bound
\begin{equation}
\begin{gathered}\nonumber
|\gamma_1-t|+|\partial_t\gamma_1-1|+|\partial_{tt}\gamma_1|\le C\eps,\qquad |\gamma_2|+|\partial_t\gamma_2|+|\partial_{tt}\gamma_2| \le C \epse^{1/2},\\
|\partial_\eps \gamma_1| \le C, \quad
|\partial_\eps \gamma_2| \le C \epse^{-1/2}, \quad
|\partial^2_\eps \gamma_2| \le C \epse^{-3/2},
\end{gathered}
\end{equation}
where $C$ does not depend on $\eps$. Moreover, the bound is sharp (in terms of the scaling in $\eps$) for small values of $\eps$.
\end{lemma}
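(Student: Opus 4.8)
The idea is to build $\gamma(\varepsilon_*)$ explicitly as a ``shallow wrinkle'' profile: take $\gamma_1(t) = t$ to leading order and $\gamma_2(t) = A(\varepsilon_*)\,\psi(t)$ for a fixed $2\pi$-periodic, odd, smooth profile $\psi$ (e.g.\ $\psi(t)=\sin t$, or a mollified version if one wants extra flexibility), with amplitude $A(\varepsilon_*)$ to be tuned so that the arclength constraint $|\partial_t\gamma| = 1+\varepsilon_*$ holds exactly. The heuristic scaling is the familiar one: if $\gamma_2 \sim A\sin t$ then $|\partial_t\gamma|^2 = (\partial_t\gamma_1)^2 + A^2\cos^2 t$, and to absorb an excess arclength of order $\varepsilon_*$ one needs $A^2 \sim \varepsilon_*$, i.e.\ $A \sim \varepsilon_*^{1/2}$ — which is exactly what the claimed bounds $|\gamma_2|\le C\varepsilon_*^{1/2}$, $|\partial_\varepsilon\gamma_2|\le C\varepsilon_*^{-1/2}$ encode.

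\smallskip

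Concretely I would proceed as follows. \emph{Step 1.} Fix the odd smooth $2\pi$-periodic profile $\psi$ with $\int_0^{2\pi}|\psi'|^2\,\mathrm{d}t =: c_0 > 0$. \emph{Step 2.} Define $\gamma_2(\varepsilon_*)(t) = a(\varepsilon_*)\psi(t)$ and look for $\gamma_1$ of the form $\gamma_1(t) = t + b(\varepsilon_*)\,\phi(t)$ with $\phi$ odd, $2\pi$-periodic, smooth, chosen so that the pointwise identity $(\partial_t\gamma_1)^2 + (\partial_t\gamma_2)^2 = (1+\varepsilon_*)^2$ can be solved. In fact it is cleaner to reverse this: prescribe $\gamma_2$, then \emph{define} $\partial_t\gamma_1 := \sqrt{(1+\varepsilon_*)^2 - a(\varepsilon_*)^2\psi'(t)^2}$, which is smooth and positive provided $a(\varepsilon_*)^2 \sup|\psi'|^2 < (1+\varepsilon_*)^2$ (true for small $\varepsilon_*$ once $a = O(\varepsilon_*^{1/2})$), and then set $\gamma_1(t) := \int_0^t \partial_t\gamma_1$. \emph{Step 3.} Enforce the periodicity condition $\gamma_1(t+2\pi) = \gamma_1(t) + 2\pi$, i.e.\ $\int_0^{2\pi}\sqrt{(1+\varepsilon_*)^2 - a^2\psi'^2}\,\mathrm{d}t = 2\pi$; this is a single scalar equation determining $a = a(\varepsilon_*)$. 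Expanding the square root, $\int_0^{2\pi}\big((1+\varepsilon_*) - \tfrac{a^2\psi'^2}{2(1+\varepsilon_*)} + O(a^4)\big)\,\mathrm{d}t = 2\pi$ gives $2\pi\varepsilon_* = \tfrac{a^2 c_0}{2(1+\varepsilon_*)} + O(a^4)$, so by the implicit function theorem there is a unique small solution $a(\varepsilon_*)$ with $a(\varepsilon_*) = (4\pi\varepsilon_*/c_0)^{1/2} + O(\varepsilon_*^{3/2})$, in particular $a \sim \varepsilon_*^{1/2}$, $a'(\varepsilon_*) \sim \varepsilon_*^{-1/2}$, $a''(\varepsilon_*) \sim \varepsilon_*^{-3/2}$. \emph{Step 4.} Read off all the claimed bounds: $|\gamma_2| + |\partial_t\gamma_2| + |\partial_{tt}\gamma_2| \le C a \le C\varepsilon_*^{1/2}$; $|\partial_\varepsilon\gamma_2| \le C a'(\varepsilon_*) \le C\varepsilon_*^{-1/2}$, $|\partial_\varepsilon^2\gamma_2|\le Ca''(\varepsilon_*)\le C\varepsilon_*^{-3/2}$; and for $\gamma_1$, since $\partial_t\gamma_1 - 1 = \big((1+\varepsilon_*)^2 - a^2\psi'^2\big)^{1/2} - 1 = O(\varepsilon_*)$ uniformly (the $\varepsilon_*$ from $(1+\varepsilon_*)^2$ and the $a^2 = O(\varepsilon_*)$ term are both $O(\varepsilon_*)$), while $\partial_{tt}\gamma_1 = O(a^2) = O(\varepsilon_*)$, and $|\gamma_1 - t| = |\int_0^t(\partial_t\gamma_1-1)| \le C\varepsilon_*$ (using oddness of the integrand to keep it bounded on all of $\mathbb{R}$, not growing linearly); $|\partial_\varepsilon\gamma_1| \le C$ follows since $\partial_\varepsilon(\partial_t\gamma_1)$ is bounded (the dangerous term $a a'\psi'^2 \sim \varepsilon_*^{1/2}\cdot\varepsilon_*^{-1/2} = O(1)$) and the oddness again prevents secular growth after integrating in $t$. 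The symmetry $\gamma(-t) = -\gamma(t)$ is immediate from oddness of $\psi$ and $\phi$ (note $\partial_t\gamma_1$ is even, so $\gamma_1$ is odd). Sharpness in Step (final remark) follows from the lower bound $\int|\psi'|^2 = c_0 > 0$: any curve with $|\partial_t\gamma|=1+\varepsilon_*$, $\partial_t\gamma_1\ge0$ and the stated periodicity must waste arclength $2\pi\varepsilon_*$ through $\gamma_2$, forcing $\int(\partial_t\gamma_2)^2 \gtrsim \varepsilon_*$ hence $\|\gamma_2\|_\infty \gtrsim \varepsilon_*^{1/2}$ by a Poincar\'e-type argument.

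\smallskip

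The only genuinely delicate points are (a) making sure $\partial_t\gamma_1$ stays \emph{strictly positive} — this needs $a^2\sup|\psi'|^2 < (1+\varepsilon_*)^2$, which holds for $\varepsilon_*$ below some threshold $\varepsilon_0$ depending only on $\psi$, and the lemma is only claimed for small $\varepsilon_*$ anyway; and (b) controlling the $\varepsilon_*$-derivatives uniformly in $t$, where the key observation is that differentiating in $\varepsilon_*$ brings down factors of $a'(\varepsilon_*) \sim \varepsilon_*^{-1/2}$, so a term with $k$ powers of $a$ contributes $\sim \varepsilon_*^{(k-2)/2}$ after one $\varepsilon_*$-derivative — this is precisely why $\partial_\varepsilon\gamma_2 \sim \varepsilon_*^{-1/2}$ and $\partial_\varepsilon^2\gamma_2 \sim \varepsilon_*^{-3/2}$ but $\partial_\varepsilon\gamma_1 = O(1)$ (its leading $\varepsilon_*$-dependence is the explicit $(1+\varepsilon_*)^2$, whose derivative is bounded). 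I expect the main obstacle to be bookkeeping: tracking that no quantity involving $\gamma_1$ grows linearly in $t$ (which would violate the claimed uniform bounds on $\mathbb{R}$), which is handled throughout by the oddness/mean-zero structure of the building blocks. Since this is essentially Lemma~2 of~\cite{bib-blisters} adapted to the present normalization, I would state it with a proof sketch and refer to that paper for the remaining routine estimates.
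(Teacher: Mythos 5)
Your construction is essentially correct and arrives at the same scaling analysis as the paper, but the technical execution differs: the paper takes the sinusoidal graph $\tilde\gamma(s) = (s, A\sin s)$ and reparametrizes it by (normalized) arclength so that the speed is constant $1+\varepsilon_*$, whereas you keep $\gamma_2(t) = a(\varepsilon_*)\psi(t)$ \emph{in the original parameter} and recover $\gamma_1$ by integrating the pointwise arclength constraint $\partial_t\gamma_1 = \sqrt{(1+\varepsilon_*)^2 - a^2\psi'^2}$. These produce genuinely different curves (in the paper's version $\gamma_2 = A\sin(\gamma_1)$, so the curve is a graph over the $\gamma_1$-axis; in yours it is not), but both deliver $|\partial_t\gamma| = 1+\varepsilon_*$ with the same $a\sim\varepsilon_*^{1/2}$ scaling from the single scalar constraint $\int_0^{2\pi}\partial_t\gamma_1\,\mathrm{d}t = 2\pi$, and all the claimed estimates follow by the same bookkeeping. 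Your route has the minor advantage of being more elementary (no need to control $\Lambda^{-1}$ and its $\varepsilon_*$-derivatives); the paper's reparametrization is what actually lets them reuse the same scaffolding with an extra parameter $\alpha$ in the later Lemma~\ref{lm-gamma2}.

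One attribution is slightly off and should be corrected. You claim twice that ``oddness prevents secular growth'' of $\gamma_1(t)-t$ and of $\partial_{\varepsilon}\gamma_1(t)$. Oddness alone does not prevent linear growth (the identity map is odd). What actually controls these quantities is the periodicity constraint you impose in Step~3: since $\int_0^{2\pi}\partial_t\gamma_1\,\mathrm{d}t = 2\pi$ holds \emph{identically in $\varepsilon_*$}, the function $t \mapsto \gamma_1(t) - t$ is $2\pi$-periodic for every $\varepsilon_*$, hence so is its $\varepsilon_*$-derivative. A continuous $2\pi$-periodic function is bounded by its sup over a single period, and that sup is in turn controlled by $\int_0^{2\pi}|\partial_t\gamma_1 - 1|$ (respectively $\int_0^{2\pi}|\partial_\varepsilon\partial_t\gamma_1|$), which your pointwise bounds handle. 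With this repaired, the proof is complete.

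Two small further remarks. Your sharpness argument needs the $2\pi$-periodicity of $\gamma_2$ at the fixed fundamental period (otherwise $\int|\gamma_2'|^2 \gtrsim \varepsilon_*$ does \emph{not} imply $\|\gamma_2\|_\infty \gtrsim \varepsilon_*^{1/2}$, since high-frequency oscillations evade the Poincar\'e argument); but as the paper itself only asserts sharpness without proof and it is not used downstream, this is not a real obstruction. Finally, both your construction and the paper's require $\varepsilon_*$ below a threshold $\varepsilon_0$ so that the quantity under the square root stays positive; the lemma as stated says ``for every $\varepsilon_* > 0$'' but is in practice only used for small $\varepsilon_*$, so this matches the paper's implicit convention.
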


\begin{proof}[Idea of the proof (see Lemma~2 in \cite{bib-blisters} for more detail):]
 We construct $\gamma$ by re\-pa\-ra\-me\-tri\-zing the cur\-ves
\begin{equation*}
  \tilde\gamma(t) = \left(\begin{array}{c} t \\ A \sin t \end{array} \right)
\end{equation*}
so that $|\partial_t\gamma|=1+\eps$, where $A$ is chosen such that $\int_0^{2\pi} |\partial_t\tilde\gamma(t)| \ud t = 2\pi(1+\eps)$. By considering the small-$\eps$ limit we obtain $A = \rho(\epse)\epse^{1/2}$ ($\rho$ being a smooth function on $[0,1]$). This leads easily to the desired estimates.
\end{proof}

We would like to use Lemma~\ref{lm-gamma1} to superimpose wrinkles on top of the planar deformation $u_0$ obtained from the solution $v$ to the relaxed problem~\eqref{eq13}. Though this naive construction does not achieve optimal energy scaling, it can be modified (using Lemma~\ref{lm-gamma2}) to obtain a construction with optimal energy scaling. Therefore it makes sense to analyze the naive construction, to understand why it fails and also to motivate the successful construction.


The naive construction is based on Lemma~\ref{lm-gamma1} and proceeds as follows. After we obtain the parameter $\eps(r)$ (amount of wastage of arclength) from $v$, we determine the right period of wrinkles to obtain the optimal scaling. A bit of calculation reveals that the stretching and bending terms are of the same order when the 
number of wrinkles is of order 
\begin{equation}\nonumber
k := [h^{-1/2}],
\end{equation}
where the brackets denote an integer part. From $v$ we obtain the amount of wastage of arclength as
\begin{equation}\nonumber
 \eps(r) := \frac{w(v'(r))}{v(r)/r} - 1
\end{equation}
for $r \in (R_{in},L)$.
Following the proposed idea we define  
a solution $\bar u_h$ in $\Omega_R$ (using radial coordinates $r,\theta$) by
\begin{equation}\label{eq-naive}
 \bar u_h(r,\theta) = v(r) \hat r + \frac{\gamma_1(\eps(r))(k\theta)}{k} \hat \theta + v(r) \frac{\gamma_2(\eps(r))(k\theta)}{k} e_3
\end{equation}
and as $u_0$ elsewhere.
From~\eqref{eq-th2} we know that $\eps \sim L-r$ (up to a factor) for $0 < L - r << 1$. Since $\partial_\eps^2 \gamma_2 \sim \epse^{-3/2}$ by Lemma~\ref{lm-gamma1}, the contribution of $\partial_{rr}\bar u_{h,3}$ to the bending energy is divergent:
\begin{equation*}
  h^2 \int_{R_{in}}^L \left|\partial_{rr}\bar u_{h,3}\right|^2 r \dr \sim	 h^2 k^{-2} \int_{R_{in}}^L \left((L-r)^{-3/2}\right)^2 r \dr = \
h^3 \int_{R_{in}}^L (L-r)^{-3} r \dr. 
\end{equation*}
It is clear that the bending energy is of order $h$ in the region $R_{in} < r < L-h$.
Therefore a boundary layer in the region $L-h < r < L$ would solve this issue provided 
the integral over this layer of the new $|\partial_{rr} \bar u_{h,3}|^2$ is at most of order $h^{-1}$.

Now let us try to compute the contribution from the stretching energy near the transition from the relaxed to the non-relaxed state. In the region $R_{in} < r < L-\delta$ (for a fixed $\delta > 0$) the term $\partial_r \bar u_{h,3}$ from the membrane energy satisfies
\begin{equation*}
\int_{R_{in}}^{L-\delta} |\partial_r \bar u_{h,3}|^2 r \dr \approx \int_{R_{in}}^{L-\delta} \
k^{-2} \left( \left|v'(r) \gamma_2\right|^2 + \left|v(r)\partial_\eps\gamma_2\right|^2\right) r \dr.
\end{equation*}
By Lemma~\ref{lm-gamma1} the first term in the parentheses is of order $\eps$ whereas the second term is of order $\epse^{-1}$. Therefore,  near the free boundary (where $\eps << 1$) the second term is dominant and we obtain 
\begin{equation*}
\int_{R_{in}}^{L-\delta} |\partial_r \bar u_{h,3}|^2 r \dr \approx \int_{R_{in}}^{L-\delta} k^{-2}\epse^{-1} r \dr \approx h(\log (\delta) - \log(L-R_{in})).
\end{equation*}
We see that by setting $\delta = h$ (or any power of $h$) we would obtain energy scaling $h |\log h|$.



We now begin discussion of the successful construction, which uses a ``cascade of wrinkles'' near $r = L$. The main tool is Lemma~\ref{lm-gamma2}. Whereas Lemma~\ref{lm-gamma1} involved wrinkled curves, Lemma~\ref{lm-gamma2} involves wrinkled strips, in which the length scale of wrinkling doubles from one side to the other. It provides the basic building block for our cascade of wrinkles. 

\begin{lemma}\label{lm-gamma2}
Let $B = (0,l) \times (0,w)$ with $0 < l,w \le  1$, $F : (0,l) \to \R$, and $e$ be a positive 
function on $(0,l)$ satisfying $|e'| \le  c$, $\left|e''\right| \le c$, and $l/c \le e \le c l$ for some $c > 0$. Then there exists a smooth deformation $\Psi(s,t)$ defined on $B$ and $w$-periodic in the $t$ variable such that for any $t \in (0,w)$ the following holds:
\begin{gather*}
 \Psi_1(s,t) = F(s),\quad s \in (0,l)\\
 \left(\Psi_2,\Psi_3\right)(s,t) = w \gamma(e(s))(\frac{t}{w}), \quad s \in (0,l/4) \\
 \left(\Psi_2,\Psi_3\right)(s,t) = \frac{w}{2}\gamma\left(e(s)\right)\left(\frac{t}{w/2}\right), \quad s \in (3/4\,l,l) \\
 \Psi(s,0) = (F(s),0,0), \quad \Psi(s,w) = (F(s),w,0), \quad s \in (0,l),\\
\end{gather*}
and 
\begin{gather*}
  \left|\partial_s \Psi_2\right|^2 + |\partial_s \Psi_3|^2 \le Cw^2l^{-1}, \quad \left|\partial_t \Psi\right| = 1 + e(s),\quad
  \left|\grad^2 \Psi\right|^2 \le C\left(l^{-3}w^2 + w^{-2}l\right),
\end{gather*} 
where $C$ depends just on $c$ and $\gamma$ is the curve defined in Lemma~\ref{lm-gamma1}.
\end{lemma}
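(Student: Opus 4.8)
\textbf{Proof proposal for Lemma~\ref{lm-gamma2}.}

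The plan is to build $\Psi$ as an explicit interpolation between two copies of the wrinkled curve from Lemma~\ref{lm-gamma1} at two different length scales, the scale $w$ on the left quarter $(0,l/4)$ and the scale $w/2$ on the right quarter $(3l/4,l)$. First, on $(0,l/4)$ we simply set $(\Psi_2,\Psi_3)(s,t) = w\,\gamma(e(s))(t/w)$ and on $(3l/4,l)$ we set $(\Psi_2,\Psi_3)(s,t) = \tfrac{w}{2}\,\gamma(e(s))(2t/w)$; in both cases $\Psi_1(s,t)=F(s)$. Note that because $\gamma(\cdot)$ is $2\pi$-periodic with the shift $\gamma(\tau+2\pi)=\gamma(\tau)+(2\pi,0)^T$ (applied here in the second component), both expressions are $w$-periodic in $t$, agree with $(F(s),t,0)$-type boundary behaviour at $t=0,w$ in the sense required (since $\gamma(0)=0$ and $\gamma(2\pi)=(2\pi,0)$, rescaling gives $(\Psi_2,\Psi_3)(s,0)=(0,0)$ and $(\Psi_2,\Psi_3)(s,w)=(w,0)$), and satisfy $|\partial_t\Psi| = 1+e(s)$ by the arclength property $|\partial_t\gamma|=1+\eps$ in Lemma~\ref{lm-gamma1}, since the rescaling $t\mapsto t/w$ (resp. $2t/w$) together with the prefactor $w$ (resp. $w/2$) leaves $|\partial_t|$ scale-invariant. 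The only real work is the transition region $s\in(l/4,3l/4)$.

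In the transition zone I would interpolate using a smooth cutoff $\chi:[l/4,3l/4]\to[0,1]$ with $\chi\equiv 1$ near $l/4$, $\chi\equiv 0$ near $3l/4$, and $|\chi'|\le C/l$, $|\chi''|\le C/l^2$, writing schematically
\begin{equation*}
 (\Psi_2,\Psi_3)(s,t) = \chi(s)\, w\,\gamma(e(s))(t/w) + (1-\chi(s))\,\tfrac{w}{2}\,\gamma(e(s))(2t/w),
\end{equation*}
after first composing the two pieces with a common underlying wrinkle phase so that the $\Psi_2$-components both equal $t$ to leading order and only the small $\gamma_2$-corrections differ; this is exactly the device used in \cite{bib-blisters}. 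The key estimates then come from differentiating this expression and invoking the sharp bounds of Lemma~\ref{lm-gamma1}: $|\gamma_2|,|\partial_t\gamma_2|,|\partial_{tt}\gamma_2|\lesssim \epse^{1/2}$ and the $\eps$-derivative bounds $|\partial_\eps\gamma_2|\lesssim\epse^{-1/2}$, $|\partial^2_\eps\gamma_2|\lesssim\epse^{-3/2}$, combined with the hypotheses $l/c\le e\le cl$ and $|e'|,|e''|\le c$ (so $e\sim l$, $\partial_s[\gamma(e(s))] = e'(s)\partial_\eps\gamma$, etc.). For the tangential-derivative bound $|\partial_s\Psi_2|^2+|\partial_s\Psi_3|^2\le Cw^2 l^{-1}$: each $s$-derivative either hits $\chi$ (producing a factor $l^{-1}$ times the $O(w\epse^{1/2})=O(w l^{1/2})$ size of the $\gamma_2$-pieces, giving $O(w l^{-1/2})$, squared $O(w^2 l^{-1})$), or hits $e(s)$ inside $\gamma$ (producing $e'\cdot w\partial_\eps\gamma_2 = O(w\epse^{-1/2}) = O(w l^{-1/2})$, again squared $O(w^2l^{-1})$), or hits the $t/w$-scaling argument, which for the $\Psi_2\approx t$ leading term contributes nothing and for the corrections is lower order. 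For the Hessian bound $|\grad^2\Psi|^2\le C(l^{-3}w^2 + w^{-2}l)$: the $\partial_{tt}$ contributions scale like $w^{-1}\cdot(\text{rescaled }\partial_{tt}\gamma_2) = w^{-1}\cdot w^{-1}\cdot O(w\epse^{1/2}) = O(w^{-1}l^{1/2})$, squared $O(w^{-2}l)$; the $\partial_{ss}$ contributions involve two $s$-derivatives, worst case two hits on $\chi$ giving $l^{-2}\cdot O(w l^{1/2}) = O(w l^{-3/2})$, squared $O(w^2 l^{-3})$; mixed $\partial_{st}$ terms are intermediate and bounded by the same two quantities via Young's inequality.

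The main obstacle, and the place where one must be careful rather than cavalier, is arranging the transition so that the \emph{exact} identity $|\partial_t\Psi| = 1+e(s)$ holds throughout $(l/4,3l/4)$, not merely up to error. A naive convex combination of two curves of different periods does \emph{not} have constant speed in $t$. The fix (following \cite{bib-blisters}) is to interpolate at the level of the \emph{profile} while keeping the arclength parametrization exact: one writes each piece as a reparametrization of a model sinusoid with amplitude chosen (via the function $\rho$ from the proof of Lemma~\ref{lm-gamma1}) to realize speed $1+e(s)$, interpolates the amplitude and period \emph{inside} that construction, and then reparametrizes the combined curve to unit-times-$(1+e(s))$ speed. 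Because the reparametrization is a smooth near-identity map on $(0,w)$ with derivatives controlled by $e,e',e''$ and by $\chi,\chi'$, it does not spoil the scaling of the estimates above. Once that bookkeeping is in place, the boundary conditions $\Psi(s,0)=(F(s),0,0)$, $\Psi(s,w)=(F(s),w,0)$ follow from the corresponding normalization of $\gamma$ at $t=0,2\pi$, and the matching at $s=l/4$ and $s=3l/4$ is exact by the choice of $\chi$. I expect the verification of the four displayed bounds to be a routine (if tedious) application of the chain rule together with the sharp Lemma~\ref{lm-gamma1} estimates, so I would state them and indicate the worst-case terms rather than writing out every derivative.
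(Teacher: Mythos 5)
Your proposal is correct and follows essentially the same route as the paper: the paper realizes the period-doubling transition by defining a two-parameter model curve $\tilde\gamma(t)=(t,\,A(\varepsilon,\alpha)[(1-\alpha)\sin(2\pi t)+\alpha\sin(4\pi t)])$ with amplitude $A$ fixed so the total arclength is $1+\varepsilon$, reparametrizes to exact constant speed $1+\varepsilon$ via $\Gamma=\tilde\gamma\circ\Lambda^{-1}$, and then sets $\Psi_2,\Psi_3$ by substituting $\varepsilon=e(s)$ and $\alpha=\phi(s)$ — which is precisely the ``interpolate the amplitude and period inside the construction, then reparametrize'' fix you identify after noting that a naive convex combination would break the exact $|\partial_t\Psi|=1+e(s)$ condition. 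Your worst-case derivative bookkeeping (a $\chi$-hit costing $l^{-1}$, an $e$-hit costing $\partial_\varepsilon\gamma_2\sim\varepsilon^{-1/2}\sim l^{-1/2}$, $\partial_{tt}$ costing $w^{-1}$ after rescaling) matches the paper's estimates on $\Gamma_1,\Gamma_2$ and their $w$-rescaling.
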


\begin{proof}
To prove this lemma we just need to define $\Psi_2, \Psi_3$ such that the required estimates are true. The idea of the construction is very similar to the proof of Lemma~\ref{lm-gamma1}. To simplify the notation we first assume $w = 1$.

First, let us fix 
$0 < \varepsilon < 1$ and $\alpha \in [0,1]$. We consider a planar curve:
\begin{equation*}
  \tilde \gamma : t \mapsto \left(t,A\left[ (1-\alpha)\sin \left(2\pi t  \right) + \alpha \sin\left(4\pi t\right)\right] \right),
\end{equation*}
where $A=A(\varepsilon,\alpha)$ is such that the length of $\tilde\gamma([0,w])$ is exactly $(1+\varepsilon)$. More specifically, we define 
\begin{equation*}
 \Lambda(t) = \frac{1}{1+\varepsilon} \int_0^t |\tilde\gamma'(\tau)| \ud \tau \
= \frac{1}{1+\varepsilon} \int_0^t \sqrt{1 + (2\pi A)^2\left[ (1-\alpha)\cos (2\pi \tau) + 2\alpha \cos (4\pi \tau)\right]^2} \ud \tau
\end{equation*}
with $A$ such that $\Lambda(1)=1$ (since $\Lambda$ is strictly increasing function of $A$, there exists a unique such $A$).

Considering the small-$\varepsilon$ limit, we obtain 
\begin{equation}\nonumber
 A = \rho(\varepsilon,\alpha) \varepsilon^{1/2},
\end{equation}
where $\rho \in \mathcal{C}^\infty([0,1]\times[0,1])$. Consequently we have that 
\begin{equation}\nonumber
\left|\frac{\partial A}{\partial^k \varepsilon \partial^l \alpha}\right| \le C \varepsilon^{1/2 - k}. 
\end{equation}
Using $\tilde \gamma$ we define a new reparametrized curve
\begin{equation}\nonumber
 \Gamma(\varepsilon,\alpha,t) = \tilde \gamma \circ \Lambda^{-1}(t).
\end{equation}
This curve obviously satisfies $|\Gamma'(t)| = 1+\varepsilon$ and 
\begin{equation*}
 \Gamma_1(\varepsilon,\alpha,t) = t - \varepsilon \rho_1(\varepsilon,\alpha,t),\qquad \
 \Gamma_2(\varepsilon,\alpha,t) = \varepsilon^{1/2} \rho_2(\varepsilon,\alpha,t),
\end{equation*}
where $\rho_1, \rho_2 \in \mathcal{C}^\infty([0,1]\times[0,1]\times \R)$.
From there we get estimates 
\begin{gather*}
|\partial_{\varepsilon} \Gamma_1| + |\partial_{t} \Gamma_1| \le C, \qquad |\partial_{\alpha} \Gamma_1| \le C\varepsilon,\\
|\partial_{\varepsilon \varepsilon} \Gamma_1| + |\partial_{\varepsilon \alpha} \Gamma_1| + |\partial_{\varepsilon t} \Gamma_1| \le C, \qquad
|\partial_{\alpha \alpha} \Gamma_1| + |\partial_{\alpha t} \Gamma_1| + |\partial_{tt} \Gamma_1| \le C\varepsilon,\\
|\partial_{\varepsilon}^k\partial_{\alpha}^l\partial_t^n \Gamma_2| \le C\varepsilon^{1/2-k}\quad k,l,n\ge 0.
\end{gather*}
Now we are ready to define the map $\Psi$. We set
\begin{align*}
 \Psi_1(s,t) := F(s), \quad \Psi_2(s,t) := \Gamma_1(e(s),\phi(s),t), \quad \Psi_3(s,t) := \Gamma_2(e(s),\phi(s),t),
\end{align*}
where $\phi$ is a smooth increasing function on $(0,l)$ satisfying
\begin{align*}
 \phi(s) &= 0 \qquad s \in (0,l/4),\\
 0 \le \phi(s) &\le 1 \qquad s \in (l/4,3/4\ l),\\
\phi(s) &= 1 \qquad s \in (3/4\ l,l),
\end{align*}
and $\phi' \le 3/l$. Then
\begin{align*}
 \partial_s \Psi(s,t) &= \left(F'(s),\partial_\varepsilon \Gamma_1 e' + \partial_\alpha \Gamma_1 \phi',\partial_\varepsilon \Gamma_2 e' + \partial_\alpha \Gamma_2 \phi'\right), \\
 \partial_t \Psi(s,t) &= \left(0,\partial_t \Gamma_1,\partial_t \Gamma_2\right),
\end{align*}
and using previous estimates together with $e \approx l$ we obtain desired bounds
\begin{gather*}
  \left|\partial_t \Psi\right| = 1 + e(s),\qquad \ 
  \left|\partial_s \Psi_2\right|^2 + |\partial_s \Psi_3|^2 \le Cl^{-1}.
\end{gather*}
To finish the proof in the case $w=1$, we get the estimates on $\grad^2 \Psi$ in the same way as for the first derivatives of $\Psi$:
\begin{gather*}
|D^2 \Psi_1| \le C  \\ 
|\partial_{tt} \Psi_2| \le Cl,\quad \left| \partial_{st} \Psi_2 \right| \le C, \quad \left| \partial_{ss} \Psi_2 \right| \le Cl^{-1}\\
|\partial_{tt} \Psi_3| \le Cl^{1/2},\quad \left| \partial_{st} \Psi_3 \right| \le Cl^{-1/2}, \quad \left| \partial_{ss} \Psi_3 \right| \le Cl^{-3/2}.
\end{gather*}

It remains to show the lemma for general $w$, i.e. we need to define $\Psi$ on $B = (0,l)\times(0,w)$. 
Let $\widetilde \Psi$ come from the proof of this lemma for the case $w=1$ ($\widetilde \Psi$ is defined on $(0,l)\times(0,1)$).
Then we simply set
\begin{equation}\nonumber
 \Psi(s,t) = \left( F(s), w \widetilde \Psi_2(s,t/w), w \widetilde \Psi_3(s,t/w) \right).
\end{equation}
Now it is an easy calculation to show that $\Psi$ satisfies all the estimates.
\end{proof}

\begin{remark*}
Later we will use Lemma~\ref{lm-gamma2} to create a test function for the upper bound for $E_h$. The definition of the function $F$ from Lemma~\ref{lm-gamma2} will be based on $v$ (the solution of the relaxed problem~\eqref{eq13}), and $1+e$ will be the corresponding natural width. Let $M(s) =  F'(s) e_1 \otimes e_1 + (1+e(s)) e_2 \otimes e_2  $ be a $3 \times 2$ matrix. Since $1+e(s)$ is the natural width, we have that $D\w(M(s)) = a(s) e_1 \otimes e_1$ for some scalar function $a(s)$. By definition we know $M(s)_{11} = \partial_s\Psi_1(s)$. Then the boundedness of $D^2\w$ (see~\eqref{eq032}) implies 
\begin{align*}
\int_B \w(\grad \Psi) - \int_B \w (M(s)) &\le \ 
 C \int_B \left| \left( \partial_s \Psi, \partial_t \Psi\right) - \left(F'(s) e_1 \otimes e_1 + \partial_t\Psi \otimes e_2\right) \right|^2 \\ &= \int_B |\partial_s \Psi_2|^2 + |\partial_s \Psi_3|^2,
\end{align*}
where we used that $|\partial_t\Psi| = 1+e$ (see Lemma~\ref{lm-gamma2}) and the rotational invariance of $\w$. Using Lemma~\ref{lm-gamma2} we obtain
\begin{equation}\nonumber
 \int_B \w(\grad \Psi) + h^2\left| \grad^2 \Psi\right|^2 - \int_B \w (M(s)) \le C\left( w^3 + h^2\left[w^3l^{-2} + w^{-1}l^2\right]\right).
\end{equation}

As already mentioned, we will later set $F := v$, the solution of the relaxed problem, and $e := \epse$, the excess arclength, so that $\int \w(M)$ is the energy of the relaxed solution. This remark will be then used to compare the elastic energy of the constructed deformation with the energy of the relaxed solution (which is $\mathcal{E}_0$).
\end{remark*}

\begin{remark*}
Since $\Psi$ is periodic in $t$, we can assume it is defined in an infinite strip $(0,l) \times \R$.
\end{remark*}

\begin{remark}\label{rmk2}
 In Lemma~\ref{lm-gamma2} we have estimated the size of $\grad^2 \Psi$, not only $\grad^2 \Psi_3$ (in fact, the third component of $\Psi$ was the most oscillatory, and so it is larger than other two). We will use this fact later in the proof of the upper bound in the general three-dimensional setting.
\end{remark}

By~\eqref{eq-th2} and smoothness of $v$ we can choose a small $\delta > 0$ and constants $0 \le c_2 < c_1$ such that $-c_1 \le \partial_r\eps \le -c_2$ in the interval $(L-\delta,L)$. We define a deformation $u_h$ by changing $\bar u_h$ (defined in~\eqref{eq-naive}) in the region $L-\delta < r < L$. The idea is to create a cascade of wrinkles by superimposing wrinkles coming from Lemma~\ref{lm-gamma2} in smaller and smaller rectangles as we approach $r = L$. We define (for a non-negative integer $n$): 
\begin{gather}\nonumber
 \qquad I_n := (L-\delta 4^{-n},L-\delta4^{-(n+1)}),\\ \nonumber
 a_n := L-\delta4^{-n}, \qquad l_n := |I_n|, \qquad w_n := \frac{2\pi}{k} 2^{-n},
\end{gather}
and for $r \in I_n$ we set:
\begin{equation}\nonumber
 u_h(r,\theta) = v(r)\hat r + \Psi_2\hat \theta + v(r) \Psi_3\hat x_3
\end{equation}
where $\Psi$ comes from Lemma~\ref{lm-gamma2} applied to the rectangle $I_n \times (0,w_n)$ with
\begin{gather*}
 F(s) := v(s-a_n), \qquad e(s) := \eps(s-a_n), \qquad \
 w := w_n, \qquad l := l_n.
\end{gather*}
We do this construction in the region $U_N := \bigcup_{n=0}^N I_n \times (0,2\pi)$ with $N = -1/2 \log_2(h)$ ($N$ is chosen such that $2^{2N} = h^{-1}$). Since the doubling of a period defined in Lemma~\ref{lm-gamma2} happens strictly inside the given interval,
the first and second derivatives of $u_h$ are continuous at each $a_n$ (i.e. there are no jumps in the first derivative of $u_h$ between $I_n$ and $I_{n+1}$).

To finish we just need to define $u_h$ in a region close to $r=L$. Consider the strip $(L-\delta 4^{-N},L)$ (observe that it includes $I_N$). By Lemma~\ref{lm-gamma1} the amplitude of $u_{h,3}$ at $a_N$ is of order $w_N \eps^{1/2}(a_N) \approx \delta^{1/2} h^{3/2}$.
The length of the interval $I_N$ is $\delta\left(4^{-N} - 4^{-(N+1)}\right) = 3/4\cdot \delta h$. 
We multiply $u_h$ by a smooth cut-off function in $I_N$ to bring the out-of-plane displacement of $u_h$ to zero at $a_{N+1}$.
Since the length of the interval $I_N$ is of order $h$ and the value of $\grad u_h$ is bounded in that interval, the membrane energy is bounded by $Ch$ in this interval. The second derivative of the new deformation in this region is at most of order $h^{-1/2}$, and so the bending energy in this region is less than $Ch^2 * (h^{-1/2})^2 * h = Ch^2$. 
Overall we obtain:
\begin{align*}
 E_h(u_h) - E_0(u_0) &\le Ch + C\sum_{n=0}^N 2^{n}k \left[ w_n^3 + h^2\left( l_n^{-2}w_n^3 + w_n^{-1}l_n^2\right) \right] \\
 &\le C h \left(1 + h^22^{2N}\right) \le Ch.
\end{align*}

\subsection{The lower bound in the two-dimensional setting}\label{sect-lb2d}

In this section we want to prove the lower bound
\begin{equation}\label{eq76}
 \min_{u \in W^{2,2}(\Omega)} E_h(u) \ge \mathcal{E}_0 + ch
\end{equation}
for some $c > 0$ independent of $h$. Our argument uses the convexity of the relaxed problem; we shall have to work a lot because the relaxed problem is not strictly convex. We will proceed by contradiction, assuming there is a deformation $u$ with energy very close to the energy $\mathcal{E}_0$ of the relaxed solution.  After obtaining a bound on the out-of-plane displacement $u_3$ we use interpolation to show smallness of $\grad u_3$. This allows us to project the deformation $u$ into the $x$-$y$ plane without altering its energy too much (i.e. we obtain a planar deformation with energy close to $\mathcal{E}_0$). Finally, we conclude the proof by showing that it is not possible to have a planar deformations with energy close to $\mathcal{E}_0$. 

The results in the first part of this section will be useful also later for the proof of the three-dimensional case; therefore we prove them assuming only $1 < p \le 2$ in~\eqref{eq032}. On the other hand, it is convenient to assume $p=2$ in the interpolation argument used later in this section, and so from that point on we assume $p=2$.

We define
\begin{equation}\label{gp}
 g_p(t) := \begin{cases} \frac{t^2}{2}, &\mathrm{ if }\ 0 \le t \le 1, \\
								\frac{t^p}{p} + \frac{1}{2} - \frac{1}{p}, & \mathrm { if }\ t > 1, \\
           \end{cases}
\end{equation}
for some $ 1 < p \le 2$. We observe that the function $g_p$ is monotone, convex, and $C^1$. Since it also satisfies $g_p(2t) \le 4 g_p(t)$, convexity of $g_p$ implies
\begin{equation}\label{gp2}
 g_p(a+b) \le 2(g_p(a) + g_p(b)).
\end{equation}

The proof of the lower bound is divided into six steps:

{\bf\noindent Step 1:} To proceed by contradiction, we assume that for any small $0< \delta \le 1$\footnotedva{added $\delta \le 1$ in step 1} there exists a sequence of functions $u_h$ s.t.
\begin{equation}\nonumber
 E_h(u_h) \le \mathcal{E}_0 + \delta h.
\end{equation}
This is equivalent to 
\begin{equation}\label{eq005}
 E_h(u_h) - E_0(u_0) \le \delta h,
\end{equation}
hence using $\wr \le \w$ and definition of $E_0$ we immediately obtain
\begin{equation}\nonumber
 \int_\Omega \wr(\grad u_h) - \wr(\grad u_0) \dxy + B(u_h - u_0) = E_0(u_h) - E_0(u_0) \le \delta h.
\end{equation}
Since $u_0$ is the minimizer of the relaxed energy $E_0$, it has to satisfy Euler-Lagrange equation
\begin{equation}\nonumber
\int_\Omega D\wr(\grad u_0) : \grad\varphi \dxy + B(\varphi) = 0
\end{equation}
for any test function $\varphi \in W^{1,p}(\Omega,\R^3)$. Since $D\wr(\grad u_0)$ is bounded, we can easily take test functions in $W^{1,1}(\Omega,\R^3)$. Using the relation for $\varphi := u_0 - u_h$ yields
\begin{equation}\label{eq84}
 \int_\Omega \wr(\grad u_h) - \wr(\grad u_0) - D\wr(\grad u_0) : (\grad u_h - \grad u_0) \dxy \le \delta h.
\end{equation}

{\noindent\bf Step 2:} We would like to obtain a pointwise lower bound on the integrand of the last relation.

\begin{lemma}\label{lm1}
Assume~\eqref{eq032} holds. Let $F \in \R^{3\times2}$ be an orthogonal matrix with singular values $\lambda_1 > 1$, $\lambda_2 < \lambda_1$, and $n$ be the right singular vector corresponding to $\lambda_1$ (i.e. $F^TF n = \lambda_1^2\, n$). If there exist $\kappa > 0$ and an open neighborhood $\mathcal{U}$ of $(\lambda_1,\lambda_2)$ such that 
\begin{equation}\label{eq85}
 D^2 f_r(\sigma_1,\sigma_2) \ge \kappa e_1 \otimes e_1 \ \mathrm{ for }\ (\sigma_1,\sigma_2) \in \mathcal{U},
\end{equation}
then there exists $c_0 > 0$ (depending only on $\lambda_1,\lambda_2,\kappa,\mathcal{U}$, and growth of $W$) such that for any $G \in \R^{3\times 2}$ 
\begin{equation}\label{eq86}
 \wr(G) - \wr(F) - D\wr(F):(G-F) \ge c_0g_p\left(|(G-F)\cdot n|\right).
\end{equation}
\end{lemma}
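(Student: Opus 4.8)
I would read \eqref{eq86} as a quantitative strict-convexity (Bregman-divergence) estimate for $\wr$ in the single matrix direction associated to $n$. Since $\wr$ is convex in $M$ and $D\wr(F)$ is a (sub)gradient, the left-hand side of \eqref{eq86} is automatically nonnegative, so the only content is the lower bound $c_0\,g_p(|(G-F)n|)$. Write $Z:=G-F$. The whole argument rests on two structural facts. First, $\wr(M)=f_r(\lambda_1(M),\lambda_2(M))\ge\psi(\lambda_1(M))$, where $\psi(\lambda):=\min_{t>0}f_r(\lambda,t)=f(\lambda,\o(\lambda))$; by \eqref{eq032}, \eqref{eq027}, \eqref{eq035a} the function $\psi$ is $C^1$, convex after extension by $0$ on $(-\infty,1]$, strictly convex on $(1,\infty)$ with $\psi'=H:=\partial_1 f(\cdot,\o(\cdot))$ strictly increasing, and has $p$-growth $\psi(\lambda)\ge C_0\lambda^p-C_1$, while the hypothesis \eqref{eq85} forces $\psi''\ge\kappa$ near $\lambda_1$. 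Second, $\lambda_1(M)=\max_{|e|=1}|Me|$ is a convex function of $M$ with $\lambda_1(M)\ge|Mn|$, and since $n$ is the top right singular vector of $F$ we have $Fn=\lambda_1 a$ ($a$ the corresponding left singular vector), hence $\lambda_1(G)\ge|Gn|=|\lambda_1 a+Zn|$.

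The core of the proof is a reduction to a one-dimensional inequality, carried out first when $\lambda_2<\o(\lambda_1)$ (so $F$ lies in the ``relaxed direction''). There $\wr$ is flat in the hoop direction near $F$, so $D\wr(F)=H(\lambda_1)\,a\otimes n$ has no $m$-component, and using the first structural fact together with monotonicity of $\psi$ and its supporting line at $\lambda_1$,
\begin{equation*}
 \wr(G)-\wr(F)-D\wr(F):Z \ \ge\ \psi\!\left(|\lambda_1 a+Zn|\right)-\psi(\lambda_1)-H(\lambda_1)\langle a,Zn\rangle \ =:\ \Phi(Zn),
\end{equation*}
a quantity that depends on $Zn\in\R^3$ only through $\alpha:=\langle a,Zn\rangle$ and $\tau:=|Zn-\alpha a|$, since $|\lambda_1 a+Zn|=\sqrt{(\lambda_1+\alpha)^2+\tau^2}$. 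It then suffices to show $\Phi(v)\ge c_0\,g_p(|v|)$ for every $v\in\R^3$, which I would split into three regimes. For $|v|$ small: the elementary bound $\sqrt{(\lambda_1+\alpha)^2+\tau^2}-(\lambda_1+\alpha)\ge \tau^2/(2\lambda_1+1)$ together with $\psi'\ge H_0'>0$ near $\lambda_1$ produces a $\tau^2$ term, and $\psi''\ge\kappa$ near $\lambda_1$ a $c\alpha^2$ term, giving $\Phi\ge c'(\alpha^2+\tau^2)=2c'\,g_p(|v|)$ because $g_p(t)=t^2/2$ for $t\le 1$. For $|v|$ large: $|\lambda_1 a+v|\ge|v|-\lambda_1$ and the $p$-growth of $\psi$ give $\Phi\ge C_0 2^{-p}|v|^p-H(\lambda_1)|v|-C_1-\psi(\lambda_1)$, which dominates $g_p(|v|)$ since $p>1$. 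For intermediate $|v|$: $\Phi$ is continuous and, because $\psi$ is strictly convex on $(1,\infty)$ (so $s\mapsto\psi(s)-\psi(\lambda_1)-H(\lambda_1)(s-\lambda_1)$ has a strict minimum at $s=\lambda_1$) and $|\lambda_1 a+v|\ge\lambda_1+\alpha$, one checks $\Phi(v)>0$ for $v\ne 0$, so compactness of the annulus gives $\Phi\ge\eta>0$ there and hence $\Phi\ge(\eta/g_p(R_0))\,g_p(|v|)$. Combining the three regimes yields \eqref{eq86} in this case.

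For the remaining case $\lambda_2\ge\o(\lambda_1)$ one has $\wr=\w=f(\lambda_1,\lambda_2)$ near $F$ and $D\wr(F)=\partial_1 f\,a\otimes n+\partial_2 f\,b\otimes m$, so the telescoping above acquires an extra term $-\partial_2 f\,\langle b,Zm\rangle$; I would dispose of it by a case split keyed to the sizes of $|Zn|$ and $|Zm|$. If $|Zn|$ is large, the growth of $\psi(|\lambda_1 a+Zn|)\le\wr(G)$ already beats $D\wr(F):Z\le \partial_1 f\,|Zn|+\partial_2 f\,|Zm|$ when $|Zm|$ is comparable or smaller; if $|Zm|$ is large, $\lambda_2(G)$ — hence $\wr(G)$, via the $p$-growth lower bound in \eqref{eq032} — is large enough to absorb $\partial_2 f\,|Zm|$ because $p>1$. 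In either subcase the left side of \eqref{eq86} is bounded below by a positive constant while $g_p(|Zn|)$ is bounded, so matters reduce to $Z$ in a compact set. There, for $Z$ small, $\wr$ is $C^2$ at $F$ and a computation of the Hessian of the isotropic function $\wr$ shows $D^2\wr(F)[Z,Z]\ge\kappa\langle a,Zn\rangle^2$ plus nonnegative terms with strictly positive coefficients in the remaining components of $Zn$ (positivity using $\lambda_1>\lambda_2$, the ordered-force condition \eqref{eq034}, and $\partial_2 f>0$ for $\lambda_2>\o(\lambda_1)$), whence $\wr(G)-\wr(F)-D\wr(F):Z\ge c\,|Zn|^2$ after integrating along the segment; the remaining non-small but compact range is closed by continuity and compactness, using that on it the left side of \eqref{eq86} vanishes only when $Zn=0$ (again from the $C^2$ computation and strict convexity of $\psi$ on $(1,\infty)$).

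The step I expect to be the main obstacle is precisely this globalization: \eqref{eq85} is only a local second-order bound, $\wr$ is genuinely non-strictly-convex (flat in the hoop direction), and $G$ — hence $Z$ — is unconstrained, so neither a plain Taylor expansion (which needs $F+tZ$ to stay in the $C^2$ region and fails for large $|Zm|$) nor a naive convexity-along-the-segment argument suffices by itself. The device that rescues the argument, and the reason the conclusion is stated with the subadditive function $g_p$ and with the partial seminorm $|(G-F)n|$ rather than $|G-F|$, is the pointwise bound $\wr(G)\ge\psi(\lambda_1(G))\ge\psi(|Gn|)$ combined with the superlinear ($p>1$) growth of $\psi$: this lets the excess in the top singular value swallow every unbounded contribution and collapses the problem to the one-dimensional convexity of $\psi$, with the genuinely two-dimensional input \eqref{eq034} entering only to keep the frame-rotation part of $D^2\wr(F)$ nonnegative in the non-relaxed case.
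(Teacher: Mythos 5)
Your strategy is genuinely different from the paper's, and it is worth comparing the two. The paper's proof hinges on von Neumann's trace inequality (stated as Lemma~\ref{lm_neumann}): after reducing to $|G|\le M$ and rewriting \eqref{eq86} as an inequality between a function of the singular values of $G$ (the left side) and a quantity involving individual matrix entries (the right side), the paper observes that the right side is maximized, over all $G$ with fixed singular values $\sigma_1\ge\sigma_2$, at the diagonal matrix $G_0=\mathrm{diag}(\sigma_1,\sigma_2)$. This collapses the matrix inequality to a scalar one in $(\sigma_1,\sigma_2)$, namely $f_r(\sigma_1,\sigma_2)-f_r(\lambda_1,\lambda_2)-Df_r(\lambda_1,\lambda_2)\cdot(\sigma-\lambda)\ge c_0(\sigma_1-\lambda_1)^2$, which is then an immediate integral-form Taylor estimate using \eqref{eq85} along the segment from $(\lambda_1,\lambda_2)$ to $(\sigma_1,\sigma_2)$. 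Crucially, this single argument handles both the relaxed ($\alpha_2=0$) and non-relaxed ($\alpha_2>0$) cases at once, and never requires computing the Hessian of $\wr$ as a matrix function. You, by contrast, replace the von Neumann step with the scalar lower bound $\wr(G)\ge\psi(\lambda_1(G))\ge\psi(|Gn|)$ and carry out the analysis of $\Phi$; this is an attractive alternative reduction, and your treatment of the case $\lambda_2<\o(\lambda_1)$ is essentially complete and correct.

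The weak point is the second case $\lambda_2\ge\o(\lambda_1)$, where your $\psi$-bound drops the $\partial_2 f\langle b,Zm\rangle$ term and you fall back on a sketched Hessian computation plus compactness. That sketch has a real gap: the coefficient of the symmetric in-plane off-diagonal perturbation $(Z_{12}+Z_{21})^2$ in $D^2\wr(F)$ is $\frac{\partial_1 f-\partial_2 f}{2(\lambda_1-\lambda_2)}$, and the ordered-force condition \eqref{eq034} only guarantees this is $\ge 0$, not strictly positive. Without strict inequality, a perturbation $Z$ with $Z_{12}=Z_{21}=t$ and all other entries zero has $|Zn|=|t|$ but makes the Bregman divergence $O(t^4)$, so the pointwise claim $D^2\wr(F)[Z,Z]\ge c|Zn|^2$ that your small-$Z$ analysis rests on does not follow from the hypotheses as you invoke them. (The paper's route has an analogous delicacy: the case split there requires $\alpha_1-2c_0\lambda_1\ge\alpha_2$, i.e.\ $\alpha_1>\alpha_2$ for $c_0$ small, which again is not literally contained in \eqref{eq034}; but at least the paper reduces the matter cleanly to a quantitative 1D inequality rather than a multi-direction Hessian estimate.) You should either add the strict ordered-force hypothesis explicitly in case 2, or, better, simply learn the von Neumann reduction, which sidesteps the need to diagonalize $D^2\wr$ altogether and makes the large/intermediate/small case analysis you attempt in case 2 unnecessary.
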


\begin{proof}
We first prove the statement for large strains (i.e. if $G$ is large). Let $\sigma_1 \ge \sigma_2$ are the singular values of $G$. We observe that the coercivity of $\w(G)$ (see~\eqref{eq032}) implies that $\wr(G)$ has also $p$-th power growth for large matrices. Indeed, if $G$ has only tensile stress, we have $\w(G) = \wr(G)$, i.e. $\wr(G)$ has the same growth as $\w$. Otherwise, we know from the ordered force inequality~\eqref{c2} that
\begin{equation}\nonumber
 f_r(\sigma_1,\sigma_2) \ge f_r(\sigma_1 - 1,1) \ge c (|G|^p - 1),
\end{equation}
where the last inequality follows from the coercivity of $\w$ and from the fact that strains $(\lambda_1-1,1)$ produce only tensile stresses (and so $\w = \wr$ in this case). Thus the LHS of~\eqref{eq86} grows at least like $|G|^p$. Since the RHS has at most such growth, the conclusion follows.

It remains to prove the statement in the case \[|G| \le M\] for some $M$. First, we observe that in this case $|(G-F)\cdot n|^2$ and $|(G-F)\cdot n|^p$ are comparable, and so we can replace $g_p$ in~\eqref{eq86} with a quadratic function, i.e. we need to show
\begin{equation}\label{eq86+}
 \wr(G) - \wr(F) - D\wr(F):(G-F) \ge c_0|(G-F)\cdot n|^2
\end{equation}
for any $|G| \le M$ (possibly with a different $c_0$ than in~\eqref{eq86}). We start by computing $D\wr(F)$. Since $\wr$ is rotationally invariant, we can assume without loss of generality that 
\begin{equation}\nonumber
 F = \begin{pmatrix} \lambda_1 & 0 \\ 0 & \lambda_2 \\ 0 & 0\end{pmatrix}, n = \begin{pmatrix} 1 \\ 0 \end{pmatrix}.
\end{equation}
Then a simple calculation reveals that
\begin{equation}\nonumber
 D\wr (F) = \begin{pmatrix} \alpha_1 & 0 \\ 0 & \alpha_2 \\ 0 & 0\end{pmatrix},
\end{equation}
where $\alpha_1 = \partial_1 f_r(\lambda_1,\lambda_2)$ and $\alpha_2 = \partial_2 f_r(\lambda_1, \lambda_2)$. We observe that $\alpha_1 > 0$, $\alpha_2 \ge 0$, and that $\alpha_2 = 0$ iff $\lambda_2 \le \o (\lambda_1)$.
We rewrite~\eqref{eq86+}:
\begin{align}\label{eq86a}
 \wr(G) - \wr(F) + D\wr(F):F &\ge D\wr(F):G + c_0|(G-F)\cdot n|^2 \\
 &= \alpha_1G_{11} + \alpha_2G_{22} + c_0\left( \left(G_{11} - \lambda_1\right)^2 + G_{21}^2 + G_{31}^2 \right) \nonumber \\
 &= \left(\alpha_1 - 2c_0\lambda_1\right) G_{11} + \alpha_2G_{22} + c_0\left( G_{11}^2 + G_{21}^2 + G_{31}^2 \right) + c_0\lambda_1^2 \nonumber
\end{align}

We choose $c_0 > 0$ small enough so that $\alpha_1 - 2c_0\lambda_1 > 0$. The LHS of the inequality depends only on singular values of $G$. 
Hence, we can prove~\eqref{eq86a} by maximizing the RHS among all matrices $G$ with given singular values $\sigma_1 \ge \sigma_2$.  
We give the argument assuming that $\alpha_1 - 2c_0\lambda_1 \ge \alpha_2$ (the proof in the case $\alpha_1 - 2c_0\lambda_1 \le \alpha_2$ is analogous).

To prove~\eqref{eq86a} we will use the following lemma due to von Neumann~(see, e.g, \cite{bib-mirsky}):

\begin{lemma}\label{lm_neumann}
 If $A,B$ are $n \times n$ matrices with singular values
\[ \sigma_1 \ge \dots \ge \sigma_n, \quad \rho_1 \ge \dots \ge \rho_n \]
respectively, then
\[ |\mathrm{tr}(AB)| \le \sum_{r=1}^n \sigma_r \rho_r. \]
\end{lemma}

We shall apply the lemma to find the maximal possible value of the RHS~\eqref{eq86a} among all matrices $G$ with singular values $\sigma_1 \ge \sigma_2$. First we set $A := (G|\mathbf{0})$ (the $3 \times 3$ matrix with first two columns identical with $G$ and the third column equal $0$) and $B = \mathrm{diag}(\alpha_1-2c_0\lambda_1,\alpha_2,0)$. Then the lemma gives
\begin{equation}\nonumber
 \left(\alpha_1-2c_0\lambda_1\right) G_{11} + \alpha_2 G_{22} \le \left(\alpha_1-2c_0\lambda_1\right) \sigma_1 + \alpha_2 \sigma_2.
\end{equation}
For $A := G^TG$ and $B = \mathrm{diag}(1,0)$ the lemma implies 
\begin{equation}\nonumber
 G_{11}^2 + G_{21}^2 + G_{31}^2 \le \sigma_1^2.  
\end{equation}
Together we see that the RHS of~\eqref{eq86a} is at most $(\alpha_1 - 2c_0\lambda_1)\sigma_1 + \alpha_2\sigma_2 + c_0\sigma_1^2 + c_0\lambda_1^2$. To see that this bound is optimal, we use
$G_0 := \begin{pmatrix} \sigma_1 & 0 \\ 0 & \sigma_2 \\ 0 & 0 \end{pmatrix}$ as $G$. We got that the RHS of~\eqref{eq86a} is maximal for the choice $G = G_0$, and so we need to prove~\eqref{eq86+} only for $G = G_0$ for any $\sigma_1 \ge \sigma_2 \ge 0$, i.e. to show
\begin{equation}\label{eqlm1}
 f_r\left(\sigma_1,\sigma_2\right) - f_r(\lambda_1,\lambda_2) - Df_r(\lambda_1,\lambda_2)(\sigma_1 - \lambda_1,\sigma_2 -\lambda_2) \ge c_0(\sigma_1 - \lambda_1)^2.
\end{equation}
Using Taylor's expansion of $f_r$ at the point $(\lambda_1,\lambda_2)$, and assumptions~\eqref{eq032} and~\eqref{eq85} we see that
\begin{align*}
 f_r\left(\sigma_1,\sigma_2\right) - f_r(\lambda_1,\lambda_2) - Df_r(\lambda_1,\lambda_2)(\sigma_1 - \lambda_1,\sigma_2 -\lambda_2) &\ge
 \int_{(\lambda_1,\lambda_2)}^{(\sigma_1,\sigma_2)} \left< D^2 f_r(\xi)\left( \sigma - \xi \right),\sigma - \xi \right> \mathrm{d}\xi  \\
 &\ge c_1\kappa (\sigma_1 - \lambda_1 )^2,
\end{align*}
where the last inequality follows from the fact that $D^2 f_r \ge \kappa e_1 \otimes e_1$ in a non-trivial part of the segment between $(\lambda_1,\lambda_2)$ and $(\sigma_1,\sigma_2)$ (remember that $G$ is bounded), and $D^2 f_r \ge 0$ otherwise. This completes the proof of the lemma since we showed that~\eqref{eqlm1} holds with $c_0 = c_1\kappa$.\footnotedva{new last sentence - below~\eqref{eqlm1}}
\end{proof}


\begin{lemma}\label{lm2}
Assume~\eqref{eq032} holds. 
Let $F \in \R^{3\times2}$ be an orthogonal matrix with singular values $1 < \lambda_2 \le \lambda_1 \le K$. 
If there exist $\kappa > 0$ and an open neighborhood $\mathcal{U}$ of $(\lambda_1,\lambda_2)$ such that 
\begin{equation}\nonumber
 D^2 f_r(\sigma_1,\sigma_2) \ge \kappa I \ \mathrm{ for }\ (\sigma_1,\sigma_2) \in \mathcal{U},
\end{equation}
then there exists $c_0 > 0$ (depending only on $\lambda_1,\lambda_2,\kappa,\mathcal{U},K$, and growth of $W$) such that for any $G \in \R^{3\times 2}$ 
\begin{equation}\label{eq106}
 \wr(G) - \wr(F) - D\wr(F):(G-F) \ge c_0g_p\left(|G-F|\right).
\end{equation}
\end{lemma}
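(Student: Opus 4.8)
The plan is to deduce this statement from Lemma \ref{lm1} together with the rotational invariance of $\wr$. The point of Lemma \ref{lm1} is that one already controls the part of $G-F$ in the direction $n$ of the dominant singular value, with the quadratic weight $g_p(|(G-F)\cdot n|)$; what remains is to upgrade this to control on the full matrix $|G-F|$, which is exactly what the hypothesis $D^2f_r \ge \kappa I$ (rather than just $\kappa\, e_1\otimes e_1$) should buy us. First I would handle the large-strain regime $|G|\ge M$ precisely as in Lemma \ref{lm1}: the ordered-force inequality \eqref{c2} and the $p$-th power coercivity \eqref{eq032} give that the left-hand side of \eqref{eq106} grows like $|G|^p$, which dominates $g_p(|G-F|)$ since $F$ is fixed; so \eqref{eq106} holds there for a suitable $c_0$.

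For the bounded regime $|G|\le M$, the strategy is again to reduce, via rotational invariance, to diagonal competitors. As in the proof of Lemma \ref{lm1}, write $F$ in diagonal form and $D\wr(F)=\mathrm{diag}(\alpha_1,\alpha_2)$ (viewed as a $3\times 2$ matrix) with $\alpha_1,\alpha_2>0$ — note that now $\lambda_2>\o(\lambda_1)$ is forced by $\lambda_2>1$ together with \eqref{eq027} and the fact that $\o(\lambda_1)\le\o(1^+)\le 1<\lambda_2$, so $\alpha_2=\partial_2 f_r(\lambda_1,\lambda_2)>0$. The inequality to prove, after expanding the quadratic $c_0|G-F|^2$, becomes a statement of the form ``$\wr(G)-(\text{linear in }G)+c_0|G|^2 \ge \text{const}$,'' whose left-hand side depends on $G$ only through its singular values $\sigma_1\ge\sigma_2$ plus a term $c_0|G|^2=c_0(\sigma_1^2+\sigma_2^2)$ and the linear term $D\wr(F):G$. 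By von Neumann's trace inequality (Lemma \ref{lm_neumann}), the linear term $\alpha_1 G_{11}+\alpha_2 G_{22}$ is maximized, among matrices with prescribed singular values, by the diagonal matrix $G_0=\mathrm{diag}(\sigma_1,\sigma_2)$; here one must be a little careful because in Lemma \ref{lm1} a shift $\alpha_1\mapsto\alpha_1-2c_0\lambda_1$ appeared, so I would fix $c_0$ small enough that both $\alpha_1-2c_0\lambda_1>0$ and $\alpha_2-2c_0\lambda_2>0$, and then the coefficients of $G_{11}$ and $G_{22}$ are positive and von Neumann applies to the $3\times 3$ matrix $(G|\mathbf 0)$ against $B=\mathrm{diag}(\alpha_1-2c_0\lambda_1,\alpha_2-2c_0\lambda_2,0)$. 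The quadratic term $c_0|G|^2$ is unchanged under replacing $G$ by $G_0$. Hence it suffices to prove
\[
 f_r(\sigma_1,\sigma_2)-f_r(\lambda_1,\lambda_2)-Df_r(\lambda_1,\lambda_2)\cdot(\sigma_1-\lambda_1,\sigma_2-\lambda_2) \ge c_0\big((\sigma_1-\lambda_1)^2+(\sigma_2-\lambda_2)^2\big)
\]
for all $\sigma_1\ge\sigma_2\ge 0$ with $\sigma_1,\sigma_2\le M'$ (with $M'$ depending on $M$), which by $|G-F|^2 = (\sigma_1-\lambda_1)^2 + (\sigma_2-\lambda_2)^2$ when both are diagonal is the scalar $2D$ version of \eqref{eq106}.

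This last scalar inequality I would prove by Taylor expansion of $f_r$ along the segment from $(\lambda_1,\lambda_2)$ to $(\sigma_1,\sigma_2)$: convexity of $f_r$ gives $D^2 f_r\ge 0$ everywhere, and on the nontrivial portion of the segment lying inside the neighborhood $\mathcal{U}$ one has $D^2 f_r\ge\kappa I$, yielding a lower bound $c_1\kappa\,|(\sigma_1,\sigma_2)-(\lambda_1,\lambda_2)|^2$ for points in a fixed bounded region, with $c_1$ depending only on $\mathcal{U}$, $(\lambda_1,\lambda_2)$ and the a priori bound $M'$. Taking $c_0=c_1\kappa$ (after also enforcing the smallness constraints above) completes the bounded case, and together with the large-strain case this proves \eqref{eq106}. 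The main obstacle I anticipate is purely bookkeeping: making sure the single constant $c_0$ can be chosen to satisfy simultaneously the positivity conditions $\alpha_i-2c_0\lambda_i>0$ needed for the von Neumann reduction and the bound $c_0\le c_1\kappa$ from the Taylor step — this is fine because all of these are satisfied for all sufficiently small $c_0$. A minor point to check is the direction of the reduction when $\alpha_1-2c_0\lambda_1 \le \alpha_2-2c_0\lambda_2$, which is handled symmetrically (interchanging the roles of the two singular values), exactly as the parenthetical remark in the proof of Lemma \ref{lm1} indicates.
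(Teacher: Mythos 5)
Your proof is correct, but it takes a genuinely different route from the paper's proof of this lemma. The paper does \emph{not} repeat the von Neumann reduction here; instead it works directly in matrix space, writing the left side as the Taylor remainder
$\int_0^1 (1-t)\,\langle D^2\wr(F+t(G-F))(G-F),\,G-F\rangle\,\mathrm{d}t$
and arguing that along the portion of the segment whose singular values lie in $\mathcal{U}$ the integrand is bounded below by $\kappa(1-t)|G-F|^2$, while elsewhere it is nonnegative by convexity. This is shorter, but it tacitly requires that $D^2\wr(M)\ge \kappa' I$ as a quadratic form on $\R^{3\times 2}$ when the singular values of $M$ lie in $\mathcal{U}$, which does not follow from $D^2 f_r\ge\kappa I$ alone: the off-diagonal (shear) and out-of-plane blocks of the matrix Hessian of an isotropic density involve the first derivatives, e.g.\ $(\partial_1 f_r-\partial_2 f_r)/(\mu_1-\mu_2)$ and $\partial_\alpha f_r/\mu_\alpha$, and one needs to invoke the positivity of $\partial_1 f_r,\partial_2 f_r$ and the ordered-force inequality in the tensile region to control them. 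Your route via the von Neumann reduction (mimicking Lemma~\ref{lm1}) needs only the scalar Hessian bound $D^2 f_r \ge \kappa I$ plus convexity of $f_r$, and so is in a sense more self-contained; the price is that you redo the diagonalization and competitor-maximization step. The reduction is legitimate: for fixed singular values of $G$, $\wr(G)$ and $|G|^2$ are constant while $D\wr(F):G-2c_0\langle F,G\rangle$ is maximized at the diagonal $G_0=\mathrm{diag}(\sigma_1,\sigma_2)$ (provided $c_0$ is small enough that both coefficients are nonnegative, and treating the ordering of the two coefficients symmetrically, exactly as you note), and for diagonal $G_0,F$ one has $|G_0-F|^2=(\sigma_1-\lambda_1)^2+(\sigma_2-\lambda_2)^2$, reducing to the scalar Taylor estimate. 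Both approaches prove the lemma; yours trades brevity for making explicit the reduction that the paper skips.
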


\begin{proof}
The idea of the proof is simple and resembles proof of the previous lemma. If $G$ is large, we get the statement the same way as in the previous lemma. 

Otherwise we can assume $|G| \le M$ for some (possibly large) $M$.
For any such $G$ the part of the segment connecting $F$ and $G$ (i.e. $F + t(G-F)$ for $t\in (0,1)$) which belongs to $\mathcal{U}$ will be at least $\epsilon$ part of the whole segment. 
The LHS of~\eqref{eq106} can be written as
\begin{equation}\nonumber
 \int_0^1 \left< D^2\wr(F+t(G-F))(G-F),G-F\right>(1-t) \ud t, 
\end{equation}
and the integral is at least $\kappa(1-t) |G-F|^2$ along the non-trivial part of the segment and non-negative everywhere else. Therefore~\eqref{eq106} follows.
\end{proof}

Before proceeding to step 3, we need one more lemma, that is similar in character to the preceding ones but involves $\w$ instead of $\wr$. 

\begin{lemma}\label{lm3}
Assume~\eqref{eq032} holds. Let $F \in \R^{3\times2}$ be an orthogonal matrix with singular values $\lambda_1 > 1$, $\lambda_2 \le \o (\lambda_1)$, and $n$ be the right singular vector corresponding to $\lambda_1$. If there exists $\kappa > 0$ and an open neighborhood $\mathcal{U}$ of $(\lambda_1,\lambda_2)$ such that
\begin{equation}\nonumber
 D^2 f_r(\sigma_1,\sigma_2) \ge \kappa e_1 \otimes e_1 \ \mathrm{ for }\ (\sigma_1,\sigma_2) \in \mathcal{U},
\end{equation}
then there exists $c_0 > 0$ such that for any $G \in \R^{3\times 2}$ 
\begin{equation}\label{eq110}
 \w(G) - \w(F_0) - D\w(F_0):(G-F) \ge c_0g_p\left(\mathrm{dist}(G,SO(3)F_0)\right),
\end{equation}
where $F_0 n = F n$ and $\wr(F)=\w(F_0)$.
\end{lemma}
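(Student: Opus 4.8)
The plan is to follow the template of Lemma~\ref{lm1} — reduce~\eqref{eq110} to a one‑variable inequality for $f$ by a singular‑value argument — and then to exploit the fact that $\w=f(\lambda_1,\lambda_2)$ is \emph{strictly} convex, in the sense $\grad^2 f>0$ (see~\eqref{eq032}), whereas the relaxed density $f_r$ is only weakly convex. This extra strict convexity in the ``width'' direction is precisely what upgrades the one‑directional estimate of Lemma~\ref{lm1} to control of the full distance to $SO(3)F_0$.

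By the rotational invariance of $\w$ and of $\dist(\cdot,SO(3)F_0)$ I would normalize to $F=\lambda_1\,e_1\otimes e_1+\lambda_2\,e_2\otimes e_2$ and $F_0=\lambda_1\,e_1\otimes e_1+\o(\lambda_1)\,e_2\otimes e_2$ (as $3\times2$ matrices), $n=e_1$. Since $\o(\lambda_1)$ minimizes $f(\lambda_1,\cdot)$ one has $\partial_2 f(\lambda_1,\o(\lambda_1))=0$, hence $\grad\w(F_0)=H(\lambda_1)\,e_1\otimes e_1$ with $H(\lambda_1):=\partial_1 f(\lambda_1,\o(\lambda_1))>0$, so $\grad\w(F_0):(G-F)=H(\lambda_1)(G_{11}-\lambda_1)$. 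Arguing as in Lemma~\ref{lm1} via von Neumann's inequality (Lemma~\ref{lm_neumann}), among all $G$ with prescribed singular values $\mu_1\ge\mu_2\ge0$ the left side of~\eqref{eq110} is minimized by $G=\mu_1\,e_1\otimes e_1+\mu_2\,e_2\otimes e_2$, while the right side depends only on $\mu_1,\mu_2$; indeed $\dist^2(G,SO(3)F_0)=(\mu_1-\lambda_1)^2+(\mu_2-\o(\lambda_1))^2$ (using $\o(\lambda_1)<\lambda_1$, which holds by~\eqref{eq027}). Thus it suffices to prove, for all $\mu_1\ge\mu_2\ge0$,
\begin{equation}\nonumber
 f(\mu_1,\mu_2)-f(\lambda_1,\o(\lambda_1))-H(\lambda_1)(\mu_1-\lambda_1)\ \ge\ c_0\,g_p\!\Big(\sqrt{(\mu_1-\lambda_1)^2+(\mu_2-\o(\lambda_1))^2}\,\Big).
\end{equation}
For $|(\mu_1,\mu_2)|$ large this follows from the $p$-growth in~\eqref{eq032} exactly as in Lemma~\ref{lm1} (the linear term is harmless). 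For bounded strains, where $g_p(\dist)\simeq\dist^2$, I would split the left side as $[f(\mu_1,\mu_2)-f(\mu_1,\o(\mu_1))]+[f(\mu_1,\o(\mu_1))-f(\lambda_1,\o(\lambda_1))-H(\lambda_1)(\mu_1-\lambda_1)]$: the first bracket is $\ge c(\mu_2-\o(\mu_1))^2$ because $\partial_2 f(\mu_1,\o(\mu_1))=0$ and $\partial_{22}f\ge c>0$ on compacts; the second equals $\int_{\lambda_1}^{\mu_1}\big(H(t)-H(\lambda_1)\big)\ud t$, and the hypothesis $\grad^2 f_r\ge\kappa\,e_1\otimes e_1$ near $(\lambda_1,\lambda_2)$ says exactly that $H'=\partial_{11}f_r\ge\kappa$ near $\lambda_1$ (in the relaxed region $f_r(\lambda_1,\lambda_2)=f(\lambda_1,\o(\lambda_1))$ depends only on $\lambda_1$; this is also what~\eqref{eq035} provides). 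Since $\o$ is Lipschitz near $\lambda_1$ by~\eqref{eq027}, the two brackets together dominate $c\,\dist^2$ for $(\mu_1,\mu_2)$ close to $(\lambda_1,\o(\lambda_1))$.

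For bounded strains bounded away from $(\lambda_1,\o(\lambda_1))$ it is enough, by continuity and compactness, to know the left side above is \emph{strictly positive} off $(\lambda_1,\o(\lambda_1))$. This follows from $f(\mu_1,\mu_2)\ge f(\mu_1,\o(\mu_1))$ with equality only at $\mu_2=\o(\mu_1)$ (uniqueness of the natural width, $\partial_{22}f>0$), together with the observation that $g(\mu_1):=f(\mu_1,\o(\mu_1))-H(\lambda_1)\mu_1$ has $g'(\mu_1)=H(\mu_1)-H(\lambda_1)$ and hence a strict minimum at $\mu_1=\lambda_1$, since $H$ is strictly increasing (established in the proof of Lemma~\ref{lm-pos} from~\eqref{eq035}). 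Combining the three regimes and taking $c_0$ to be the smallest of the constants obtained gives~\eqref{eq110}.

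The main obstacle is exactly this last, global, point: a local Taylor argument only yields a quadratic bound near $F_0$, whereas one genuinely needs that the supporting affine function of $\w$ at $F_0$ touches the graph of $\w$ only along the orbit $SO(3)F_0$. This is not a consequence of convexity — $\w$ is not convex — but of the monotonicity of the uniaxial force $H$, i.e.\ of~\eqref{eq035}; it is also why the conclusion must be phrased via $\dist(\cdot,SO(3)F_0)$ rather than $|\cdot-F_0|$ (note $\dist(QF_0,SO(3)F_0)=0$ for $Q\in SO(3)$, matching the vanishing of the left side of~\eqref{eq110} there). A minor technical wrinkle is that $f_r$ is only $C^{1,1}$, not $C^2$, across the free boundary $\{\lambda_2=\o(\lambda_1)\}$, so $\grad^2 f_r\ge\kappa\,e_1\otimes e_1$ must be read in the a.e.\ sense; this causes no harm because $\partial_{11}f_r$ stays bounded below on both sides near $(\lambda_1,\o(\lambda_1))$.
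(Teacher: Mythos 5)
Your reduction to diagonal $G$ via von Neumann's inequality does not work here, and this is the central gap. In Lemma~\ref{lm1} the reduction succeeds because the right-hand side $|(G-F)\cdot n|^2 = (G_{11}-\lambda_1)^2 + G_{21}^2 + G_{31}^2$ is \emph{maximized} among $G$ with fixed singular values by the aligned diagonal matrix, so decreasing the left side and increasing the right side both reduce to the diagonal case. For Lemma~\ref{lm3} the right-hand side $g_p(\dist(G,SO(3)F_0))$ is \emph{not} a function of the singular values of $G$, and in fact the aligned diagonal choice \emph{minimizes} it. Concretely, with $F_0 = \mathrm{diag}(2,1)$ (as a $3\times 2$ matrix), both $G_1 = \mathrm{diag}(3,2)$ and $G_2 = \mathrm{diag}(2,3)$ have singular values $\{3,2\}$, but $\dist^2(G_1,SO(3)F_0) = 2$ while $\dist^2(G_2,SO(3)F_0) = 4$; more generally $\dist^2(G,SO(3)F_0) = |G|^2 + |F_0|^2 - 2\sum_i\sigma_i(F_0G^T)$ and von Neumann gives $\sum_i\sigma_i(F_0G^T)\le \lambda_1\mu_1 + \o(\lambda_1)\mu_2$, with equality exactly for the aligned diagonal $G$. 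So your chain of inequalities runs $\mathrm{LHS}(G)\ge\mathrm{LHS}(G_{\mathrm{diag}})\ge c_0\,\mathrm{RHS}(G_{\mathrm{diag}})$, but you need $c_0\,\mathrm{RHS}(G)$, and $\mathrm{RHS}(G)\ge\mathrm{RHS}(G_{\mathrm{diag}})$ is the wrong direction. Establishing the inequality for diagonal $G$ (which your later steps do correctly, using $\partial_{22}f>0$ and the monotonicity of $H$) therefore does not imply it for general $G$.

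The paper handles this by a different decomposition that never reduces to a single diagonal representative. It introduces an intermediate matrix $H\in\R^{3\times 2}$ with $Hn = Gn$ and $\w(H)$ minimal among such matrices, and writes the left side of~\eqref{eq110} as $[\w(G)-\w(H)] + [\wr(H)-\wr(F_0)-D\wr(F_0):(H-F_0)]$ (using $\w(H)\ge\wr(H)$, $\w(F_0)=\wr(F_0)$, and $D\w(F_0):(G-F)=D\wr(F_0):(H-F_0)$). The first bracket is bounded below by $c\,g_p(\dist(G,SO(3)H))$ via Lemma~\ref{lm6} and Remark~\ref{rmk1} — the point being that $G$ and $H$ already agree in the $n$-direction, so this is exactly the rigidity estimate of the appendix — and the second bracket is handled by Lemma~\ref{lm1}, giving $c\,g_p(|(H-F_0)\cdot n|)$, which is then shown to dominate $c'\,g_p(\dist(H,SO(3)F_0))$ using the explicit orthogonal structure of $H$ and $F_0$. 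Combining via the $g_p$ triangle inequality~\eqref{gp2} yields~\eqref{eq110}. You would need something like this intermediate step; the singular-value reduction alone cannot see the orbit distance on the right.
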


\begin{proof}
 Let $H \in \R^{3\times 2}$ be such that $H n = G n$ and $\w(H)$ is minimal among all such $H$. We observe
\begin{gather*}
 \w(H) \ge \wr(H), \quad  \w(F_0) = \wr(F_0). 
\end{gather*}
Arguing as we did in the proof of Lemma~\ref{lm1} (i.e. using rotational invariance of $\w$ and $\wr$ to compute $D\w$ and $D\wr$) we also have
\begin{equation}\nonumber
DW(F_0):(G-F) = D\wr(F_0):(H-F_0).
\end{equation}	

We claim that 
\begin{equation}\nonumber
 \w(G) - \w(H) \ge cg_p(\dist(G,SO(3)H))
\end{equation}
follows from~\eqref{eq032}. We give a proof of this fact in the Appendix (see Lemma~\ref{lm6} and Remark~\ref{rmk1}).

Now Lemma~\ref{lm1} and the previous inequality imply that the LHS of~\eqref{eq110} is at least
\begin{multline}\label{eq111}
 \left[\w(G) - \w(H)\right] + \left[\wr(H) - \wr(F_0) - D\wr(F_0):(H-F_0)\right] \ge \\
 c_0\left( g_p \left( \mathrm{dist}(G,SO(3)H\right) + g_p \left( |(H - F_0)\cdot n|\right) \right)
\end{multline}
($c_0$ is a generic constant, i.e. it can change from line to line).

We claim that in the second term on the RHS of~\eqref{eq111} we have
\begin{equation}\nonumber
 |(H-F_0)\cdot n| \ge c\dist(H,SO(3)F_0).
\end{equation}
To prove it, we first observe that without loss of generality we can assume $n=(1,0)$. Then using~\eqref{eq007} we see that 
\begin{gather*}
 H = (v_1 | v_2) \textrm{ and } F_0 = (w_1 | w_2),
\end{gather*}
where $v_1 \perp v_2$, $|v_2| = g(|v_1|)$ and $w_1 \perp w_2$, $|w_2| = g(|w_1|)$. There clearly exists a rotation $R \in SO(3)$ such that
$v_1 || Rw_1$ and $v_2 || Rw_2$, and the vectors have the same orientation; thus
\begin{multline*}
 \dist(H,SO(3)F_0)^2 \le |v_1 - Rw_1|^2 + |v_2 - Rw_2|^2 = (|v_1| - |w_1|)^2 + (|v_2| - |w_2|)^2 \\ = (|v_1| - |w_1|)^2 + \left(g(|v_1|) - g(|w_1|)\right)^2 \le c(|v_1| - |w_1|)^2 \le c|(H-F_0)\cdot n|^2.
\end{multline*}
Hence the RHS of~\eqref{eq111} satisfies
\begin{multline}\nonumber
c_0\left( g_p \left( \mathrm{dist}(G,SO(3)H\right) + g_p \left( |(H - F_0)\cdot n|\right) \right) \\
\ge c_0\left[ g_p \left( \mathrm{dist}(G,SO(3)H\right) + g_p \left( \mathrm{dist}(H,SO(3)F_0)\right) \right] \ge c_0' g_p(\mathrm{dist}(G,SO(3)F_0)),
\end{multline}
where we used inequality~\eqref{gp2}. This completes the proof of Lemma~\ref{lm3}.
\end{proof}

We continue the proof of the lower bound~\eqref{eq76}.

{\bf\noindent Step 3:}
It is clear that there exists $\rho > 1$ such that
the region where the smaller singular value of $\grad u_0$ is at least $\rho$ is non-empty. We will denote this region $\NR$
and its complement in $\Omega$ as $\RR$. Lemmas~\ref{lm1} and~\ref{lm2} applied to~\eqref{eq84} imply 
\begin{equation}\nonumber
 \int_{\NR} g_p\left(|\grad u_h(x') - \grad u_0(x')|\right) \dxy + \int_{\RR} g_p\left(|(\grad u_h(x') - \grad u_0(x')\cdot n(x')|\right) \dxy \le C\delta h,
\end{equation}
where $C > 0$ depends only on $u_0$, $\wr$, and choice of $\rho$. We now use the Poincar\'e inequality adjusted to our setting (Lemma~\ref{lmPoincare}). We obtain
\begin{equation}\nonumber
 \int_{\NR} g_p(|u_h(x') - c_h - u_0(x')|) \dxy \le \int_{\NR} g_p(|Du(x') - Du_0(x')|) \dxy \le C\delta h
\end{equation}
for some $c_h$. 

We would like to extend the previous estimate into the whole $\Omega$. We fix a direction $\theta$ and define (using radial coordinates)
\begin{equation}\nonumber
 f(t) := u_h(t,\theta) - u_0(t,\theta),\quad t \in (R_{in},R_{out}).
\end{equation}
Let us call $M$ the radius of the boundary between $\NR$ and $\RR$, and let
\begin{equation}\nonumber
 K := \int_{R_{in}}^{R_{out}} g_p(|f'(t)|) \ud t + \int_{M}^{R_{out}} g_p(|f(t) - c_h|) \ud t.
\end{equation}
Then by the Poincar\'e inequality applied to $f$ on $(R_{in},R_{out})$ we get
\begin{equation}\nonumber
 \int_{R_{in}}^{R_{out}} g_p(|f(t) - c_\theta|) \ud t \le C \int_{R_{in}}^{R_{out}} g_p(|f'(t)|) \ud t \le CK
\end{equation}
for some $c_\theta$. We also have
\begin{multline}\nonumber
 (R_{out} - M)g_p(|c_h - c_\theta|) = \int_{M}^{R_{out}} g_p(|c_h - c_\theta|) \ud t \\ 
\le C\left( \int_{M}^{R_{out}} g_p(|f(t) - c_h|) \ud t + \int_{M}^{R_{out}} g_p(|f(t) - c_\theta|) \ud t\right) \le CK
\end{multline}
and so 
\begin{equation}\label{eq202}
 \int_{R_{in}}^{R_{out}} g_p(|f(t) - c_h|) \ud t \le C\left( \int_{R_{in}}^{R_{out}} g_p(|f(t) - c_\theta|) \ud t + \int_{R_{in}}^{R_{out}} g_p(|c_\theta - c_h|) \ud t \right) \le CK.
\end{equation}
Finally, by integrating~\eqref{eq202} in $\theta$ we obtain
\begin{equation}\label{eq-99}
 \int_{\Omega} g_p(|u_h(x') - c_h - u_0(x')|) \dxy \le C\delta h. 
\end{equation}
{\bf\noindent Step 4:}
The next step in the proof is the interpolation between $||u_{h,3}||_{L^2(\Omega)}$ and $||\grad^2 u_{h,3}||_{L^2(\Omega)}$. For that reason we assume $p=2$ (instead of a more general $1 < p \le 2$) for the rest of this section. (Since the previous lemmas will be used later (see Section~\ref{sec3d}), we proved them assuming only $1 < p \le 2$.)

When $p=2$, \eqref{eq-99} reads
\begin{equation}\nonumber
 ||u_h(x') - c_h - u_0(x')||_{L^2(\Omega)}^2 \dxy \le C\delta h. 
\end{equation}
Without loss of generality we can assume $c_h=0$, since our problem is translation invariant; in particular we have
\begin{equation}\label{interp1}
  ||u_{h,3}||_{L^2(\Omega)}^2 \le C\delta h.
\end{equation}
Since 
\begin{equation}\nonumber
 E_h(u_h) - h^2||\grad^2 u_{h,3}||_{L^2(\Omega)}^2 \ge \min_u E_0(u) = \mathcal{E}_0, 
\end{equation}
we have that
\begin{equation}\label{interp2}
 ||\grad^2 u_{h,3}||_{L^2(\Omega)}^2 \le \delta/h.
\end{equation}
Interpolating between \eqref{interp1} and \eqref{interp2} we obtain
\begin{equation}\label{outofplane}
 ||\grad u_{h,3}||_{L^2(\Omega)}^2 \le C\delta.
\end{equation}

{\bf\noindent Step 5:}
We want to estimate
\begin{equation}\nonumber
 \int_\Omega \left| \w(\grad u_h) - \w(\grad u_h^{12})\right| \dxy,
\end{equation}
where $u_h^{12} = (u_{h,1},u_{h,2},0)$ is the projection of $u_h$ into $x$-$y$ plane  and $A(x') := \grad u_h(x') - \grad u_h^{12}(x')$. By~\eqref{w=f} the previous integral is equal to 
\begin{equation}\nonumber
 \int_\Omega \left| f(\sigma_1(x'),\sigma_2(x')) - f(\lambda_1(x'),\lambda_2(x'))\right| \dxy,
\end{equation}
where $\sigma_1(x'),\sigma_2(x')$ and $\lambda_1(x'),\lambda_2(x')$ are the singular values of $\grad u_h(x')$ and $\grad u_h^{12}(x')$, respectively. Since the singular values of a matrix are Lipschitz functions of the corresponding matrix (see, e.g., Corollary 8.6.2 in~\cite{golubMatrix}), we have
\begin{equation}\label{fdiff}
 |f(\sigma_1(x'),\sigma_2(x')) - f(\lambda_1(x'),\lambda_2(x'))| = |\grad f(\xi(x')) (\sigma_1(x')-\lambda_1(x'),\sigma_2(x')-\lambda_2(x'))| \le C|\grad f(\xi(x'))| |A(x')|,
\end{equation}
where $\xi(x')$ is a point on the segment connecting $(\sigma_1(x'),\sigma_2(x'))$ and $(\lambda_1(x'),\lambda_2(x'))$. By~\eqref{eq032} $\grad f(1,1) = 0$ and $D^2f \le C$, and so $Df (\zeta) \le C(|\zeta| + 1)$. Using quadratic growth of $f$ (\eqref{eq032} with $p=2$) we obtain
\begin{equation}\label{Df}
 |Df(\xi(x'))|^2 \le C(|\xi(x')|^2 + 1) \le C'(f(\xi(x')) + 1) \le C'(f(\sigma_1(x'),\sigma_2(x')) + f(\lambda_1(x'),\lambda_2(x')) + 1),
\end{equation}
where the last inequality trivially follows from the convexity of $f$. Integrating~\eqref{fdiff} and using~\eqref{Df} together with H\"older inequality we get
\begin{multline}\nonumber
 \int_\Omega \left| f(\sigma_1(x'),\sigma_2(x')) - f(\lambda_1(x'),\lambda_2(x'))\right| \dxy \le C\left( \int_\Omega |Df(\xi(x'))|^2 \dxy \right)^{1/2} \left( \int_\Omega |A(x')|^2 \dxy \right)^{1/2} \\ \le C\left( \int_\Omega W(Du_h) + W(Du_h^{12}) \dxy + 1\right)^{1/2} ||Du_{h,3}||_{L^2(\Omega)}.
\end{multline}
Therefore by~\eqref{outofplane} and using $\delta \le 1, h \le 1$ we have that
\begin{multline}\nonumber
  \left( \int_\Omega \left| \w(\grad u_h) - \w(\grad u_h^{12})\right| \dxy \right)^2 \le C \left( 2 \int_\Omega W(Du_h) \dxy +  \int_\Omega \left| \w(\grad u_h) - \w(\grad u_h^{12})\right| \dxy + 1 \right) \delta \\ 
  \le C \delta \left( \mathcal{E}_0 + \delta h + 1 + \int_\Omega \left| \w(\grad u_h) - \w(\grad u_h^{12})\right| \dxy \right) \le 
  C'\delta\left( 1 + \int_\Omega \left| \w(\grad u_h) - \w(\grad u_h^{12})\right| \dxy  \right).
\end{multline}
It follows easily that
\begin{equation}\nonumber
  \int_\Omega \left| \w(\grad u_h) - \w(\grad u_h^{12})\right| \dxy \le C \delta^{1/2},
\end{equation}
and so
\begin{equation}\label{eq004}
\begin{aligned}
 \left| E_h(u_h) - E_h(u_h^{12}) \right| &\le C \delta^{1/2} + h^2||\grad^2 u_{h,3}||_{L^2(\Omega)}^2 \le 
  C(\delta^{1/2} + \delta h) \le C\delta^{1/2}.
\end{aligned}
\end{equation}
Equations~\eqref{eq005} and~\eqref{eq004} together imply
\begin{equation}\nonumber
 E(u_h^{12}) - E_0(u_0) \le C\delta^{1/2},
\end{equation}
where $E(v) = \int_\Omega \w(\grad v) + B(v)$ ($v$ is just an in-plane deformation, so there is no bending term present). \\

{\bf\noindent Step 6:}
Observe that the last relation does not depend on thickness $h$ anymore. In fact, to finish the proof we just need to show that the minimum of the energy $E_h$ over in-plane deformations has to be strictly larger than the minimum of the relaxed energy $E_0$:

\begin{lemma}\label{lm9}
 Under the assumptions of Theorem~\ref{thm-2d} we have
\begin{equation}\nonumber
 \min_{u \in W^{2,2}(\Omega,\R^2\times \{ 0 \} )} E_h(u) > \min_{u \in W^{2,2}(\Omega,\R^3)} E_0(u) = \mathcal{E}_0.
\end{equation}
\end{lemma}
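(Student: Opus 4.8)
The plan is to argue by contradiction. Since $u_3\equiv0$ for an in-plane deformation, $E_h(u)=\int_\Omega\w(\grad u)\dxy+B(u)=:E(u)$ carries no $h$-dependence, so it is enough to rule out a sequence $u_k\in W^{2,2}(\Omega,\R^2\times\{0\})$ with $E(u_k)\to\mathcal E_0$. From $\wr\le\w$ we get $E_0(u_k)\le E(u_k)\to\mathcal E_0=\min E_0$, so $u_k$ is a minimizing sequence for the \emph{convex} functional $E_0$. As in Lemma~\ref{rel_existence}, the $p$-th power growth of $\wr$ together with $\varphi'>0$ gives, after subtracting rigid translations, a uniform $W^{1,p}(\Omega,\R^2)$ bound; passing to a subsequence $u_k\rightharpoonup u_0$ weakly in $W^{1,p}$, where $u_0$ is the unique (by Lemma~\ref{rel_uniqueness} and the uniqueness argument of Section~\ref{sec-relax}) minimizer of $E_0$. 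Since the trace operator is compact, $B(u_k)\to B(u_0)$, hence $\int_\Omega\wr(\grad u_k)\to\int_\Omega\wr(\grad u_0)$ and therefore
\[
 \int_\Omega\bigl(\w(\grad u_k)-\wr(\grad u_k)\bigr)\dxy=E(u_k)-E_0(u_k)\longrightarrow0 .
\]
Because the integrand is nonnegative, $\w(\grad u_k)-\wr(\grad u_k)\to0$ in $L^1(\Omega)$.

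\noindent Next I would localize. Splitting $\Omega=\relaxed\cup\nonrelaxed$, weak lower semicontinuity of the two nonnegative pieces of $\int\wr$ forces $\int_{\nonrelaxed}\wr(\grad u_k)\to\int_{\nonrelaxed}\wr(\grad u_0)$ and likewise on $\relaxed$; strict convexity of $\wr$ on $\nonrelaxed$ (where both principal stretches of $\grad u_0$ are tensile) then upgrades this to $\grad u_k\to\grad u_0$ strongly in $L^1(\nonrelaxed)$, so along a further subsequence a.e.\ there. On $\relaxed$ the stretches of $\grad u_0$ are $(\lambda_1,\lambda_2)=(v'(r),v(r)/r)$ with $\lambda_1\ge V_{min}>1$ and $\lambda_2<\o(\lambda_1)$ strictly on compact subsets, and there $f_r(\sigma_1,\sigma_2)=f(\sigma_1,\o(\sigma_1))$ is smooth with $D^2f_r=H'(\sigma_1)\,e_1\otimes e_1$ and $H'>0$; thus Lemma~\ref{lm3} applies with $F=\grad u_0(x)$ and $F_0(x)=v'(r)\,e_r\otimes e_r+\o(v'(r))\,e_\theta\otimes e_\theta$ (the natural-width configuration, for which $\w(F_0(x))=\wr(\grad u_0(x))$ and $F_0(x)n(x)=\grad u_0(x)n(x)$). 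Since $\grad u_k\rightharpoonup\grad u_0$ and $D\w(F_0(\cdot))$ is bounded, $\int_{\relaxed}D\w(F_0):(\grad u_k-\grad u_0)\to0$, so Lemma~\ref{lm3} yields
\[
 \int_{\relaxed}c_0\,g_p\!\bigl(\dist(\grad u_k,SO(3)F_0(x))\bigr)\dxy\le\int_{\relaxed}\bigl(\w(\grad u_k)-\w(F_0)-D\w(F_0):(\grad u_k-\grad u_0)\bigr)\dxy\longrightarrow0 .
\]
Hence $\dist(\grad u_k,SO(3)F_0(\cdot))\to0$ in $L^p(\relaxed)$; moreover, because $\wr$ is convex with (for $p=2$) quadratic growth and the convex energies converge, $\{|\grad u_k|^2\}$ is equi-integrable, so $\det\grad u_k\rightharpoonup\det\grad u_0$ weakly in $L^1(\Omega)$.

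\noindent The contradiction then comes from the area of the deformed annulus. On one hand, weak $L^1$-convergence of the Jacobians gives, for every sub-annulus $\Omega'=\{R_{in}<|x|<r\}\subset\relaxed$,
\[
 \int_{\Omega'}\det\grad u_k\dxy\longrightarrow\int_{\Omega'}\det\grad u_0\dxy=\pi\bigl(v(r)^2-v(R_{in})^2\bigr),
\]
i.e.\ the (algebraic) area swept between the deformed images of $|x|=R_{in}$ and $|x|=r$ tends to that of $u_0$, which is \emph{strictly smaller} than $\int_{\Omega'}v'(r)\,\o(v'(r))\dxy$ precisely because of the hoop compression $v(r)/r<\o(v'(r))$. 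On the other hand, $\grad u_k$ is a planar gradient (here the $W^{2,2}$ regularity, so $\grad u_k\in W^{1,2}$, matters) that is $L^2$-close to $SO(3)F_0(x)$ with $F_0$ a smooth, everywhere non-degenerate, radial field; a quantitative rigidity argument in the spirit of~\cite{bib-rigid1}, combined with the strong convergence $\grad u_k\to\grad u_0$ across $|x|=L$ (where $F_0(L)=\grad u_0(L)$) and the connectedness of $\relaxed$, should force the orientation/``branch'' of the rotation field to be asymptotically constant and equal to the identity, i.e.\ $\grad u_k\to\grad u_0=F_0$ strongly in $L^2_{loc}(\relaxed)$. Then $\int_{\relaxed}\w(\grad u_k)\to\int_{\relaxed}\w(\grad u_0)$, whereas from Step~2 above $\int_{\relaxed}\w(\grad u_k)\to\int_{\relaxed}\wr(\grad u_0)$; since $\w(\grad u_0)>\wr(\grad u_0)$ on the positive-measure set $\relaxed$, this is impossible. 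I expect this last step — converting the local near-$SO(3)F_0$ estimate into genuine strong convergence, equivalently the global area bookkeeping that forbids a planar competitor from reproducing $u_0$'s compression by folding — to be the only real difficulty; everything before it is soft (convexity, uniqueness of $u_0$, compactness, equi-integrability, and the pointwise estimates of Lemmas~\ref{lm1}--\ref{lm3}).
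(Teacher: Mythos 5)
Your first half (deriving $\dist(\grad u_k, SO(3)F_0(\cdot)) \to 0$ in $L^p(\relaxed)$ from the energy deficit) tracks the paper's Steps~1--3 of the proof: swapping boundary terms via the Euler--Lagrange equation for $u_0$, restricting to the relaxed region where the integrand is nonnegative, and applying Lemma~\ref{lm3}. The closing paragraph, however, departs from the paper and contains a genuine gap together with a factual error. You write ``$\grad u_k\to\grad u_0=F_0$'', but $F_0\neq\grad u_0$ in the interior of $\relaxed$: there $F_0$ has hoop component $\o(v'(r))$ while $\grad u_0$ has $v(r)/r<\o(v'(r))$, and the two matrices coincide only at $r=L$ and in $\nonrelaxed$. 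This difference is precisely the content of the lemma, so conflating them removes the tension you need. More seriously, the rigidity step you yourself flag (``force the orientation/branch of the rotation field to be asymptotically constant'') is exactly where the soft machinery runs out. A planar $\grad u_k$ can be $L^2$-close to $R(x)F_0(x)$ with $R(x)$ oscillating between in-plane rotations and reflection-type elements of $SO(3)$, and that oscillation is \emph{compatible} with $u_k\to u_0$ in $L^2$: folding the longer hoop of $F_0$ back and forth in the plane is how a planar competitor tries to reproduce $u_0$'s compression. Nothing in the rigidity theorem of Friesecke, James, and M\"uller~\cite{bib-rigid1} or in ``connectedness of $\relaxed$'' pins the branch of $R(x)$; the strong convergence $\grad u_k\to\grad u_0$ is exactly what wrinkling prevents, so the final contradiction via $\int_\relaxed\w(\grad u_k)\to\int_\relaxed\w(\grad u_0)$ does not follow from the steps you list.

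The paper closes the argument with a different and softer device: comparison of deformed areas. From $\dist(\grad u,SO(3)F_0)$ small one gets $\det\grad u\approx\det F_0$ and hence $|u(\relaxed)|\approx\int_\relaxed\det F_0\,\dxy$; $L^2$-closeness of $u$ to $u_0$ forces $|u(\relaxed)|\le|u_0(\relaxed)|+o(1)$; and the two bounds clash because $\det F_0>\det\grad u_0$ strictly throughout $\relaxed$ (the relaxation enlarges the hoop stretch while leaving the radial stretch fixed). This sidesteps the rotation-field question entirely and is what you should aim for instead of trying to establish strong convergence of the gradients.
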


\begin{proof}
Let us assume the contrary, i.e. for any small $\delta > 0$ there exists a function $u : \Omega \to \R^2$ such that
\begin{equation}\nonumber
 \int_\Omega \w (\grad u) \dxy + B(u) \le \mathcal{E}_0 + \delta = \int_\Omega \wr (\grad u_0) \dxy + B(u_0) + \delta.
\end{equation}

The plan is to obtain a contradiction by showing that the areas of $u(\relaxed)$ and $u_0(\relaxed)$ should be very similar using one argument and at the same time very different for another reason (here $\relaxed$ is the relaxed region introduced near~\eqref{eq-relaxed}).
First, using Euler-Lagrange equation for $u_0$ we can replace the boundary term $B(u - u_0)$ by the gradient term:
\begin{equation}
 \int_\Omega \w(\grad u) - \wr(\grad u_0) -D\wr(\grad u_0)\cdot (\grad u - \grad u_0) \dxy \le \delta. \nonumber
\end{equation}
Since $\w \ge \wr$ and $\wr$ is convex, the integrand in the last relation is non-negative a.e. Therefore the last relation remains true if we integrate over the relaxed region $\relaxed$ instead of the whole domain $\Omega$. To proceed, we would like to find a matrix $F_0$ such that $\wr(\grad u_0) = \w(F_0)$ (i.e. we want to relax compressive stresses in $\grad u_0$ if they are present). We know that $\grad u_0$ has compressive stresses in the hoop direction (and tensile in the radial direction), which means that $F_0$ and $\grad u_0$ coincide in the radial direction and are different in the hoop direction, i.e. 
\[F_0(x') n(x') = (\grad u_0(x')) n(x')\quad \mathrm{ and }\quad F_0(x') n^\perp(x') = c n^\perp(x')\]
with some $c > (Du_0 n^\perp)n^\perp$. Moreover, it can be easily seen that in this case $D\wr(\grad u_0) = D\w(F_0)$. We rewrite the previous inequality to obtain
\begin{equation}
 \int_\relaxed \w(\grad u(x')) - \w(F_0(x')) -D\w(F_0(x'))\cdot (\grad u(x') - \grad u_0(x')) \dxy \le C\delta. \nonumber
\end{equation}
Using Lemma~\ref{lm3} with $p=2$ we see that 
\begin{equation}\nonumber
 \int_\relaxed \dist^2(\grad u,SO(3)F_0) \dxy \le C\delta.
\end{equation}
Whence
\begin{equation}\label{eq-c}
\left| |u(\relaxed)| - \int_\relaxed \det F_0 \dxy \right| \le \int_\relaxed \left| \det \grad u - \det F_0 \right| \dxy \le \ 
C \left(\delta + \delta^{1/2}\right) \le C\delta^{1/2}.
\end{equation}
Further we see that 
\begin{equation}\label{eq-b}
 \det F_0 > \det \grad u_0.
\end{equation}
Indeed, $F_0$ and $\grad u_0$ are ``same'' in the radial direction (i.e. $F_0 n = (\grad u_0) n$), whereas due to compression in the hoop direction in the relaxed solution $u_0$ we have ($F_0 n^\perp)n^\perp > ((\grad u_0)n^\perp)n^\perp$. Hence we get~\eqref{eq-b} by taking product of the two previous relations. Finally integrating~\eqref{eq-b} we obtain
\begin{equation}\label{eq-d}
 \int_\relaxed \det F_0 \dxy  > \int_\relaxed \det \grad u_0 \dxy = |u_0(\relaxed)|.
\end{equation}

To finish the proof we want to show 
\begin{equation}\label{eq-e}
 \left| u(\relaxed) \right| \le \left|u_0(\relaxed)\right| + C\delta^{1/3}.
\end{equation}
Then by combining~\eqref{eq-c} and~\eqref{eq-e} we get
\begin{equation}\nonumber
 \int_\relaxed \det F_0 \dxy  < |u_0(\relaxed)| + C\delta^{1/3},
\end{equation}
contradicting~\eqref{eq-d} since $\delta > 0$ can be arbitrary small.

To show~\eqref{eq-e} we set $\epsilon := \delta^{1/3}$ and define
\begin{equation}\nonumber
 \mathcal{M} := u(\relaxed) \cap \left\{ x \in \R^2 : \dist (x,u_0(\relaxed)) > \epsilon \right\}.
\end{equation}
Then
\begin{equation}\label{eq54}
 |u(\relaxed)| \le \left| \left\{ x \in \R^2 : \dist (x,u_0(\relaxed)) \le \epsilon \right\} \right| + \left|\mathcal{M}\right| \le |u_0(\relaxed)| + C\epsilon + \left|\mathcal{M}\right|.
\end{equation}
It is enough to estimate the size of $\mathcal{M}$. Arguing as before (c.f.~\eqref{eq-99}), we know that
\begin{equation}\nonumber
 ||u-u_0||_{L^2(\relaxed,\R^2)}^2 \le C\delta,
\end{equation}
whence 
\begin{equation}\label{eq53}
 \epsilon^2 \left|u^{-1}(\mathcal{M})\right| \le \int_{u^{-1}(\mathcal{M})} |u-u_0|^2 \dxy \le ||u-u_0||_{L^2(\relaxed,\R^2)}^2 \le C\delta.
\end{equation}
By virtue of~\eqref{eq53} 
\begin{align}\nonumber
 |\mathcal{M}| &= \int_{u^{-1}(\mathcal{M})} \det \grad u \dxy = \int_{u^{-1}(\mathcal{M})} \det F_0 \dxy + \int_{u^{-1}(\mathcal{M})} \det \grad u - \det F_0 \dxy \\ &\le C\left|u^{-1}(\mathcal{M})\right| + \int_{\relaxed} \left| \det \grad u - \det F_0 \right| \dxy \le C\delta \epsilon^{-2} + C\delta^{1/2} \le C\delta^{1/3}.\nonumber
\end{align}
Using this estimate in~\eqref{eq54} we obtain~\eqref{eq-d}.
\end{proof}

This completes the proof of the matching lower and upper bound in the two-dimensional setting~\eqref{eq-2d}.

\section{The 3D result}\label{sec3d}

In this section we will prove the scaling law for the minimum of the elastic energy in the nonlinear three-dimensional setting.
As in the previous section, we need to show an upper and a lower bound. As usual in problems of this type, the upper bound is an easy consequence of the upper bound for the $2D$ setting. The main goal is therefore to show the lower bound in this more general setting. 

As explained in Section~\ref{sec-model} we consider a nonlinear $3D$ energy
\begin{equation}\nonumber
 \eg(u) := \frac{1}{h}\int_{\Omega \times (0,h)} \w_{3D}(\grad u) \dxyz
\end{equation}
instead of the reduced $2D$ energy~\eqref{eq001}. The boundary term in the $3D$ setting is defined as
\begin{equation}\nonumber
 \B(u) := \frac{T_{in}}{h} \int_{|\hat x|=R_{in}} u(x) \cdot\frac{\hat x}{R_{in}} \dS - \frac{T_{out}}{h} \int_{|\hat x|=R_{out}} u(x) \cdot\frac{\hat x}{R_{out}} \dS.
\end{equation}
The main result of this section is
\begin{theorem}\label{thm-3d}
Under the hypothesis of Theorem~\ref{thm-rel} there exist constants $0 < C_1 < C_2$ independent of $h$ such that
\begin{equation}\nonumber
 \mathcal{E}_0 + C_1h \le \min_{u \in W^{1,p}(\Omegah)} \eg(u) + \B(u) \le \mathcal{E}_0 + C_2h.  
\end{equation}
\end{theorem}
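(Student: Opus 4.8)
After the harmless translation $x_3\mapsto x_3-h/2$ we may take the slab to be $\Omegah$. The $3D$ test map is obtained from the cascade construction $u_h:\Omega\to\R^3$ of Section~\ref{sect-ub2d} by a Kirchhoff--Love extension: set $\tilde u_h(x',x_3):=u_h(x')+x_3\,b(x')$, where $b(x')$ is chosen so that the $3\times3$ matrix $(\partial_1 u_h\,|\,\partial_2 u_h\,|\,b)$ realizes the minimum in~\eqref{eq002}. By Lemma~\ref{lm5}, $b\perp(\partial_1 u_h\,|\,\partial_2 u_h)$, so $b$ is the deformed unit normal scaled by the optimal third principal stretch, and $b$ together with its first derivatives is bounded by $C(1+|\grad^2 u_h|)$. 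Expanding $\wt(\grad\tilde u_h)$ in powers of $x_3$ about $x_3=0$, the zeroth-order term equals $\w(\partial_1 u_h\,|\,\partial_2 u_h)$ by the choice of $b$; the first-order term is odd in $x_3$, hence integrates to zero over $(-h/2,h/2)$; and the remainder is $O\!\bigl(x_3^2(1+|\grad^2 u_h|^2)\bigr)$. Integrating in $x_3$, using $\B(\tilde u_h)=B(u_h)$ (again because $\int x_3\,\dz=0$), and invoking Remark~\ref{rmk2} (the $2D$ construction controls the \emph{full} Hessian $\grad^2\Psi$, not just $\grad^2\Psi_3$), one gets $\eg(\tilde u_h)+\B(\tilde u_h)\le \int_\Omega\w(\partial_1 u_h\,|\,\partial_2 u_h)+Ch^2|\grad^2 u_h|^2\,\dxy+B(u_h)+Ch$, which is $\le E_h(u_h)+Ch\le\mathcal{E}_0+C_2h$ by the $2D$ construction.

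\noindent\textbf{The lower bound: reduction to $2D$.} Argue by contradiction: suppose $\eg(u_h)+\B(u_h)\le\mathcal{E}_0+\delta h$ with $\delta\downarrow0$. Writing $\grad'u:=(\partial_1 u\,|\,\partial_2 u)\in\R^{3\times2}$ for the in-plane gradient, we have $\wt(\grad u)\ge\w(\grad'u)\ge\wr(\grad'u)$ (from~\eqref{eq002} and $\wr\le\w$). Combining this with Jensen's inequality in $x_3$ (convexity of $\wr$) and the identity $\B(u_h)=B(\bar u_h)$ for the thickness average $\bar u_h(x'):=h^{-1}\!\int u_h(x',x_3)\,\dz$ (both boundary integrals live on the lateral surface $\{|x'|=R_{in}\}\times(-h/2,h/2)$ and $\{|x'|=R_{out}\}\times(-h/2,h/2)$), we obtain $E_0(\bar u_h)=\int_\Omega\wr(\grad'\bar u_h)\,\dxy+B(\bar u_h)\le\eg(u_h)+\B(u_h)\le\mathcal{E}_0+\delta h$. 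Thus $\bar u_h$ near-minimizes the $2D$ relaxed energy; using the Euler--Lagrange equation for $u_0$ to replace $B(\bar u_h-u_0)$ by a gradient term and then applying Lemmas~\ref{lm1} and~\ref{lm2} followed by the Poincar\'e inequality of Lemma~\ref{lmPoincare} (all valid for $1<p\le2$), we conclude, after normalizing an additive constant to $0$, that $\int_\Omega g_p(|\bar u_h-u_0|)\,\dxy\le C\delta h$, so that $\|\bar u_{h,3}\|_{L^p(\Omega)}^p\le C\delta h$, $\grad'(\bar u_h-u_0)$ is small in $\NR$, and $\bigl(\grad'(\bar u_h-u_0)\bigr)\!\cdot n$ is small in $\RR$.

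\noindent\textbf{The lower bound: recovering the hidden bending.} The step above only gives the trivial bound $\mathcal{E}_0$; the order-$h$ improvement comes from the two nonnegative quantities discarded in the reduction, namely $h^{-1}\!\int_{\Omegah}[\wt(\grad u_h)-\w(\grad'u_h)]$ (suboptimality of $\partial_3 u_h$) and the Jensen gap $h^{-1}\!\int_{\Omegah}\wr(\grad'u_h)-\int_\Omega\wr(\grad'\bar u_h)$ (variation of $\grad'u_h$ across the thickness), each of which is $\le\delta h$. On $\NR$, where the relaxed density is uniformly convex, the $3D$ analogues of Lemmas~\ref{lm2} and~\ref{lm3} combined with these defect bounds and with the $2D$ closeness $\grad'\bar u_h\approx\grad u_0$ control $\int_{\NR\times(-h/2,h/2)}\dist^2\!\bigl(\grad u_h,SO(3)\grad U_0\bigr)$ by $C\delta h^2$, where $U_0(x',x_3):=u_0(x')+x_3\,\xi^*(x')$ is the natural $3D$ extension of the relaxed minimizer and $\xi^*$ the optimal third column; on the relaxed region one controls only the $n$-direction and integrates radially, exactly as in the uniqueness proof of Section~\ref{sec-relax} and in Lemma~\ref{lm9}. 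Feeding this pointwise estimate into the Friesecke--James--M\"uller geometric rigidity theorem~\cite{bib-rigid1}, applied on a grid of cubes of side $h$ with the $x'$-dependent target $SO(3)\grad U_0$ and then patched into a global $W^{1,2}$ rotation field, yields $\|\grad^2\bar u_{h,3}\|_{L^2(\Omega)}^2\le C\delta/h$ --- the exact analogue of~\eqref{interp2}, with the hidden $3D$ bending playing the role of the explicit $h^2|\grad^2 u_3|^2$ term. From here the argument runs parallel to Steps~4--6 of Section~\ref{sect-lb2d}: interpolating between $\|\bar u_{h,3}\|_{L^2(\Omega)}^2\le C\delta h$ and $\|\grad^2\bar u_{h,3}\|_{L^2(\Omega)}^2\le C\delta/h$ gives $\|\grad\bar u_{h,3}\|_{L^2(\Omega)}^2\le C\delta$; projecting $\bar u_h$ into the plane changes its membrane energy by at most $C\delta^{1/2}$; and Lemma~\ref{lm9}'s volume comparison over $\relaxed$ (using $\det F_0>\det\grad u_0$ there) produces the contradiction once $\delta$ is small.

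\noindent\textbf{Where the difficulty lies.} The one genuinely new ingredient is the rigidity step. Because the relaxed solution is a \emph{stretched} configuration and not a rigid motion, geometric rigidity must be run with the $x'$-varying target $SO(3)\grad U_0(x')$, which moreover degenerates on $\relaxed$, where $\wr$ loses strict convexity in the hoop direction. Producing, from the cube-by-cube estimates, a global $W^{1,2}$ rotation field with the correct $h^{-2}$ scaling of its gradient, and controlling the transition across the free boundary $|x'|=L$, is the technical heart of the proof. Everything else is a transcription of the $2D$ arguments; the only extra bookkeeping is that, since in $3D$ we assume merely $1<p\le2$ in~\eqref{eq032}, the interpolation in Step~4 is carried out with the help of the a priori bound $\|\grad u_h\|_{L^p(\Omegah)}\le C$, which reduces all the relevant estimates to the quadratic regime on the bounded range of strains that matter.
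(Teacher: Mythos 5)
Your upper-bound argument is correct and is essentially the paper's construction (the Kirchhoff--Love extension $u_h(x')+x_3\alpha(x')\nu(x')$ with the optimal third stretch, using Remark~\ref{rmk2} to control the full Hessian of the cascade). For the lower bound, however, your route diverges from the paper's and has a real gap.

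The paper does not form the thickness average $\bar u_h$. It fixes $x_3$, applies the $2D$ lemmas slice by slice to get $\int_\Omega g_p(\dist(\grad\ut(\cdot,x_3),SO(3)F_0))\,\dxy\le CR$, then combines Lemma~\ref{lm7} (penalizing a suboptimal $\partial_3\ut$) and Lemma~\ref{lm8} to land on the $3D$ estimate $\frac{1}{h}\int_\Omegah g_p(\dist(\grad\ut,SO(3)M(x')))\le C(\eg+\B-\mathcal{E}_0)$, together with the $L^p$ bound on $\ut-\tau(x_3)-u_0$. The crucial point, stated explicitly in Section~\ref{lb3D}, is that ``where $2D$ used interpolation we need to proceed differently by using a rigidity theorem'': the paper replaces Steps~4--6 of Section~\ref{sect-lb2d} entirely by Lemma~\ref{lm-lb}, a counting argument on strips of $h^{-1/2}$ adjacent cubes of side $h$ in the hoop direction. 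There one uses the FJM/Agostiniani--DalMaso--DeSimone rigidity only to pin down a nearly constant rotation across the strip, then observes that $|M_*n_*^\perp|$ exceeds $|\grad u_0\, n^\perp|$ by a fixed $\kappa_0>0$ in $\relaxed$, so integrating $(\grad\ut-\grad u_0)n^\perp$ along the strip forces $\ut-u_0$ to grow linearly, contradicting the $L^p$ smallness --- no second-gradient estimate and no interpolation appear.

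Your proposal instead keeps the $2D$ template: you want $\|\grad^2\bar u_{h,3}\|^2_{L^2}\le C\delta/h$ from rigidity, interpolate against $\|\bar u_{h,3}\|^2_{L^2}\le C\delta h$, project to the plane, and invoke Lemma~\ref{lm9}. Two things break. First, under~\eqref{eq032} with $p<2$ the slice-wise (or averaged) estimate only yields $\int_\Omega g_p(|\bar u_h-u_0|)\le C\delta h$, which is strictly weaker than $\|\bar u_{h,3}\|^2_{L^2}\le C\delta h$; the interpolation inequality then does not give $\|\grad\bar u_{h,3}\|^2_{L^2}\le C\delta$, and your remark about ``reducing to the quadratic regime'' via the a~priori $L^p$ bound on $\grad u_h$ does not fix this --- the $L^p$ gradient bound does not tame the $g_p$ vs.\ $L^2$ mismatch in the $u_3$ estimate. (This is exactly why the paper's Theorem~\ref{thm-2d} is stated only for $p=2$, while Theorem~\ref{thm-3d} handles $1<p\le2$ by a different route.) Second, and independently, the claimed bound $\|\grad^2\bar u_{h,3}\|_{L^2}^2\le C\delta/h$ is not a consequence of FJM as you've stated it. Geometric rigidity with the $x'$-dependent, non-rotational target $SO(3)\grad U_0(x')$ gives control on a piecewise-constant rotation field and on its jumps between cubes, not on the full second gradient of the out-of-plane component of the thickness average; translating one into the other, and handling the degeneration of $\wr$ in $\relaxed$ and the transition across $|x'|=L$, is precisely the missing work, which you yourself flag as ``the technical heart'' without supplying it. As written, your lower bound does not close.
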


\subsection{The upper bound in the three-dimensional setting}

The construction of a test function $\ug$ in the three-dimensional setting is based on the test function $u_h$ defined in the previous section to show the upper bound in the two-dimensional setting. Following proposed Kirchhoff-Love ansatz, the normal to the mid-surface remains straight and normal to the mid-surface after deformation. Therefore, we just need to find how much should each of these normals stretch. It follows from the definition of $\w$ (see~\eqref{eq002}) that for any $x' \in \Omega$ there exists a factor $\alpha(x')$ such that the vector $\alpha(x')\nu(x')$ satisfies
\begin{equation}\nonumber
 \w(\grad u_h(x')) = \wt(\grad u_h(x')|\alpha(x')\nu(x')),
\end{equation}
where $\nu(x')$ is the unit normal to $u_h(\Omega)$ at $u_h(x')$. We observe that $\alpha(x')$ is bounded and $|\grad \alpha| \le C|\grad^2 u_h|$. 
We define the solution $\ug$ as
\begin{equation}\nonumber
 \ug(x) := u_h(x') + x_3\cdot\alpha(x')\nu(x')
\end{equation}
and compute
\begin{equation}\nonumber
 \grad \ug(x) = \left( \grad u_h | \alpha \nu \right) + x_3 \cdot (\grad (\alpha \nu) | 0).
\end{equation}
Then
\begin{align}\nonumber
 \eg(\ug) - E_h(u_h) &\le \frac{1}{h} \int_\Omegah \wt(\ (\grad u_h|\alpha\nu) + x_3\cdot(\grad (\alpha\nu)|0)\ ) - \wt(\grad u_h|\alpha\nu) \dxyz \\ \nonumber
 &\le \frac{1}{h} \int_\Omegah x_3 \cdot D\wt(\grad u_h|\alpha\nu):(\grad (\alpha\nu)|0) + x_3^2\cdot C|\grad (\alpha\nu)|^2 \dxyz \\ \nonumber
 &\le Ch^2 \int_\Omega |\grad (\alpha\nu)|^2 \dxy \le Ch^2 \int_\Omega \alpha(x')^2|\grad \nu|^2 + |\grad \alpha|^2 \dxy
\end{align}
where $C$ depends on $||D^2\wt||_{L^\infty}$. We know that $h^2 \int_\Omega |\grad^2 u_{h}|^2$ is bounded by $Ch$ (see Remark~\ref{rmk2}), and thus using boundedness of $\grad u_h$ and 
\begin{gather}\nonumber
 \int_\Omega |\grad \nu|^2 \dxy \le C\int_\Omega |\grad^2 u_h|^2 \dxy 
\end{gather}
we obtain $h^2\int_\Omega \alpha^2|\grad \nu|^2 \dxy \le Ch$. A similar estimate is true for the second term:
\begin{equation}\nonumber
 h^2 \int_\Omega |\grad \alpha|^2 \dxy \le Ch^2 \int_\Omega |\grad^2 u_h|^2 \dxy \le Ch.
\end{equation}
Together we have obtained 
\begin{equation}\nonumber
 \eg(\ug) \le E_h(u_h) + Ch \le \mathcal{E}_0 + Ch.
\end{equation}

\subsection{The lower bound in the three-dimensional setting}\label{lb3D}

Our goal is to show a lower bound similar to~\eqref{eq76}:
\begin{equation}\label{eq129}
\min_{u \in W^{1,p}(\Omegah)} \eg(u) + \B(u) \ge \mathcal{E}_0 + ch.
\end{equation}
To first approximation, the proof of~\eqref{eq129} consists of slicing our domain, applying the two-dimensional lower bound on each slice, then patching them together. But two new features require changes in the argument. First, the energy density $\w$ in $2D$ was derived from the $3D$ energy density $\wt$ assuming the missing third component is optimal. Therefore we need to estimate how much the third component of $\grad u$ differs from the optimal one, and how the optimal vector 
depends on the first two components. As a second feature, where $2D$ used interpolation we need to proceed differently by using a rigidity theorem. 

From now on let $h > 0$ be fixed and consider a function $\ut \in W^{1,p}(\Omegah)$. Then for any fixed $x_3 \in (-h/2,h/2)$ we define
\begin{align}\nonumber
u(x') &:= \ut(x',x_3)  
\end{align}
for $x' \in \Omega$. 
Since $u_0$ is the minimizer of the relaxed energy $E_0$ we have that
\begin{equation}\nonumber
 \mathcal{E}_0 = E_0(u_0) \le E_0(u) \le \int_\Omega \w(\grad u) \dxy + B(u)
\end{equation}
and we set
\begin{equation}\nonumber
 R := \int_\Omega \w(\grad u) \dxy + B(u) - \mathcal{E}_0 \ge 0.
\end{equation}

Our initial goal is to show that if $R$ is small then $\grad u$ is close to some function (which is derived from $\grad u_0$).
We use Euler-Lagrange equation to replace boundary term $B$ by $D\wr$ and obtain
\begin{equation}\nonumber
 \int_\Omega \w(\grad u) - \wr(\grad u_0) - D\wr(\grad u_0) (\grad u - \grad u_0) \dxy = R.
\end{equation}
Using notation of Lemma~\ref{lm3} this can be rewritten as
\begin{equation}\nonumber
 \int_\Omega \w(\grad u) - \w(F_0) - D\w(F_0) (\grad u - \grad u_0) \dxy = R,
\end{equation}
where $F_0(x) \cdot n(x) = \grad u_0(x) \cdot n(x)$ and $\w(F_0(x)) = \wr(\grad u_0(x)).$
We apply Lemma~\ref{lm3} to get
\begin{equation}\label{eq065}
 \int_\Omega g_p\left(\dist(\grad u(x'),SO(3)F_0(x'))\right) \dxy \le CR.
\end{equation}
Now we go back to the $3D$ body. We see that
\begin{multline}\nonumber
 \eg(\ut) + \B(\ut) - \mathcal{E}_0 = \frac{1}{h}\int_{-h/2}^{h/2} \int_\Omega \wt(\grad \ut(x)) - \w(\grad \ut(x) P) \dxyz + \\ \frac{1}{h} \int_{-h/2}^{h/2} \left(\int_\Omega \w(\underbrace{\grad \ut(x)P}_{Du(x')}) \dxy + B(\ut(\cdot,x_3)) - \mathcal{E}_0 \right) \dz =: I_1 + I_2,  
\end{multline}
where $P = \begin{pmatrix} 1 & 0 \\ 0 & 1 \\ 0 & 0 \end{pmatrix}$. Let $\zeta(x) \in \R^2$ be such that
\begin{equation}\nonumber
 \wt (\grad \ut (x)P|\zeta (x)) = \w (\grad \ut(x) P).
\end{equation}
Then Lemma~\ref{lm7} implies
\begin{equation}\label{eq066}
 I_1 \ge \frac{C}{h} \int_{-h/2}^{h/2} \int_\Omega g_p(|\grad \ut (x) - (\grad \ut(x)P|\zeta(x))|) \dxyz,
\end{equation}
and by virtue of~\eqref{eq065}
\begin{equation}\nonumber
 I_2 \ge \frac{C}{h} \int_{-h/2}^{h/2} \int_\Omega g_p(\dist(\grad \ut(x)P, SO(3)F_0(x'))) \dxyz.
\end{equation}
We want to extend $F_0(x') \in \R^{3\times2}$ into a $3\times 3$ matrix. To do that, we find a vector $V \in \R^3$ such that $\det \left(F_0(x')|V\right) > 0$ and
\begin{equation}\nonumber
 \wt(F_0(x')|V) = \min_{\xi \in \R^3} \wt(F_0(x')|\xi).
\end{equation}
Observe that the choice of $V$ is unique. We define 
\begin{equation}\label{eq-M}
 M(x') := (F_0(x')|V).
\end{equation}
Lemma~\ref{lm8} then implies
\begin{equation}\nonumber
 \dist ( (\grad \ut(x)P|\zeta(x)), SO(3)M(x')) \le C \dist(\grad \ut(x)P, SO(3)F_0(x')),
\end{equation}
and so
\begin{equation}\label{eq067}
I_2 \ge \frac{C}{h} \int_{-h/2}^{h/2} \int_\Omega g_p(\dist ( (\grad \ut(x)P|\zeta(x)), SO(3)M(x')),) \dxyz.
\end{equation}
By adding~\eqref{eq066} and~\eqref{eq067}, and by using~\eqref{gp2} (the triangle inequality for $g_p$ with factor $2$) we obtain
\begin{equation}\nonumber
 \eg(\ut) + \B(\ut) - \mathcal{E}_0 \ge \frac{C}{h}\int_{\Omegah} g_p\left( \dist(\grad \ut(x),SO(3)M(x'))\right) \dxyz.
\end{equation}
Moreover, from the previous analysis (see~\eqref{eq-99}) we know that there exists a function $\tau$ of $x_3$ alone such that
\begin{equation}\nonumber
 \frac{1}{h}\int_{-h/2}^{h/2} \int_\Omega g_p\left(|\ut(x)-\tau(x_3)-u_0(x')|\right) \dxy \dz \le C(\eg(\ut) + \B(\ut) - \mathcal{E}_0).
\end{equation}
By adding those two inequalities we obtain
\begin{equation}\nonumber
 J(\Omega,\ut) \le C(\eg(\ut) + \B(\ut) - \mathcal{E}_0),
\end{equation}
where for $U \subset \Omega$ we define
\begin{multline}\nonumber
 J(U,\ut) := \\ \frac{1}{h}\int_{U\times (-h/2,h/2)} \!\! g_p\left( \dist (\grad \ut(x),SO(3)M(x'))\right) \dxyz + \frac{1}{h}\int_{U\times(-h/2,h/2)}\!\! g_p\left( |\ut(x) - \tau(x_3) - u_0(x')|\right) \dxyz.
\end{multline}

We will obtain the lower bound from the following important lemma:

\begin{lemma}\label{lm-lb}
 Let $h>0$ be sufficiently small and let $r_0$ be such that $(r_0,r_0+2h) \subset (R_{in},L)$. Then
\begin{equation}\nonumber
 J(\A(r_0,r_0+h),\ut) \ge \eta(r_0) h^2,
\end{equation}
where $\A(\alpha,\beta)$ is an annulus with radii $\alpha < \beta$ and $\eta=\eta(r_0) > 0$ is a decreasing
function of $r_0$. 
\end{lemma}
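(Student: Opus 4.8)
The plan is to argue by contradiction. Suppose that for some sufficiently small $h$ and a fixed (small, to be chosen in terms of $r_0$) constant $\eta>0$ one had $J(\A(r_0,r_0+h),\ut)<\eta h^2$. Since $|\A(r_0,r_0+h)\times(-h/2,h/2)|\approx 2\pi r_0 h^2$, both pieces of $J$ are then individually small: $\int_{\A(r_0,r_0+h)\times(-h/2,h/2)} g_p(\dist(\grad\ut,SO(3)M))\,\dxyz<\eta h^3$ and $\int_{\A(r_0,r_0+h)\times(-h/2,h/2)} g_p(|\ut-\tau-u_0|)\,\dxyz<\eta h^3$. As a preliminary I reduce to bounded strains: wherever $|\grad\ut|$ is large the density is $\gtrsim|\grad\ut|^p$ by \eqref{eq031}, so that set has small measure and negligible effect on the thin annulus, while on its complement $g_p$ is comparable to the square and $\det$ and the singular values are Lipschitz. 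The decisive input is that, in the relaxed region, the natural width $\o(v'(r))$ strictly exceeds $v(r)/r$; hence the $3\times3$ matrices $M(x')=(F_0(x')|V)$ (hoop stretch $\o(v'(r))$) and $(\grad u_0(x')|\xi)$ (hoop stretch $v(r)/r$, any $\xi\in\R^3$) have different singular values, so there is a constant $d(r_0)>0$, continuous in $r_0$ and with $d(r_0)\to0$ as $r_0\to L^-$, such that $\dist\big((\grad u_0(x')|\xi),SO(3)M(x')\big)\ge d(r_0)$ for $|x'|\approx r_0$. This gap is the origin of the positive, $r_0$-decreasing constant $\eta(r_0)$.

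Next I localize. Cover $\A(r_0,r_0+h)$ by $N\approx 2\pi r_0/h$ overlapping boxes $B_j$ of side comparable to $h$, so that each $B_j\times(-h/2,h/2)$ is a bi-Lipschitz image, with uniformly controlled constants (because $M$ has uniformly bounded and bounded-below singular values on the relaxed region), of a cube of side $h$ on which $M$ varies by $O(h)$. Applying the geometric rigidity estimate of Friesecke--James--M\"uller (in the $g_p$/Orlicz form needed for $1<p\le2$) on each box produces rotations $Q_j\in SO(3)$ with $\int_{B_j\times(-h/2,h/2)}g_p(|\grad\ut-Q_jM(\bar x_j')|)\,\dxyz\le C\int_{B_j\times(-h/2,h/2)}g_p(\dist(\grad\ut,SO(3)M))\,\dxyz+Ch^5$, and then the $g_p$-Poincar\'e inequality gives affine maps $A_j$, $\grad A_j=Q_jM(\bar x_j')$, approximating $\ut$ on $B_j\times(-h/2,h/2)$ with the same error. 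Comparing $A_j$ with $\tau+u_0$ on the box and using the gap $d(r_0)$ forces, on each box, that either the dist-deficit is already $\gtrsim d(r_0)^2h^3$ there (which would give $J\gtrsim d(r_0)^2h^2$ outright, contradicting $\eta$ small), or $\ut$ is $L^2$-separated from $\tau+u_0$ on that box by a definite amount.

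The last, and I expect hardest, step is to upgrade this per-box separation into a lower bound on $J$ of the correct order $h^2$ rather than the $h^3$ one gets by naively summing per-box $L^2$-deficits. The mechanism is that the affine pieces $A_j$ must patch into the single-valued field $\grad\ut$, so going once around the annulus their rotations $Q_j$ must close up; but the frame of $M(x')$ genuinely rotates with the polar angle and $M$ is not a gradient even modulo rotations, so this closing-up condition cannot be satisfied by affine pieces that all remain $L^2$-close to $\tau+u_0$, and the obstruction is of order one. Distributed over the $\approx r_0/h$ boxes it forces a deviation whose total $L^2$-mass over the thin annulus is $\gtrsim c(r_0)h^3$, i.e. $J(\A(r_0,r_0+h),\ut)\ge\eta(r_0)h^2$. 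An alternative route, closer in spirit to the proof of Lemma~\ref{lm9}, is to evaluate the null Lagrangian $\int_{\A(r_0,r_0+h)\times(-h/2,h/2)}\det\grad\ut$ two ways: from the dist-bound it is within $o(h^2)$ of $\int\det M\approx 2\pi r_0 v'(r_0)\o(v'(r_0))|V(r_0)|\,h^2$, whereas from the closeness of $\ut$ to $\tau+u_0$ it is controlled by the volume swept by $\tau+u_0$, which — because $u_0$ is planar and $\tau$ varies over an interval of length $h$ with $|\tau'|$ bounded — is comparable to $v(r_0)v'(r_0)h^2$ times (a pinned-down multiple of) $|V(r_0)|$; the two evaluations conflict precisely because $r_0\o(v'(r_0))>v(r_0)$. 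In either approach the main technical burden is keeping track of the powers of $h$ on this thin, strongly anisotropic domain (including the trace losses in the determinant route) and carrying the Orlicz rigidity and Poincar\'e machinery for general $1<p\le2$; the monotone dependence $\eta=\eta(r_0)$ is inherited from that of the hoop-stretch gap $\o(v'(r))-v(r)/r$, which decreases to $0$ as $r\to L$.
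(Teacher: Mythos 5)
Your proposal gets the raw ingredients right --- the hoop-stretch gap between $M(x')=(F_0|V)$ and $\grad u_0$, the Friesecke--James--M\"uller rigidity applied on $h$-cubes, and the observation that one must improve on the naive per-box bound --- but the step you flag as ``hardest'' is precisely where the proof lives, and neither of your two sketched routes supplies the paper's mechanism. The missing idea is a specific intermediate length scale in the hoop direction: the paper groups $2k$ consecutive $h$-cubes with $k=[h^{-1/2}]$ (arc length $\sim h^{1/2}$) into a ``bundle'' $\mathcal{Q}$, and for each $j\le k$ compares $\int_{Q_{j+k}}(\ut-\tau-u_0)$ with $\int_{Q_j}(\ut-\tau-u_0)$ by writing their difference as an integral of the hoop derivative $(\grad\ut-\grad u_0)n^\perp$ over $D_j=\cup_{i=j}^{j+k}Q_i$. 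Splitting $\grad\ut-\grad u_0=(\grad\ut-R_iM_i)+(R_iM_in^\perp-R_*M_*n_*^\perp)+(R_*M_*n_*^\perp-\grad u_0 n^\perp)$, the first two terms are small by rigidity and the last is $\gtrsim\kappa_0>0$, producing after integration a forced displacement of amplitude $\sim\kappa_0 h^{1/2}$. The scale $h^{1/2}$ is what balances the two competing error terms: it is long enough that the accumulated gap $\kappa_0 h^{1/2}$ dominates the Cauchy--Schwarz bound $\mathcal{J}^{1/2}|\mathcal{Q}|^{1/2}$ for the $L^2$-closeness of $\ut$ to $\tau+u_0$, but short enough that the rotation drift $\delta=\sqrt{C\mathcal{J}kh^{-3}}$ stays of order $\sqrt{\epsilon}\ll\kappa_0$. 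One then covers the annulus by $\sim r_0h^{-1/2}$ disjoint bundles.

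Your route (a), the holonomy/closing-up argument around the full circle, is unlikely to work as stated: the rigidity estimate only gives $\sum_i g_p(|R_i-R_{i+1}|)\le C\mathcal{J}h^{-3}$, so over the $\sim r_0/h$ boxes of the whole annulus the rotations may drift by $O(1)$ per box while the energy stays within the assumed smallness; there is no order-one obstruction to closing up, and that is exactly why the paper cannot afford to let the bundle grow past $\sim h^{-1/2}$ boxes. Your route (b), a determinant/null-Lagrangian comparison in the spirit of Lemma~\ref{lm9}, is more in the paper's stylistic orbit but is not what the paper does for this lemma; to make it run on a thin slab you would need uniform control of the third component of $\ut$ (hence of $\tau$ itself), a Chebyshev step to pass from $L^2$ to $L^\infty$-closeness, and a careful count of the powers of $h$ lost in each, none of which is supplied. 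The paper sidesteps all of this by integrating the hoop derivative over the $h^{1/2}$-scale bundle, and that is the idea your write-up needs.
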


The proof of the lemma will be given in Section~\ref{sec-lemma}. The desired lower bound is an easy consequence. 
%
Indeed, for any $R_{in} < r_0 < L-2h$ we know by Lemma~\ref{lm-lb} that
\begin{equation}\nonumber
 J(\A(r_0,r_0+h),\ut) \ge \eta(r_0)h^2.
\end{equation}
Adding such inequalities for $r_0 = R_{in} + kh$ such that $R_{in} \le r_0 \le (R_{in}+L)/2-2h$ we obtain
that 
\begin{equation}\nonumber
 J(\A(R_{in},L),\ut) \ge \sum_{k=0}^K \eta(r_0(k))h^2 \ge \eta\left((R_{in}+L)/2\right)\sum_{k=0}^K h^2 \ge Ch
\end{equation}
for some $C>0$, where $K = (R_{in}-L)/2h - 2$ and we used monotonicity of $\eta$.

Besides proving Lemma~\ref{lm-lb} (see the next section), the proof of the lower bound~\eqref{eq129} is finished.

\subsection{Proof of Lemma~\ref{lm-lb}}\label{sec-lemma}

\begin{proof}
Let us first sketch the idea of the proof. We assume $\ut$ has small energy (i.e. $J(\ut)$ is small), and want to compare it with $u_0$. We take a collection of $h^{-1/2}$ neighboring cubes in the hoop direction, each cube with side $h$. Using a rigidity theorem we show that $\ut$ is almost constant on each cube and doesn't change much between the cubes. 

Since $\ut$ has small energy, we prove that often $D\ut$ is larger than $Du_0$ in the hoop direction. After integration in the hoop direction we obtain that $\ut - u_0$ can not be small in most of the cubes. 
On the other hand, $\ut$ having small energy implies that $\ut - u_0$ has to be small in the $L^2$ sense, contradicting the previous fact. 

For better understanding we split our proof into several steps.

{\bf\noindent Step 1:}
Let us consider a part of $\A(r_0,r_0+h)$ with the length approximately $2h^{1/2}$. We set \[k := [h^{-1/2}]\] and for a given $\theta \in (0,2\pi)$ we define $2k$ cubes in the radial coordinates
\begin{equation}\nonumber
 Q_i = (r_0,r_0+h) \times (\theta+i\sigma,\theta+(i+1)\sigma)\times(-h/2,h/2),
\end{equation}
where $i=1,\dots,2k$ and $\sigma = h/r_0$. We denote by $\mathcal{Q}$ the union of those cubes and set 
\begin{multline}\nonumber
 \mathcal{J} := h\,J(\mathcal{Q},\ut) + |\mathcal{Q}|h^2 \\ = \int_{\mathcal{Q}} g_p\left(\dist(\grad \ut(x),SO(3)M(x'))\right) \dxyz + \int_{\mathcal{Q}} g_p\left(|\ut(x) - \tau(x_3) - u_0(x')|\right) \dxyz + |\mathcal{Q}|h^2.
\end{multline}

{\bf\noindent Step 2:}
We claim that $\mathcal{J} \ge Ch^{3.5}$ for some positive $C$. To prove the claim, we shall suppose  that
\begin{equation}\label{eqj}
 \mathcal{J} = \epsilon h^{3.5},
\end{equation}
and give a lower bound for $\epsilon$.
Since $M$ is defined in terms of $u_0$ (see~\eqref{eq-M}), we have $|\grad M(x')|\le C$ 
and consequently
\begin{equation}\nonumber
 \int_{Q_i} g_p \left( \dist (\grad \ut(x),SO(3)M_i)\right) \dxyz \le C\int_{Q_i} g_p\left(\dist(\grad \ut(x),SO(3)M(x'))\right) \dxyz + C|Q_i|h^2,
\end{equation}
where $M_i := M(x_i)$ with $x_i$ being a point in $Q_i$ (e.g. center of $Q_i$). Using the rigidity estimates of~\cite{bib-ago-dalmaso-desimone} (see also~\cite{bib-rigid1}) we obtain a rotation $R_i$ on each cube $Q_i$ such that
\begin{equation}\label{eq149}
 \int_{\mathcal{Q}} g_p\left(|\grad \ut(x) - R_iM_i|\right) \dxyz \le C\mathcal{J}
\end{equation}
and
\begin{equation}\nonumber
 \sum_{i=1}^{2k-1} g_p\left( |R_{i}-R_{i+1}| \right) \le C\mathcal{J}h^{-3}.
\end{equation}
Using convexity of $g_p$ and Jensen's inequality we get
\begin{equation}\label{eq200}
 g_p\left( \frac{|R_\alpha - R_{\alpha+\beta+1}|}{\beta} \right) \le \frac{\sum_{i=\alpha}^{\alpha+\beta} g_p\left(|R_i - R_{i+1}|\right)}{\beta} \le C\mathcal{J}h^{-3}\beta^{-1} \le Ch^{0.5}.
\end{equation}
Since the RHS in the previous relation is small, so is the argument of $g_p$ on the LHS (in particular, it is smaller than 1), and so we can take square root of both sides to obtain
\begin{equation}\nonumber
 \left|R_\alpha - R_{\alpha+\beta+1}\right| \le \sqrt{C\mathcal{J}h^{-3}k} =: \delta.
\end{equation}
We also know
\begin{equation}\nonumber
 R_i \in SO(3), \quad |M_i|\le C, \quad |n| = 1, \quad |M_i - M_j| \le Ckh, \quad |n(x) - n(y)| \le Ckh.
\end{equation}
Therefore we can choose one rotation $R_*$ (e.g. among $R_i$), a matrix $M_*$ (among $M_i$) and a unit vector $n_*^\perp$ such that for any $x'$
\begin{equation}\label{eq150}
 \left| R_iM_i n(x')^\perp - R_*M_*n_*^\perp \right| \le C\left(\delta + kh\right).
\end{equation}

{\bf\noindent Step 3:}
For $j\in \left\{ 1,\dots,k\right\}$ we have
\begin{multline}\label{eq151}
 \left(\int_{Q_{j+k}} \ut(x) -\tau(x_3) - u_0(x') \dxyz\right)  - \left(\int_{Q_{j}} \ut(x) - \tau(x_3) - u_0(x') \dxyz\right) \\
 = \int_{D_j} \left(\grad \ut(x) - \grad u_0(x')\right)n(x')^\perp \varphi_j(x') \dxyz \\
 = \int_{D_j} \left[ \left(\grad \ut(x) - R_iM_i\right)n(x')^\perp\! +\! \left( R_iM_in(x')^\perp - R_*M_*n_*^\perp\right)\right. + \\ \left. \left(R_*M_*n_*^\perp - \grad u_0(x')n(x')^\perp \right) \right]\varphi_j(x') \dxyz,
\end{multline}
where $D_j := \bigcup_{i=j}^{j+k} Q_i$ and $0 \le \varphi_j(x') \le h$ is a weight coming from the integration (more precisely, $\varphi_j$ is linear in $Q_j$ going from $0$ to $h$, $\varphi_j(x') = h$ on $Q_{j+1},\dots,Q_{j+k-1}$, and then decays linearly from $h$ to $0$ in $Q_{j+N}$). We point out that $n^\perp$ used in~\eqref{eq151} means a unit vector in the orthoradial (hoop) direction. 
The first two parts of the last integral can be directly estimated from~\eqref{eq149} and~\eqref{eq150} using convexity of $g_p$ and Jensen's inequality as in~\eqref{eq200}:
\begin{equation}\label{eq153a}
\left| \int_{D_j} \left(\grad \ut(x) - R_iM_i\right)n(x')^\perp \varphi_j(x') \dxyz \right| \le Ch \mathcal{J}^{1/2}|D_j|^{1/2}
\end{equation}
and
\begin{equation}\label{eq153b}
\left| \int_{D_j} \left( R_iM_in(x')^\perp - R_*M_*n_*^\perp\right) \varphi_j(x') \dxyz \right| \le C\left|D_j\right|\left(\delta h + kh^2\right).
\end{equation}

{\bf\noindent Step 4:}
Now we show a lower bound for the remaining term $\int_{D_j}\!\! \left(R_*M_*n_*^\perp\! -\!\grad u_0(x)n(x')^\perp\!\right) \varphi_j(x') \dxyz$
using the fact that $\grad u_0$ is ``smaller'' in the hoop ($n^\perp$) direction than $R_*M_*$. 
Arguing as in the proof of~\eqref{eq-b} we see that for any unit planar vector $\chi$ we have
\begin{equation}\nonumber
 |M_*\chi| - |\grad u_0(x') n(x')^\perp| \ge \kappa > 0,
\end{equation}
where $\kappa$ depends monotonically only on the radial position $r = |x'|$ (and approaches $0$ as $r_0 \to L$).  Therefore (using $\kappa_0 := \kappa(r_0+h)$)
\begin{equation}\nonumber
\begin{aligned}
 \Bigg|\int_{D_j} \left(R_*M_*n_*^\perp -\grad u_0(x)n(x')^\perp\right) \Bigg. & \Bigg. \varphi_j(x') \dxyz\Bigg| 
\\ &\ge \left| R_*M_*n_*^\perp \int_{D_j} \varphi_j(x') \dxyz\right| - \left|\int_{D_j} Du_0(x') n(x')^\perp \varphi_j(x') \dxyz\right|\\
&\ge |M_*n_*^\perp|\int_{D_j} \varphi_j(x') \dxyz - \int_{D_j} \left|\grad u_0(x') n(x')^\perp\right|\varphi_j(x') \dxyz\\ 
&= \int_{D_j} \left(|M_*n_*^\perp| - \left|\grad u_0(x') n(x')^\perp\right|\right) \varphi_j(x') \dxyz \\ 
&\ge \int_{D_j} \kappa_0 \varphi_j(x') \dxyz \ge \kappa_0 h |D_j|/2.
\end{aligned}
\end{equation}

{\bf\noindent Step 5:}
Using~\eqref{eq153a} and~\eqref{eq153b} together with the last relation we see from~\eqref{eq151} that
\begin{multline}\label{eq154}
 \left(\int_{Q_{j+k}} \ut(x) -\tau(x_3) - u_0(x') \dxyz\right)  - \left(\int_{Q_{j}} \ut(x) - \tau(x_3) - u_0(x') \dxyz\right) \ge \\ \left|D_j\right| h \left( \kappa_0/2 - \delta - kh\right) - Ch \mathcal{J}^{1/2}|D_j|^{1/2}.
\end{multline}
From~\eqref{eqj} and $ \delta = \sqrt{C \mathcal{J} k h^{-3}} \approx \sqrt{\mathcal{J} h^{-3.5}} \le \sqrt{\epsilon}$, we see that $\kappa_0/2 - \delta - kh \ge \kappa_0/4 > 0$ for $\epsilon \lesssim C\kappa_0^2$ and small $h$.

To finish the argument we sum~\eqref{eq154} over $j = 1,\dots,k$ to obtain
\begin{equation}\label{eq201}
 Ck|\mathcal{Q}|h(\kappa_0/2 - \delta - kh) \le \int_\mathcal{Q} |\ut(x) - \tau(x_3) - u_0(x)| \dxyz + Chk\mathcal{J}^{1/2}|\mathcal{Q}|^{1/2},
\end{equation}
where we have used that $|\mathcal{Q}|$ and $|D_j|$ are comparable. 
Using the convexity of $g_p$ and Jensen's inequality we have
\begin{equation}\nonumber
 \int_\mathcal{Q} |\ut(x) - \tau(x_3) - u_0(x)| \dxyz \le \mathcal{J}^{1/2}|\mathcal{Q}|^{1/2},
\end{equation}
and so after plugging the values of $\delta$, $|\mathcal{Q}|$, and $k$ into~\eqref{eq201} we see that
\begin{equation}\nonumber
 h^3 \kappa_0 \le C\left(h^{1.25} + h^{1.75}\right)\mathcal{J}^{1/2} \le Ch^{1.25} \mathcal{J}^{1/2}.
\end{equation}
Hence we obtain 
\begin{equation}\nonumber
 \mathcal{J} \ge C\kappa_0^2 h^{3.5}\qquad\mathrm{and}\qquad J(\mathcal{Q},\ut) \ge C\kappa_0^2h^{2.5}.
\end{equation}
Finally we cover the annulus $\mathcal{A}(r_0,r_0+h)$ with approximately $2\pi r_0 h^{-1/2}$ distinct copies of $\mathcal{Q}$ to obtain
\begin{equation}\nonumber
  J(\A(r_0,r_0+h),\ut) \ge \eta(r_0)h^2,
\end{equation}
where $\eta(r_0) = C r_0 \kappa_0^2$.
\end{proof}

\section{Discussion}

We have identified the scaling law 
for the minimum of the energy. We have achieved this by (i) constructing a family of functions with low energy; (ii) proving a lower bound on the energy with the same scaling as in (i). 

In the construction for the upper bound we introduced a cascade of wrinkles to obtain the linear scaling in thickness. As we approach the boundary between relaxed and non-relaxed region, the amplitude of the out-of-plane deformation vanishes, and so does the bending energy. The decay of the amplitude balances the increase in the bending energy due to the increasingly fine scale wrinkles.

The ``naive construction'' discussed at the beginning of Section~\ref{sec-2d} has no cascade, but its energy is $\mathcal{E}_0 + Ch|\log h|$ rather than $\mathcal{E}_0 + Ch$. It is natural to ask whether something similar to our cascade is {\it required} to get the optimal scaling. 

This paper has focused on the energy scaling law. It is natural to ask how the energy is distributed more locally; for example, is the optimal distribution (with respect to radius) similar to that of the construction giving our upper bound? It is equally natural to ask what the minimizer looks like; for example, must the amplitude and wavelength of wrinkling at radius $r$ resemble these of our construction?

It seems worth noting that while we have repeatedly used the Euler-Lagrange equation for the relaxed problem (to characterize the relaxed solution), we never used the Euler-Lagrange equation from the original problem. In related studies, such as~\cite{bib-bob+muller1}, minimizers are known to have special properties. We expect the same to be true in the present setting, but the analysis of minimizers will require new techniques. 

\section*{Appendix}

\renewcommand{\thelemma}{A.\arabic{lemma}}
\renewcommand{\theequation}{A.\arabic{equation}}

In this section we prove several lemmas which were used previously. Lemma~\ref{lm7} and Lemma~\ref{lm8} were used in the proof of the lower bound in the three-dimensional setting (Section~\ref{lb3D}). Poincar\' e inequality (Lemma~\ref{lmPoincare}) was used in the proof of the lower bound in the two-dimensional setting (Section~\ref{sect-lb2d}), Lemma~\ref{lm6} and Remark~\ref{rmk1} were used in the proof of Lemma~\ref{lm3}, and Lemma~\ref{lm5} was used in the definition of $\w$ (see~\eqref{eq049}).

\begin{lemma}\label{lm5}
Let $\wt$ be a stored energy function of an isotropic elastic material with
\begin{equation}\nonumber
 \wt(F) = g(I_1,I_2,J),
\end{equation}
where $J := \det (F)$, $C := F^TF$, and 
\begin{gather}\nonumber
 I_1 := J^{-2/3}\, \mathrm{tr}(C), \quad I_2 := \frac{J^{-4/3}}{2}\left( (\mathrm{tr}(C))^2 - \mathrm{tr}(C^2) \right).
\end{gather}
If $g$ satisfies 
\begin{gather}\label{eq-app}
 \frac{\partial g}{\partial I_1}(I_1,I_2,J) \ge 0, \quad \frac{\partial g}{\partial I_2}(I_1,I_2,J) \ge 0
\end{gather}
for $I_1 \ge 3$, $I_2 \ge 3$, and $J > 0$, then for any $M \in \R^{3\times 2}$ we have
\begin{equation}\nonumber
 \min_{\xi \in \R^3} \wt(M|\xi) = \min_{\xi \in \R^3, M \perp \xi} \wt(M|\xi),
\end{equation}
where $(M|\xi)$ denotes a $3 \times 3$ matrix with first two columns identical with $M$ and the third column $\xi$, and $M \perp \xi$ means the columns of $M$ are perpendicular to  $\xi$. 
If, moreover
\begin{gather}\label{eq-app2}
 \frac{\partial g}{\partial I_1}(I_1,I_2,J) + \frac{\partial g}{\partial I_2}(I_1,I_2,J) > 0
\end{gather}
for $I_1 \ge 3$, $I_2 \ge 3$, and $J > 0$, the minimum of $\wt(M|\xi)$ is attained only if $M \perp \xi$.
\end{lemma}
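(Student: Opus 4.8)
The plan is to reduce the three-dimensional minimization over $\xi$ to a two-variable problem by exploiting the structure of the isotropic invariants $I_1, I_2, J$ when written in a well-chosen orthonormal frame. First I would fix $M \in \mathbb{R}^{3\times 2}$ and choose an orthonormal basis $\{e_1', e_2', e_3'\}$ of $\mathbb{R}^3$ such that $e_3'$ is orthogonal to both columns of $M$ (this is possible since the columns of $M$ span at most a two-dimensional subspace). Writing $\xi = \xi_\parallel + t\, e_3'$ with $\xi_\parallel$ in the span of the columns of $M$, I want to show that replacing $\xi$ by its normal component $t\, e_3'$ can only decrease (or leave unchanged) $\widetilde{W}(M|\xi)$. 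The key computation is to see how $J$, $\mathrm{tr}(C)$, and $\mathrm{tr}(C^2)$ depend on $\xi_\parallel$ for fixed $t$: here $J = \det(M|\xi) = \det(M|t\, e_3')$ is \emph{independent} of $\xi_\parallel$ (adding a multiple of the first two columns to the third does not change the determinant), while $\mathrm{tr}(C) = |M_1|^2 + |M_2|^2 + |\xi|^2 = |M_1|^2 + |M_2|^2 + |\xi_\parallel|^2 + t^2$ is minimized (for fixed $t$) at $\xi_\parallel = 0$.

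The second step is to handle $\mathrm{tr}(C^2)$, equivalently $I_2$, which is the only mildly delicate part. Using $(\mathrm{tr} C)^2 - \mathrm{tr}(C^2) = 2\big(\|C\|\text{-type principal-minor sum}\big)$, I would express $I_2$ up to the fixed factor $J^{-4/3}$ as the sum of the $2\times 2$ principal minors of $C = (M|\xi)^T(M|\xi)$. A direct expansion shows each such minor is a sum of squared $2\times 2$ subdeterminants of $(M|\xi)$ (by Cauchy–Binet applied to $C = A^T A$ with $A = (M|\xi)$), and again adding a multiple of the first two columns to $\xi$ does not change any $2\times 2$ subdeterminant that involves the third column together with a first/second column only through $t\,e_3'$; the subdeterminant using only the first two columns is fixed. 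Hence $I_2$, like $J$, depends on $\xi$ only through $t$, while $I_1 = J^{-2/3}\mathrm{tr}(C)$ depends on $\xi_\parallel$ only through the increasing factor $\mathrm{tr}(C)$. Then the monotonicity hypothesis~\eqref{eq-app} ($\partial g/\partial I_1 \ge 0$, $\partial g/\partial I_2 \ge 0$) immediately gives
\[
 g\big(I_1(M|\xi_\parallel + t e_3'),\, I_2(M|t e_3'),\, J(M|t e_3')\big) \ge g\big(I_1(M|t e_3'),\, I_2(M|t e_3'),\, J(M|t e_3')\big),
\]
since decreasing $\xi_\parallel$ to $0$ can only decrease $I_1$ (it lowers $\mathrm{tr} C$ while $J$ is unchanged) and leaves $I_2, J$ fixed. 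This proves $\min_\xi \widetilde W(M|\xi) = \min_{\xi \perp M} \widetilde W(M|\xi)$.

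For the strict statement, suppose $\xi^*$ realizes the global minimum but $\xi^*_\parallel \ne 0$. Then $t^* e_3'$ (its normal component) has $\mathrm{tr}(C)$ strictly smaller, hence $I_1$ strictly smaller, while $I_2, J$ are unchanged; under the additional hypothesis~\eqref{eq-app2}, $\partial g/\partial I_1 + \partial g/\partial I_2 > 0$ together with $\partial g/\partial I_2 \ge 0$ forces $\partial g/\partial I_1 > 0$ at the relevant point (or, if $\partial g/\partial I_1 = 0$ there, then $\partial g/\partial I_2 > 0$ and one argues via a nearby point where $I_1$ strictly drops while staying in the monotonicity region), so the value strictly decreases — contradicting minimality. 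The main obstacle I anticipate is purely bookkeeping: verifying cleanly that $I_2$ (not just $J$) is invariant under $\xi \mapsto \xi + (\text{column combination of } M)$. The cleanest route is the Cauchy–Binet identity
\[
 2 I_2 \, J^{4/3} = (\mathrm{tr}\, C)^2 - \mathrm{tr}(C^2) = \sum_{1 \le i < j \le 3} \big(\det (A_{\cdot i} \mid A_{\cdot j})\big)^2 \ \text{summed over column pairs of } A = (M|\xi),
\]
wait — more precisely $\sum$ over $2\times 2$ minors of $A$; once this is in hand, the invariance is transparent because every minor involving the third column of $A$ can have the first-two-column contribution of $\xi_\parallel$ eliminated by column operations, which preserve all such minors. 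One must also check the determinant sign condition $\det(F) > 0$ is not an obstruction — it is automatically inherited since $J$ is unchanged under the reduction — and that the invariants stay in the region $I_1 \ge 3$, $I_2 \ge 3$ where the hypotheses are assumed, which follows from the AM–GM-type inequalities $I_1 \ge 3$, $I_2 \ge 3$ valid for all $F$ with $\det F > 0$.
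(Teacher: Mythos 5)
Your approach—choose a frame with $e_3'$ normal to the columns of $M$, decompose $\xi = \xi_\parallel + t\,e_3'$, and argue that sliding $\xi_\parallel$ to $0$ can only lower the invariants—is the same idea as the paper's proof (the paper uses rotational invariance to put $M$ in the form $\mathrm{diag}(\lambda_1,\lambda_2)$ padded with a zero row, so $\xi = (x,y,z)$ with $\xi_\parallel = (x,y,0)$ and $t e_3' = (0,0,z)$). But there is a genuine error in your treatment of $I_2$: you assert that $I_2$ (equivalently $(\mathrm{tr}\,C)^2 - \mathrm{tr}(C^2)$, the sum of squares of all $2\times2$ minors of $A = (M|\xi)$) is invariant under $\xi \mapsto \xi + \xi_\parallel$. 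It is not. Adding a multiple of column $1$ to column $3$ changes the $2\times2$ minors that use columns $\{2,3\}$ (they pick up a multiple of the corresponding column-$\{1,2\}$ minor), and the sum of their squares is not preserved by such a shear; only $\det A$ is. Concretely, in the paper's coordinates one gets
\begin{equation*}
 I_2 = (\lambda_1\lambda_2 z)^{-4/3}\bigl(\lambda_1^2\lambda_2^2 + \lambda_1^2 z^2 + \lambda_2^2 z^2 + \lambda_1^2 y^2 + \lambda_2^2 x^2\bigr),
\end{equation*}
which clearly depends on $x$ and $y$. The tell-tale symptom is that your argument never uses the hypothesis $\partial g/\partial I_2 \ge 0$, even though it is explicitly assumed; the paper needs it precisely because $I_2$ is an increasing function of $x^2$ and $y^2$.

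The fix is easy and brings you back to the paper's argument: for fixed $t$ (i.e., fixed $z$), $J$ is constant in $\xi_\parallel$, and \emph{both} $I_1$ and $I_2$ are nondecreasing (in fact strictly increasing) in $|\xi_\parallel|$; hence both monotonicity hypotheses $\partial g/\partial I_1 \ge 0$ and $\partial g/\partial I_2 \ge 0$ are used, and the minimum over $\xi_\parallel$ is at $\xi_\parallel = 0$. For the strict statement your reasoning is also muddled (``$\partial g/\partial I_1 + \partial g/\partial I_2 > 0$ together with $\partial g/\partial I_2 \ge 0$ forces $\partial g/\partial I_1 > 0$'' is false); the correct argument is that both $I_1$ and $I_2$ strictly decrease when $\xi_\parallel$ is set to zero, $J$ stays fixed, and since both partials are $\ge 0$ with positive sum at least one is strictly positive, so $g$ strictly decreases along the path. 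Your observation that $I_1, I_2 \ge 3$ (AM--GM) holds throughout and keeps the path in the monotonicity region is correct and matches the paper.
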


\begin{proof}
 Isotropy of the material implies rotational invariance of the energy density $\wt$, and so without loss of generality we can assume that
\begin{equation}\label{eq050}
 M = \begin{pmatrix} \lambda_1 & 0 \\ 0 & \lambda_2 \\ 0 & 0 \end{pmatrix}.
\end{equation}
Therefore we want to minimize
\begin{equation}\nonumber
 \wt \begin{pmatrix} \lambda_1 & 0 & x \\ 0 & \lambda_2 & y \\ 0 & 0 & z\end{pmatrix}
\end{equation}
among all possible $x,y,z$. A simple calculation reveals that
\begin{equation}\label{eq051}
\begin{aligned}
 J &= \lambda_1 \lambda_2 z, \\
 I_1 &= \left(\lambda_1 \lambda_2 z \right)^{-2/3} \left( \lambda_1^2 + \lambda_2^2 + x^2 + y^2 + z^2\right), \\
 I_2 &= \left(\lambda_1 \lambda_2 z \right)^{-4/3} \left( \lambda_1^2 \lambda_2^2 + \lambda_1^2 z^2 + \lambda_2^2 z^2 + \lambda_1^2 y^2 +  \lambda_2^2 x^2\right).
\end{aligned}
\end{equation}
Using the AM-GM inequality we see that $I_1 \ge 3$ and $I_2 \ge 3$ even if $x=y=0$. Fixing $z$ and varying $x$ and $y$ the value of $J$ stays constant whereas $I_1$ and $I_2$ are increasing functions of $x^2$ and $y^2$. Therefore \eqref{eq-app} implies that $\wt$ has its minimal value (for any $z$ fixed) if $x=y=0$. In the case~\eqref{eq-app2} the conclusion follows by the same reasoning.
\end{proof}

\begin{lemma}\label{lm7}
 Let $\wt$ satisfy~\eqref{eq031} and \eqref{eq049}. 
 Then for any $M \in \R^{3 \times 2}$ \footnotedva{Lemma~\eqref{lm7} - removed assumption $\det M > 0$} there exists a constant $C > 0$ such that
 \begin{equation}\nonumber
  \wt(M|v) - \w(M) = \wt(M|v) - \wt(M|\xi) \ge Cg_p\left(|v - \xi|\right) = Cg_p(| (M|v) - (M|\xi) | ),
 \end{equation}
 where $\xi = \argmin \wt(M|\ \cdot)$.
\end{lemma}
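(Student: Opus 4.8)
I would reduce to a one-parameter minimization using isotropy, and then argue by the size of $v$: for ``large'' $v$ the inequality comes from the $p$-th power growth in~\eqref{eq031}, and for ``bounded'' $v$ from strict convexity of $v\mapsto\wt(M|v)$ at its minimizer. Concretely: using the singular value decomposition of $M$ and isotropy of $\wt$ we may assume
$$M=\begin{pmatrix}\lambda_1 & 0\\ 0 & \lambda_2\\ 0 & 0\end{pmatrix},\qquad \lambda_1\ge\lambda_2\ge 0 .$$
Since~\eqref{eq049} holds, Lemma~\ref{lm5} gives that every minimizer $\xi$ of $v\mapsto\wt(M|v)$ is orthogonal to the columns of $M$, so $\xi=t_0e_3$ with $t_0>0$ (positive-determinant convention), and $t_0$ is the \emph{unique} minimizer of $t\mapsto\ft(\lambda_1,\lambda_2,t)$ on $(0,\infty)$ because $D^2\ft>0$. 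Writing $\phi(v):=\wt(M|v)$, the representation $\wt=g(I_1,I_2,J)$ makes $\phi$ a $C^2$ function on $\{\det(M|v)>0\}$ with $\nabla\phi(\xi)=0$ and $\phi\ge\phi(\xi)=\w(M)$; we must show $\phi(v)-\phi(\xi)\ge C\,g_p(|v-\xi|)$, and we will use that $g_p(t)\le t^2$ for all $t\ge0$ and $g_p(t)\le Ct^p$ for $t\ge1$.

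\textbf{Large $v$.} By~\eqref{eq031}, $\phi(v)\ge C_0(\lambda_1^2+\lambda_2^2+|v|^2)^{p/2}-C_1\ge C_0|v|^p-C_1$. Fixing $R_1=R_1(M)\ge 2(t_0+1)$ with $\tfrac{C_0}{2}R_1^p\ge C_1+\w(M)$, for $|v|\ge R_1$ we get $\phi(v)-\phi(\xi)\ge\tfrac{C_0}{2}|v|^p$, while $\tfrac12|v|\le|v-\xi|\le\tfrac32|v|$ and $|v-\xi|\ge1$, so $g_p(|v-\xi|)\le C|v-\xi|^p\le C'|v|^p$; the claim follows in this regime.

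\textbf{Bounded $v$.} For $|v|\le R_1$ it is enough, since $g_p(t)\le t^2$, to find $c>0$ with $\phi(v)-\phi(\xi)\ge c|v-\xi|^2$. A second-order Taylor expansion at $\xi$, using $\nabla\phi(\xi)=0$, yields
$$\phi(v)-\phi(\xi)=\int_0^1(1-s)\,(v-\xi)^{T}D^2\phi\big(\xi+s(v-\xi)\big)(v-\xi)\,\ud s ,$$
where $D^2\phi(v)$ is the $3\times3$ Hessian block of $D^2\wt(M|v)$ in the last-column entries. The key point is that $D^2\phi(\xi)$ is positive definite: it is positive semidefinite because $\xi$ minimizes $\phi$, and strict positivity follows from $D^2\ft>0$ combined with the orthogonality of the columns of $(M|\xi)$ (which makes the singular-value map smooth near $(M|\xi)$, with the usual care when two singular values coincide). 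Given $D^2\phi(\xi)>0$, continuity yields $D^2\phi\ge cI$ near $\xi$, hence the quadratic bound there; and on the complementary bounded region $\{|v-\xi|\ge\rho_0,\ |v|\le R_1,\ \det(M|v)>0\}$ the continuous function $\phi-\phi(\xi)$ is bounded below by a positive constant, using uniqueness of the minimizer $\xi$ (together with the positive-determinant convention, which keeps $\phi$ away from $\phi(\xi)$ near $\{\det(M|v)=0\}$). Combining the three regimes gives the lemma.

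\textbf{Main obstacle.} The step I expect to be hardest is the strict positive definiteness of $D^2\phi(\xi)$, i.e.\ local strict convexity of $v\mapsto\wt(M|v)$ at its minimizer: this is precisely where the strict inequality $D^2\ft>0$ in~\eqref{eq031} is needed, and it must be combined with a careful analysis of the singular-value map at the orthogonal matrix $(M|\xi)$ (in particular at coincident singular values). The non-compactness near $\{\det(M|v)=0\}$ in the bounded regime is a minor nuisance, settled via uniqueness of the minimizer.
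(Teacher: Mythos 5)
Your reduction to diagonal $M$ and your split into a ``large $v$'' regime (handled by the $p$-th power growth in~\eqref{eq031}) and a ``bounded $v$'' regime matches the paper's structure, but in the bounded regime there is a real gap. You propose a second-order Taylor expansion of $\phi(v):=\wt(M|v)$ at the minimizer $\xi$ and you correctly flag that the crux is showing $D^2\phi(\xi)>0$ -- but you do not actually prove it, and it is not a minor technical point. The Hessian of $F\mapsto\wt(F)$ for an isotropic density expressed through singular values is not simply controlled by $D^2\ft>0$: the perturbation theory for singular values produces denominators of the form $\lambda_i^2-\lambda_j^2$, so the naive formula for $D^2\phi(\xi)$ is an indeterminate $0/0$ whenever two singular values of $(M|\xi)$ coincide (e.g.\ when $\lambda_1$ equals the natural third component $Z$). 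Saying ``with the usual care when two singular values coincide'' names the obstacle without removing it; as written, the bounded-$v$ case is not established.

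The paper sidesteps this entirely by decomposing $v-\xi$ into two simpler variations, neither of which requires any Hessian of $\wt$ on the matrix space. Writing $v=(x,y,z)$, $\xi=(0,0,Z)$ in the diagonal frame~\eqref{eq050}, and introducing the intermediate point $w=(0,0,z)$, one splits
\begin{equation*}
 \wt(M|v)-\wt(M|\xi) \;=\; \bigl[\wt(M|v)-\wt(M|w)\bigr] \;+\; \bigl[\wt(M|w)-\wt(M|\xi)\bigr].
\end{equation*}
In the first bracket the determinant $J$ is unchanged, while $I_1$ and $I_2$ increase by amounts that are \emph{explicitly quadratic} in $(x,y)$; a first-order mean-value estimate in the $(I_1,I_2,J)$-representation then yields $\wt(M|v)-\wt(M|w)\ge C(x^2+y^2)$ directly from the sign conditions~\eqref{eq049}, with no second-derivative computation and no singular-value perturbation theory at all. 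The second bracket is a genuinely one-dimensional Taylor expansion of $t\mapsto\ft(\lambda_1,\lambda_2,t)$ at its critical point $Z$, where $\partial_3\ft=0$ and $\partial_{33}\ft>0$ give the quadratic bound. Adding the two brackets gives the claim for bounded $v$. This is the idea you should adopt: it converts the hard matrix-Hessian question (which degenerates at coincident singular values) into a trivial scalar one, precisely by choosing the intermediate point $w$ to decouple the ``off-diagonal'' variation from the ``thickness'' variation.
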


\begin{proof}
 If $|v - \xi|$ is large, the conclusion follows from the growth condition~\eqref{eq031}. Let us therefore assume that $|v - \xi| \le K$. 
 As in the proof of the previous lemma we may assume $M$ satisfies~\eqref{eq050}. We write
 \begin{gather}\nonumber
  v = (x,y,z)^T,\quad \xi = (0,0,Z)^T,
 \end{gather}
 where $\xi$ can be written in this form due to the previous lemma. We have
 \begin{equation}\nonumber
  \wt(M|v) - \wt(M|\xi) = \left(\wt(M|v) - \wt(M|w)\right) + \left(\wt(M|w) - \wt(M|\xi)\right),
 \end{equation}
 where 
 \begin{equation*}
 \quad w := (0,0,z)^T. 
 \end{equation*} 
 We estimate 
 \begin{equation}\label{eq052}
 \begin{aligned}
  \wt(M|v) - \wt(M|w) &= g(J,I_1,I_2) - g(J,\bar I_1, \bar I_2) \\ &= \nabla g \cdot (0,J^{-2/3}(x^2 + y^2), J^{-4/3}(\lambda_1^2 y^2 + \lambda_2^2 x^2)) \ge C(x^2 + y^2),
 \end{aligned}
 \end{equation}
 where $J,I_1,I_2$ are defined in~\eqref{eq051} and
\begin{equation}\nonumber
\begin{aligned}
 \bar I_1 &= \left(\lambda_1 \lambda_2 z \right)^{-2/3} \left( \lambda_1^2 + \lambda_2^2 + z^2\right), \\
 \bar I_2 &= \left(\lambda_1 \lambda_2 z \right)^{-4/3} \left( \lambda_1^2 \lambda_2^2 + \lambda_1^2 z^2 + \lambda_2^2 z^2 \right).
\end{aligned}
\end{equation}
We also have
\begin{equation}\label{eq053}
 \begin{aligned}
  \wt(M|w) - \wt(M|\xi) &= \ft(\lambda_1,\lambda_2,z) - \ft(\lambda_1,\lambda_2,Z) \\ &= \partial_{33}\ft (\lambda_1,\lambda_2,\zeta) (z-Z)^2 / 2  \ge C(z-Z)^2,
 \end{aligned}
 \end{equation}
\footnotedva{added $/2$ in A.10}
 where we have used that $\partial_3 \ft(\lambda_1,\lambda_2,Z) = 0$ and $D^2\ft > 0$ (see~\eqref{eq031}). Adding~\eqref{eq052} and~\eqref{eq053} we obtain the desired inequality.  
\end{proof}


\begin{lemma}\label{lm6}
 Let the density function $\wt$ satisfy \eqref{eq031} and \eqref{eq049}. For any unit vector $u_1 \in \R^3$ and another vector $v_1 \in \R^3$, $|v_1| > 1$, we define 
\begin{equation*}
 \widetilde F := \argmin \left\{ \wt(F) : F \in \R^{3\times 3} , F u_1 = v_1, \det F > 0 \right\}.
\end{equation*}
Then there exists an orthonormal basis $u_1,u_2,u_3$ and orthogonal vectors $v_1,v_2,v_3$ such that
\begin{equation}\label{eq007}
 \begin{gathered}
 \widetilde F u_i = v_i,\quad i=1,2,3,\\ |v_2| = |v_3| = g(|v_1|),
 \end{gathered}
\end{equation}
where $g(t)$ is a Lipschitz continuous function.
Moreover, there exist $0 < C_1 = C_1(v_1) < C_2=C_2(v_1)$ such that
 \begin{equation}\label{eq-quad3}
  C_1 g_p\left(\mathrm{dist} (G, SO(3)\widetilde F)\right) \le \wt (G) - \wt (\widetilde F) \le C_2\dist^2 (G, SO(3)\widetilde F),
\end{equation}
 for any $G$ satisfying $G u_1 = v_1$, $\det G > 0$ (function $g_p$ was defined in~\eqref{gp}).
\end{lemma}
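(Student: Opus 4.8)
By frame indifference and isotropy of $\wt$ we may compose with fixed rotations on both sides and assume $u_1=e_1$, $v_1=\mu e_1$ with $\mu:=|v_1|>1$; the minimal value then depends on $v_1$ only through $\mu$. As in Lemma~\ref{lm5}, isotropy reduces the minimization to one over principal stretches: if $Fe_1=\mu e_1$ then $e_1^{T}F^{T}Fe_1=\mu^2$, so Courant--Fischer gives $\lambda_3\le\mu\le\lambda_1$, and conversely every triple $\lambda_1\ge\lambda_2\ge\lambda_3>0$ with $\lambda_3\le\mu\le\lambda_1$ is realized (take $S=\sqrt{F^{T}F}$ with $e_1^{T}S^2e_1=\mu^2$, then $R\in SO(3)$ with $R\,Se_1=v_1$ and $F:=RS$). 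Hence $\wt(\widetilde F)=\min\{\ft(\lambda_1,\lambda_2,\lambda_3):\lambda_1\ge\lambda_2\ge\lambda_3>0,\ \lambda_3\le\mu\le\lambda_1\}$. Symmetry of $\ft$ together with $D^2\ft>0$ forces $\lambda_2=\lambda_3$ at the minimizer; moreover the diagonal slice $h(\mu):=\min_{t>0}\ft(\mu,t,t)$ is strictly convex --- this follows from positivity of the $2\times2$ form obtained by restricting $D^2\ft$ to directions $(\delta_1,\delta,\delta)$ --- with $h(1)=h'(1)=0$, so $a:=h'(\mu)=\partial_1\ft(\mu,g(\mu),g(\mu))>0$ for $\mu>1$ and $g(\mu)<\mu$, where $g(\mu):=\argmin_{t>0}\ft(\mu,t,t)$. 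In particular the constraint $\lambda_1=\mu$ is active and $\widetilde F=R\,\mathrm{diag}(\mu,g(\mu),g(\mu))\,Q^{T}$; undoing the rotations produces the orthonormal frame $u_1,u_2,u_3$ and orthogonal $v_1,v_2,v_3$ with $\widetilde Fu_i=v_i$ and $|v_2|=|v_3|=g(|v_1|)$. Lipschitz continuity of $g$ follows from the implicit function theorem applied to $\partial_2\ft(\mu,g,g)=0$, using positivity of the $(2,3)$-block of $D^2\ft$ and $D^2\ft\le C$; coercivity in \eqref{eq031} keeps $g$ bounded.

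\textbf{Reduction and the upper bound in \eqref{eq-quad3}.} When $|G|$ is large, $\dist(G,SO(3)\widetilde F)\asymp|G|$, so the lower bound follows from the $p$-growth in \eqref{eq031} ($\wt(G)-\wt(\widetilde F)\ge C|G|^p-C\ge c\,g_p(\dist)$) and the upper bound from $D^2\wt\le C$ (Taylor from $(1,1,1)$ gives $\wt(G)\le C(1+|G|^2)\le C'\dist^2$). It remains to treat $|G|\le M$, $M=M(v_1)$, where $g_p(\dist)\asymp\dist^2$ and it suffices to prove $c\,\dist^2\le\wt(G)-\wt(\widetilde F)\le C\,\dist^2$ for $G$ in the constraint set $\{Ge_1=\mu e_1\}$; here one uses that $\wt$ is $C^2$ near $\widetilde F$ despite the repeated stretch $g(\mu)$ --- because $\ft$ is symmetric, $\wt$ is a smooth function of $\sigma_1$ (simple at $\widetilde F$) and of $\sigma_2^2+\sigma_3^2,\ \sigma_2^2\sigma_3^2$, all smooth in $G$. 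For the upper bound let $R_*$ realize $d:=\dist(G,SO(3)\widetilde F)=|G-R_*\widetilde F|$. Since $G$ and $\widetilde F$ have the same first column $\mu e_1$, $|\mu R_*e_1-\mu e_1|=|R_*\widetilde Fe_1-Ge_1|\le d$, hence $1-(R_*)_{11}=\tfrac12|R_*e_1-e_1|^2\le d^2/2\mu^2$. At the constrained minimum the $e_2$- and $e_3$-columns of $D\wt(\widetilde F)$ vanish, so $D\wt(\widetilde F)=a\,e_1\otimes e_1$; with $D\wt(R_*\widetilde F)=R_*D\wt(\widetilde F)$ and $Ge_1=\mu e_1$ the first-order Taylor term equals $D\wt(R_*\widetilde F):(G-R_*\widetilde F)=a\mu\big((R_*)_{11}-1\big)$, which is $\le 0$ and of size $O(d^2)$; combined with $|D^2\wt|\le C$ this gives $\wt(G)-\wt(\widetilde F)\le Cd^2$.

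\textbf{The lower bound.} The inequality I would establish first is pointwise in the singular values: since $(\mu,g(\mu),g(\mu))$ minimizes $\ft$ over $\{\sigma_3\le\mu\le\sigma_1\}$ with active constraint $\sigma_1=\mu$ and KKT gradient $(a,0,0)$, $a>0$, strict convexity $D^2\ft>0$ yields
\[
 \wt(G)-\wt(\widetilde F)=\ft(\sigma(G))-\ft(\mu,g,g)\ \ge\ a\big(\sigma_1(G)-\mu\big)+c_0\,\big|\sigma(G)-(\mu,g,g)\big|^2\ \ge\ 0 .
\]
It then remains to bound the right-hand side below by $c\,d^2$ for $G$ in the constraint set. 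For such $G$, $\dist(G,SO(3)\widetilde F)$ is comparable to the distance to the one-dimensional set $\mathcal{R}$ of constrained minimizers (rotations of $\widetilde F$ fixing $e_1$, equivalently rotations of $v_2,v_3$ in their own plane), and the right-hand side above vanishes exactly on $\mathcal{R}$; so by a compactness/blow-up argument the claim reduces to showing that along $\widetilde F+tH$, for a unit direction $H$ with $He_1=0$ and $H\notin T_{\widetilde F}\mathcal{R}$, one has $a(\sigma_1-\mu)+|\sigma-(\mu,g,g)|^2\gtrsim t^2$.

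\textbf{The main obstacle.} I expect this last estimate --- the quadratic ``well'' bound at the degenerate constrained reference state $\widetilde F$ --- to be the hard part, both because $\widetilde F$ has a repeated principal stretch (so the leading-order expansion of $\sigma(\widetilde F+tH)$ must be handled carefully) and because the set of energy-minimizing stretch configurations is not transverse to the constraint. The mechanism, however, is clear: the two terms on the right of the displayed inequality capture two distinct effects. A component of $H$ that couples $e_1$ to $\mathrm{span}(v_2,v_3)$ forces $\sigma_1-\mu\sim t^2>0$, whose contribution is controlled below precisely because the tension $a>0$; a component of $H$ acting non-trivially on the $v_2$--$v_3$ block forces $|\sigma_2-g|+|\sigma_3-g|\sim t$, whose contribution is controlled below precisely because $D^2\ft>0$ (positivity of its $(2,3)$-block). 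Transversality of $H$ to $T_{\widetilde F}\mathcal{R}$ guarantees at least one effect is present with a definite coefficient. Granting this, $c\,d^2\le\wt(G)-\wt(\widetilde F)\le Cd^2$ on the bounded regime, completing \eqref{eq-quad3}; Remark~\ref{rmk1} then transfers the conclusion to $\w$ as needed in Lemma~\ref{lm3}.
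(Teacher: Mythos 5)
The structure of $\widetilde F$ and the Lipschitz continuity of $g$ are established by a route that differs in detail from the paper's (you parametrize by singular values and invoke Courant--Fischer and the implicit function theorem, whereas the paper diagonalizes $\widetilde F$ directly via Lemma~\ref{lm5} and proves $\lambda_1\ge\lambda_2$ by AM--GM on the invariants $J,I_1,I_2$), but both routes are viable. Your upper bound is also a genuine, and arguably cleaner, alternative: instead of the paper's explicit singular-value identity for $\dist^2(G,SO(3)\widetilde F)$, you pick the optimal $R_*$ and show that the linear Taylor term $D\wt(R_*\widetilde F):(G-R_*\widetilde F)=a\mu\left((R_*)_{11}-1\right)$ is nonpositive and $O(d^2)$ because the first column is shared; this is correct.

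The genuine gap is in the lower bound of \eqref{eq-quad3}, and you flag it yourself. You reduce to showing $a(\sigma_1-\mu)+|\sigma(G)-(\mu,g,g)|^2\gtrsim\dist^2(G,SO(3)\widetilde F)$ on the bounded regime, and then say you ``expect this last estimate to be the hard part'' and grant it. It is indeed the hard part, and your proposed compactness/blow-up strategy would have to cope precisely with the degeneracies you name: the repeated stretch $g(\mu)$ (so $\sigma\mapsto G$ is not smooth in the relevant block) and the tangency of the zero set $\mathcal{R}$ to the constraint. Without carrying that out, the lower bound is not proved. The paper handles this head-on by the explicit identity
\begin{equation*}
\dist^2(G,SO(3)\widetilde F)=(\sigma_1-\mu)^2+(\sigma_2-g)^2+(\sigma_3-g)^2+2(\mu-g)\left(\sigma_1-\alpha\right),\qquad \alpha:=\left(\sqrt{G^{T}G}\right)_{11},
\end{equation*}
together with the two-sided bound $\lambda_1^{2}/\sigma_1\le\alpha\le\lambda_1$ (the lower bound proved by a small Lagrange-multiplier computation over the sphere constraint $\sum x_i^{2}=1$, $\sum x_i^{2}\sigma_i^{2}=\mu^{2}$, the upper bound by Cauchy--Schwarz). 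This replaces your unresolved quadratic-well estimate by a concrete comparison: the extra term $2(\mu-g)(\sigma_1-\alpha)$ is controlled by $2(\mu-g)(\sigma_1-\mu^2/\sigma_1)\le 4(\mu-g)(\sigma_1-\mu)$, which is dominated by the active-constraint term $a(\sigma_1-\mu)$ in the convexity expansion of $\ft$. You would either need to supply the blow-up argument in full (including uniformity as the test direction approaches $T_{\widetilde F}\mathcal{R}$), or adopt this explicit identity; as written, the proof of the lower bound is incomplete.
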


\begin{remark}\label{rmk1}
As a consequence of~\eqref{eq-quad3} we obtain a similar condition for $\w$ -- for a given unit vector $n \in \mathbb{R}^2$ and another vector $m \in \mathbb{R}^3$ we have
\begin{equation}\nonumber
 C_1g_p(\mathrm{dist}( G, SO(3)F )) \le \w(G) - \w(F) \le C_2\dist^2( G, SO(3)F ),
\end{equation}
when $G n = F n = m$ and $\w(F) = \min\{ \w(H): H n=m \}$. 
\end{remark}

\begin{proof}[Proof of Lemma~\ref{lm6}]
 We first prove~\eqref{eq007}. Since $\wt$ is frame-indifferent and isotropic, we can assume WLOG that $u_1 = e_1$ and $v_1 = \lambda_1 e_1$, $\lambda_1 > 1$. Then we want to find all matrices $F$ that minimize
 \begin{equation}\nonumber
  \min \{ \wt(F) : F e_1 = \lambda_1 e_1 \}.
 \end{equation}
 Let $F$ be such a matrix. We see from Lemma~\ref{lm5} that the first column of $F$ and the third column of $F$ are perpendicular. Since $Fe_1 = \lambda_1e_1$, this means that $F_{13} = 0$. Since the second and third column are interchangeable, the first and second column are also perpendicular, i.e. $F_{12} = 0$. By the same lemma we also have that the second and third column of $F$ are perpendicular. Therefore, up to a rotation which fixes the first column, $F$ is diagonal:
 \begin{equation}\nonumber
  F = \begin{pmatrix}
       \lambda_1 & 0 & 0 \\
       0 & \lambda_2 & 0 \\
       0 & 0 & \lambda_3 
      \end{pmatrix}.
 \end{equation}
 We claim that $\lambda_2 = \lambda_3$. Indeed, strict convexity of $\ft$ implies
 \begin{multline}\nonumber
  \wt (F) = \ft(\lll) = \frac{\ft(\lll) + \ft(\lambda_1,\lambda_3,\lambda_2)}{2} \\ > \ft(\lambda_1, \frac{\lambda_2 + \lambda_3}{2}, \frac{\lambda_2 + \lambda_3}{2}) = \wt(F')
 \end{multline} 
 (where $F'e_1 = \lambda_1 e_1$), unless $\lambda_2 = \lambda_3$. Moreover, since
 \begin{equation}\nonumber
  \lambda_2 = \argmin_{t > 0} \ft(\lambda_1,t,t),
 \end{equation}
 it follows that $\lambda_2$ is a Lipschitz continuous function of $\lambda_1$. 
 
We observe that $\lambda_1 \ge \lambda_2$. Indeed we will show that 
\begin{equation}\label{l1l2}
\ft(\lambda_1,\lambda_1,\lambda_1) < \ft(\lambda_1,t,t) 
\end{equation}
for any $t > \lambda_1$. We compute \footnote{See~\eqref{eq049} for definition of $J,I_1,$ and $I_2$.}
\begin{align*}
 J(\lambda_1,t,t) = \lambda_1 t^2 &> \lambda_1^3 = J(\lambda_1,\lambda_1,\lambda_1),\\
 I_1(\lambda_1,t,t) = \lambda_1^{4/3} t^{-4/3} + 2\lambda_1^{-2/3} t^{2/3} &> 3 = I_1(\lambda_1,\lambda_1,\lambda_1),\\
 I_2(\lambda_1,t,t) = \lambda_1^{-4/3} t^{4/3} + 2\lambda_1^{2/3} t^{-2/3} &> 3 = I_2(\lambda_1,\lambda_1,\lambda_1),
\end{align*}
where the two latter inequalities follow from the AM-GM inequality. Therefore for $t > \lambda_1$ we have that $I_1(\lambda_1,t,t) > I_1(\lambda_1,\lambda_1,\lambda_1)$, $I_2(\lambda_1,t,t) > I_2(\lambda_1,\lambda_1,\lambda_1)$, $J(\lambda_1,t,t) > J(\lambda_1,\lambda_1,\lambda_1)$, and so~\eqref{l1l2} follows from~\eqref{eq049}.
 
To prove~\eqref{eq-quad3}, we fix the matrix $F$ (therefore $\lambda_1 \ge \lambda_2$ are also fixed). It is sufficient to show the lower bound and upper bound only for $G$ for which $\wt(G) - \wt(F)$ is small:
 \begin{equation}\label{eq039}
 C_1 \mathrm{dist}^2 (G, SO(3) F) \le \wt (G) - \wt (F) \le C_2 \mathrm{dist}^2 (G, SO(3) F)
 \end{equation}
(the upper bound and the lower bound for ``large'' $G$ follow from the growth assumptions on $\ft$ -- see~\eqref{eq031}).

We start with the lower bound. As before, we can assume $F$ is diagonal with entries $\lambda_1,\lambda_2,\lambda_3=\lambda_2$. Let $\sigma_1 \ge \sigma_2 \ge \sigma_3 \ge 0$ be singular values of $G$. We first see that
\begin{equation*}
 \lambda_1 = G_{11} = \mathrm{tr}(G e_1 \otimes e_1) \le \sigma_1 \cdot 1 = \sigma_1,
\end{equation*}
where the inequality follows from von Neumann's Lemma (Lemma~\ref{lm_neumann}), and $G_{11}$ denotes the upper left entry of matrix $G$. 
Since $\wt(G)$ depends only on the singular values of $G$, we will try to estimate both sides of~\eqref{eq-quad3} in terms of the singular values of $F$ and $G$. 

We start with estimating LHS of~\eqref{eq039}. We know
\begin{multline}\nonumber
 \mathrm{dist}^2(G,SO(3)F) = \min_{R \in SO(3)} \mathrm{tr}\left( (G - RF)^T(G- RF) \right) \\ 
 = \mathrm{tr}\left(G^TG\right) + \mathrm{tr}\left(F^TF\right) - 2\max_{R \in SO(3)} \mathrm{tr}\left(RFG^T\right).
\end{multline}
We claim that $\max_{R \in SO(3)} \mathrm{tr}\left(RFG^T\right)$ is equal to the sum of singular values of $FG^T$. To show this, we use the singular value decomposition $FG^T = UDV^T$, where $U,V \in SO(3)$ and $D$ is a diagonal matrix (with the singular values of $FG^T$ as diagonal entries). Then $\mathrm{tr}\left(RFG^T\right) = \mathrm{tr}\left(RUDV^T\right) = \mathrm{tr}\left(V^TRU D\right) = \mathrm{tr}\left(Q D\right)$. Since $Q$ is a rotation, we have that 
$\mathrm{tr}\left(Q D\right) \le \mathrm{tr}\left(D\right)$, and the maximum is attained (for $Q = I$). 

The sum of singular values of $FG^T$ is by definition equal to $$\mathrm{tr} \left( \left( FG^T GF^T \right)^{1/2} \right) = \mathrm{tr} \left( \left(G^TG\right)^{1/2}F \right).$$ We write $F = (\lambda_1 - \lambda_2) e_1 \otimes e_1 + \lambda_2 I$ and obtain
\[ \mathrm{tr} \left( \left(G^TG\right)^{1/2}F \right)\! =\! (\lambda_1 - \lambda_2) \left(\sqrt{G^TG}\right)_{11} + \lambda_2 \mathrm{tr} \left( \sqrt{G^TG} \right) = (\lambda_1 - \lambda_2) \left(\sqrt{G^TG}\right)_{11} + \lambda_2 \left( \sigma_1 + \sigma_2 + \sigma_3\right). \]
We know that $\sqrt{G^TG}$ has eigenvalues $\sigma_1,\sigma_2,\sigma_3$, and so $\mathrm{tr}(G^TG) = \sigma_1^2 + \sigma_2^2 + \sigma_3^2$. 
Therefore we see that
\begin{equation}\label{eq046}
\begin{aligned}
 \mathrm{dist}^2(G,SO(3)F) &= (\sigma_1^2 + \sigma_2^2 + \sigma_3^2) + (\lambda_1^2 + \lambda_2^2 + \lambda_2^2) - 2\lambda_2 (\sigma_1 + \sigma_2 + \sigma_3) - 2(\lambda_1 - \lambda_2)\alpha \\ &= (\sigma_1 - \lambda_1)^2 + (\sigma_2 - \lambda_2)^2 + (\sigma_3 - \lambda_2)^2 + 2(\lambda_1 - \lambda_2)(\sigma_1 - \alpha),
\end{aligned}
\end{equation}
where $\alpha$ denotes
\begin{equation}\label{eq040}
\alpha = (\sqrt{G^TG})_{11} = (\sqrt{G^TG}e_1,e_1). 
\end{equation}

We claim that 
\begin{equation}\label{eq045}
 \alpha \ge \frac{\lambda_1^2}{\sigma_1}.
\end{equation}
Indeed, we write $G = UDV$, where $U,V \in SO(3)$ and $D$ is a diagonal matrix with entries $\sigma_i, i=1,2,3$. We define a unit vector
$$x = (x_1,x_2,x_3) := Ve_1.$$
By virtue of~\eqref{eq040} we get
\begin{equation}\nonumber
 \alpha = (\sqrt{G^TG}e_1,e_1) = (V^TDVe_1,e_1) = (D Ve_1, Ve_1) = \sum_{i=1}^3 x_i^2 \sigma_i.
\end{equation}
We also know that $G e_1 = \lambda_1 e_1$, and so 
\begin{equation}\nonumber
 \lambda_1 e_1 = UDVe_1 = U\left(\sum_{i=1}^3 \sigma_i x_i e_i\right).
\end{equation}
In particular, since $U \in SO(3)$, we see that the norm of the vector on the RHS is $\lambda_1$, i.e.
\begin{equation}\label{eq048}
 \sum_{i=1}^3 x_i^2 \sigma_i^2 = \lambda_1^2. 
\end{equation}
To summarize, we have
\begin{gather}\label{eq043}
 \alpha = \sum_{i=1}^3 x_i^2 \sigma_i,\\ \sum_{i=1}^3 x_i^2 = 1 \textrm{ and } \sum_{i=1}^3 x_i^2 \sigma_i^2 = \lambda_1^2.\label{eq043a} 
\end{gather}
To find the lower bound for $\alpha$, we simply optimize $\sum_{i=1}^3 x_i^2 \sigma_i$ assuming \eqref{eq043a}. Using method of Lagrange multipliers we observe that for $\sum_{i=1}^3 x_i^2 \sigma_i$ to be minimal one $x_i$ has to be zero. We see that $\sum_{i=1}^3 x_i^2 \sigma_i$ is equal to:
\begin{equation}\nonumber
 \frac{\lambda_1^2 + \sigma_2\sigma_3 }{\sigma_2 + \sigma_3} \quad \textrm{ if } x_1 = 0, \qquad
 \frac{\lambda_1^2 + \sigma_1\sigma_3 }{\sigma_1 + \sigma_3} \quad \textrm{ if } x_2 = 0, \qquad
 \frac{\lambda_1^2 + \sigma_1\sigma_2 }{\sigma_1 + \sigma_2} \quad \textrm{ if } x_3 = 0. 
\end{equation}
Using convention $\sigma_1 \ge \sigma_2 \ge \sigma_3$ and \eqref{eq048} we see that $\sigma_1 \ge \lambda_1 \ge \sigma_3$. Since
\begin{equation*}
 \frac{\lambda_1^2 + \sigma_1\sigma_3 }{\sigma_1 + \sigma_3} \le \frac{\lambda_1^2 + \sigma_2\sigma_3 }{\sigma_2 + \sigma_3} \Longleftrightarrow \lambda_1^2 \ge \sigma_3^2
 \textrm{~~ and ~~}
 \frac{\lambda_1^2 + \sigma_1\sigma_3 }{\sigma_1 + \sigma_3} \le \frac{\lambda_1^2 + \sigma_1\sigma_2 }{\sigma_1 + \sigma_2} \Longleftrightarrow \lambda_1^2 \le \sigma_1^2, 
 \end{equation*}
the minimum of~\eqref{eq043} is equal to $\frac{\lambda_1^2 + \sigma_1\sigma_3 }{\sigma_1 + \sigma_3}$. Finally, we observe that $\sigma_1^2 \ge \lambda_1^2$ implies
\begin{equation*}
 \frac{\lambda_1^2 + \sigma_1\sigma_3 }{\sigma_1 + \sigma_3} \ge \frac{\lambda_1^2}{\sigma_1}.
\end{equation*}
We have proved~\eqref{eq045}.

Now we are ready to finish the proof of the lower bound. By virtue of~\eqref{eq046}:
\begin{equation}\nonumber
 \mathrm{dist}^2(G,SO(3)F) \le (\sigma_1 - \lambda_1)^2 + (\sigma_2 - \lambda_2)^2 + (\sigma_3 - \lambda_2)^2 + 2(\lambda_1 - \lambda_2)\left(\sigma_1 - \frac{\lambda_1^2}{\sigma_1}\right),
\end{equation}
For the middle term in~\eqref{eq039} we have:
\begin{align*}
 \wt(G) - \wt(F) &= \ft(\sigma_1,\sigma_2,\sigma_3) - \ft(\lambda_1,\lambda_2,\lambda_2) \\ &\ge \sum_{i=1}^3 \partial_i \ft (\lambda_1,\lambda_2,\lambda_2)(\sigma_i - \lambda_i) + C\left|(\sigma_1,\sigma_2,\sigma_3) - (\lambda_1,\lambda_2,\lambda_2)\right|^2 \\
 &= \partial_1 \ft (\lambda_1,\lambda_2,\lambda_2)(\sigma_1 - \lambda_1) + C\sum_{i=1}^3 (\sigma_i - \lambda_i)^2,
\end{align*}
where we have used strict convexity of $\ft$ (see~\eqref{eq031}) and $\partial_2 \ft(\lambda_1,\lambda_2,\lambda_2) = \partial_3 \ft (\lambda_1,\lambda_2,\lambda_2) = 0$ (a~consequence of the definition of $\lambda_2$). 
Therefore, it remains to show
\begin{equation}\label{eq047}
 2(\lambda_1 - \lambda_2)\left(\sigma_1 - \frac{\lambda_1^2}{\sigma_1}\right) \le C\, \partial_1 \ft (\lambda_1,\lambda_2,\lambda_2)\, (\sigma_1 - \lambda_1).
\end{equation}
We observe that since $\sigma_1 \ge \lambda_1$, we have $\left(\sigma_1 - \frac{\lambda_1^2}{\sigma_1}\right) \le 2(\sigma_1 - \lambda_1)$. Since $\lambda_1, \lambda_2$ are fixed and $\partial_1 \ft > 0$, \eqref{eq047} immediately follows (with constant $C$ depending on $\lambda_1$). 

We now turn to the proof of the upper bound. This is easy, since we have already done all the work. We know that
\begin{equation}\nonumber
\begin{aligned}
 \mathrm{dist}^2(G,SO(3)F) &= (\sigma_1 - \lambda_1)^2 + (\sigma_2 - \lambda_2)^2 + (\sigma_3 - \lambda_2)^2 + 2(\lambda_1 - \lambda_2)(\sigma_1 - \alpha) \\ &\ge (\sigma_1 - \lambda_1)^2 + (\sigma_2 - \lambda_2)^2 + (\sigma_3 - \lambda_2)^2 + 2(\lambda_1 - \lambda_2)(\sigma_1 - \lambda_1),
 \end{aligned}
\end{equation}
where we have used $\alpha \le \lambda_1$. This is true by~\eqref{eq043} and \eqref{eq043a}:
\begin{equation*}
 \alpha = \sum_{i=1}^3 x_i \left(x_i \sigma_i\right) \le \left(\sum_{i=1}^3 x_i^2\right)^{1/2} \left(\sum_{i=1}^3 x_i^2 \sigma_i^2\right)^{1/2} = \lambda_1.
\end{equation*}
We also know 
\begin{align*}
  \wt(G) - \wt(F) &= \ft(\sigma_1,\sigma_2,\sigma_3) - \ft(\lambda_1,\lambda_2,\lambda_2) \\ &\le \sum_{i=1}^3 \partial_i \ft (\lambda_1,\lambda_2,\lambda_2)(\sigma_i - \lambda_i) + C\left|(\sigma_1,\sigma_2,\sigma_3) - (\lambda_1,\lambda_2,\lambda_2)\right|^2 \\
 &= \partial_1 \ft (\lambda_1,\lambda_2,\lambda_2)(\sigma_1 - \lambda_1) + C\sum_{i=1}^3 (\sigma_i - \lambda_i)^2,
\end{align*}
where we have used boundedness of $D^2 \ft$ (see~\eqref{eq031}).
To finish the proof, it remains to observe that
\begin{equation}\nonumber
 \partial_1 \ft (\lambda_1,\lambda_2,\lambda_2)(\sigma_1 - \lambda_1) \le C(\lambda_1 - \lambda_2)(\sigma_1 - \lambda_1)
\end{equation}
holds trivially (with $\lambda_1 > \lambda_2$ being fixed). 
\end{proof}


\begin{lemma}\label{lm8}
 Let $F, G \in \R^{3\times 2}$. Let $\xi \in \R^3$ satisfies $\xi \perp F$, $\det (F|\xi) > 0$, and $|\xi| = l(F)$, and similarly let $\zeta \in \R^3$ satisfies $\zeta \perp G$, $\det (G|\zeta) > 0$, $|\zeta| = l(G)$, where $l(A)$ is a Lipschitz continuous function of singular values of $A$. Then there exists constant $C$ such that
 \begin{equation}\nonumber
  \dist( (F|\xi), SO(3)(G|\zeta) ) \le C\dist(F,SO(3)G).
 \end{equation} 
\end{lemma}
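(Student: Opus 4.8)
The plan is to reduce the statement to a comparison of singular values, and then to an elementary estimate showing that the extra column $\xi$ (resp.\ $\zeta$), being a Lipschitz function of the singular values of $F$ (resp.\ $G$), cannot be ``far'' when $F$ and $G$ are close up to a rotation. First I would recall the standard fact that $\dist(A, SO(3)B)$ for $A,B\in\R^{3\times2}$ can be written purely in terms of the singular values $\sigma_1(A)\ge\sigma_2(A)$, $\rho_1(B)\ge\rho_2(B)$ of $A$ and $B$ together with the angle between the right singular subspaces; more precisely, using von Neumann's lemma (Lemma~\ref{lm_neumann}) in the same way as in the proof of Lemma~\ref{lm6}, one has the two-sided estimate
\begin{equation}\nonumber
 c\left(|\sigma_1(A)-\rho_1(B)|^2 + |\sigma_2(A)-\rho_2(B)|^2 + d_{\mathrm{sub}}^2\right) \le \dist^2(A,SO(3)B) \le C\left(\cdots\right),
\end{equation}
where $d_{\mathrm{sub}}$ measures the discrepancy between the (ordered) singular frames. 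Since everything here is controlled by $\dist(F,SO(3)G)$, it suffices to produce a rotation $R\in SO(3)$ such that $|(F|\xi) - R(G|\zeta)|$ is controlled by $\dist(F,SO(3)G)$.

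Next, I would choose $R$ to be (close to) the optimal rotation for the $3\times2$ problem, i.e.\ $R$ achieving $\dist(F,SO(3)G) = |F-RG|$. Write $F = (f_1|f_2)$, $G=(g_1|g_2)$ in their first two columns; then $|f_i - Rg_i| \le |F-RG| = \dist(F,SO(3)G)$ for $i=1,2$. The remaining task is to bound $|\xi - R\zeta|$. Here the key structural facts are: $\xi$ is the unique vector orthogonal to $f_1,f_2$ with $\det(F|\xi)>0$ and $|\xi|=l(F)$; similarly $R\zeta$ is orthogonal to $Rg_1,Rg_2$, has positive orientation with respect to them, and $|R\zeta| = |\zeta| = l(G)$. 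Thus $\xi$ and $R\zeta$ are both determined, up to sign (fixed by orientation), by the planes $\mathrm{span}(f_1,f_2)$ and $\mathrm{span}(Rg_1,Rg_2)$ respectively, and by the scalar lengths $l(F)$, $l(G)$. I would estimate $|\xi-R\zeta|$ by splitting it into a direction error and a length error:
\begin{equation}\nonumber
 |\xi - R\zeta| \le \big|\,l(F)\,\hat\xi - l(F)\,\widehat{R\zeta}\,\big| + \big|\,l(F)\,\widehat{R\zeta} - l(G)\,\widehat{R\zeta}\,\big| = l(F)\,|\hat\xi - \widehat{R\zeta}| + |l(F) - l(G)|,
\end{equation}
where hats denote unit vectors. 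The length error $|l(F)-l(G)|$ is $\le C|F-RG| \le C\dist(F,SO(3)G)$ by the Lipschitz dependence of $l$ on singular values (singular values being $1$-Lipschitz in the matrix, Corollary~8.6.2 in~\cite{golubMatrix}), after observing that the singular values of $RG$ equal those of $G$. For the direction error, $\hat\xi$ is the unit normal to $\mathrm{span}(f_1,f_2)$ and $\widehat{R\zeta}$ the unit normal to $\mathrm{span}(Rg_1,Rg_2)$ (with matching orientations); a unit normal to a plane spanned by two vectors is a locally Lipschitz function of those vectors as long as they stay uniformly linearly independent, so $|\hat\xi-\widehat{R\zeta}| \le C(|f_1-Rg_1| + |f_2-Rg_2|) \le C\dist(F,SO(3)G)$.

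The main obstacle — and the one place the argument needs care — is the degeneration when $F$ (hence $G$) has nearly linearly dependent or vanishing columns, i.e.\ when the plane $\mathrm{span}(f_1,f_2)$ is ill-defined: there the unit normal is not Lipschitz, and the cross-product formula for $\hat\xi$ blows up. In the regime of interest this does not occur because, throughout the applications of this lemma (via $F_0$, $M$, and the deformation gradients near them, cf.~\eqref{eq-M} and the surrounding discussion), the relevant matrices have singular values bounded below by a fixed constant $>1$, so $f_1,f_2$ stay uniformly transverse and $l(\cdot)$ stays bounded above; I would simply state this as a standing hypothesis (equivalently, restrict to $|F|,|G|\le K$ and singular values bounded below, then handle ``large'' $F$ or $G$ by the growth/coercivity bound exactly as in Lemma~\ref{lm6} and Lemma~\ref{lm7}). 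With that uniform transversality in force, all the constants above are uniform, and combining the two displayed estimates gives $|(F|\xi) - R(G|\zeta)|^2 = |F-RG|^2 + |\xi-R\zeta|^2 \le C\dist^2(F,SO(3)G)$, which yields $\dist((F|\xi),SO(3)(G|\zeta)) \le \dist((F|\xi), R\,(G|\zeta)) \le C\dist(F,SO(3)G)$, as claimed.
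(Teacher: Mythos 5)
The key step you skip over is the choice of rotation $R$, and it is exactly here that your argument has a gap. You take $R$ to be \emph{any} optimizer of $\dist(F,SO(3)G)$ and then split $|\xi - R\zeta|$ into a length error and a direction error; the direction error is bounded by claiming that the unit normal to $\mathrm{span}(f_1,f_2)$ is Lipschitz in $(f_1,f_2)$. But that Lipschitz constant is \emph{not} uniform: it blows up like $1/\sigma_2(F)$ as the columns of $F$ become nearly parallel (for $F$ with columns $(e_1,\epsilon e_2)$ perturbed in the $e_3$ direction, a perturbation of size $\epsilon\theta$ in $F$ tilts the normal by $\theta$). You recognize this and propose to restrict to matrices with singular values bounded below, but the lemma as stated makes no such hypothesis, and in the application (Section~\ref{lb3D}) one of the two matrices is $\grad\ut P$, which is not a priori bounded or bounded away from rank one. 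Your suggestion to handle the bad regime by the ``growth/coercivity bound as in Lemma~\ref{lm6} and Lemma~\ref{lm7}'' also does not apply here: Lemma~\ref{lm8} is a purely linear-algebraic statement about matrices and $\mathrm{dist}$; there is no energy density to invoke.

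The paper's proof sidesteps the direction error entirely. After rotating so that $F_{31}=F_{32}=0$, it shows that the optimal $R$ in $\dist(F,SO(3)G)=|F-RG|$ can be chosen so that $RG$ also lies in the $x$-$y$ plane (by comparing $\dist(F,SO(3)G)$ with $\dist(F,SO(2)SG)$, where $S$ pre-rotates $G$ into the plane, and checking they coincide). With that choice, $\xi$ and $R\zeta$ are both perpendicular to the same plane and, after an orientation check, parallel; hence $|\xi - R\zeta| = \big|\,|\xi| - |R\zeta|\,\big| = |l(F) - l(G)|$, which is controlled by $|F-RG|$ via the Lipschitz continuity of $l$ and the $1$-Lipschitz dependence of singular values on the matrix. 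The constant is then just the Lipschitz constant of $l$ and needs no lower bound on $\sigma_2$. The missing idea in your write-up is precisely this: do not estimate the normal-direction discrepancy at all, but instead pick the minimizing rotation so that the two normals already coincide.
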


\begin{proof}
 Without loss of generality we assume columns of $F$ lie in the $x$-$y$ plane, i.e. $F_{31}=F_{32}=0$. Let $R \in SO(3)$ be such that
 \begin{equation}\label{eq060}
  \dist(F,SO(3)G) = |F - RG|.
 \end{equation}
 We show that $R$ can be chosen such that columns of $RG$ lie in the $x$-$y$ plane as well. We know that (see the proof of Lemma~\ref{lm6}):
 \begin{equation}\nonumber
  \dist^2(F,SO(3)G) = \mathrm{tr} (F^TF) + \mathrm{tr} (G^TG) - 2\mathrm{tr} \left( (FG^TGF^T)^{1/2}\right).
 \end{equation}
 Since $F_{31}=F_{32}=0$, the RHS in the last relation does not change if we replace $F$ by its first two rows. The last term is then equal to $\mathrm{tr} \left( (G^TG)^{1/2} (F^TF)^{1/2} \right)$, and we obtain
 \begin{equation}\nonumber
  \dist^2(F,SO(3)G) = || (G^TG)^{1/2} - (F^TF)^{1/2} ||^2. 
 \end{equation}
 
 Now let $S \in SO(3)$ be such that $(SG)_{31} = (SG)_{32} = 0$. We want to compute 
 \begin{equation}\nonumber
  \dist(F,SO(2)SG),
 \end{equation}
 where we treat $F$ and $SG$ as $2 \times 2$ matrices (since both their third rows vanish). 
 Following the previous reasoning we obtain
 \begin{equation}\nonumber
  \dist^2(F,SO(2)SG) = || ((SG)^TSG)^{1/2} - (F^TF)^{1/2} ||^2 = || (G^TG)^{1/2} - (F^TF)^{1/2} ||^2.
 \end{equation}
 We have shown that $\dist^2(F,SO(2)SG) = \dist^2(F,SO(3)G)$, i.e. that $R$ can be chosen such that $RG$ lies in the $xy$ plane. 
 
 We have
\begin{multline}\nonumber 
  \dist^2( (F|\xi), SO(3)(G|\zeta) ) \le || (F|\xi) -  R (G|\zeta) ||^2 \\ = || F - RG||^2 + |\xi - R\zeta|^2 = \dist^2( F, SO(3)G ) + |\xi - R\zeta|^2.
 \end{multline} 
 Since $F$ and $RG$ lie in the $x$-$y$ plane, both $\xi$ and $R\zeta$ are perpendicular to this plane.  It is straightforward but tedious to show that in fact $\xi$ and $R\zeta$ have the same orientation. Then we just use the fact that $|\xi| = l(F)$ and $|\zeta| = l(G)$ together with Lipschitz continuity of $l$ to obtain
 \begin{equation}\nonumber 
  |\xi - R\zeta| = |l(F) - l(RG)| \le C|F - RG| = C\dist(F,SO(3)G).
 \end{equation}
 We are done since~\eqref{eq060} and the previous relation imply
 \begin{equation}\nonumber
   \dist^2( (F|\xi), SO(3)(G|\zeta) ) \le |(F|\xi) - R(G|\zeta)|^2 = |F - RG|^2 + |\xi - R\zeta|^2 \le C\dist^2(F,SO(3)).
 \end{equation} 
\end{proof}


\begin{lemma}[Poincar\'e inequality]\label{lmPoincare}
Let $g_p$ be as in~\eqref{gp} with $1 < p \le 2$. Then there exists a constant $C(U,p)$ such that for every $v \in W^{1,p}(U)$ there exists a constant $\bar v$ and
\begin{equation}\nonumber\label{a1}
 \int_U g_p(|v - \bar v|) \dxyz \le C \int_U g_p(|\nabla v|) \dxyz.
\end{equation}
\end{lemma}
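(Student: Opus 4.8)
The plan is to reduce to the classical Poincar\'e--Wirtinger inequalities with exponents $2$ and $p$ by splitting the \emph{gradient} of $v$ — rather than $v$ itself — into a part where it is small and a part where it is large. The elementary properties of $g_p$ that make this work, all immediate from~\eqref{gp}, are: $g_p(t)\le\tfrac12 t^2$ and $g_p(t)\le t^p$ for every $t\ge 0$; $g_p(t)=\tfrac12 t^2$ for $0\le t\le 1$; $g_p(t)\ge\tfrac12 t^p$ for $t\ge 1$; together with~\eqref{gp2}, these let one trade $g_p$ for $t^2$ below scale $1$ and for $t^p$ above scale $1$. The non-homogeneity of $g_p$ rules out a naive rescaling argument, and truncating $v$ would \emph{not} truncate $\nabla v$ to the region where it is small; this is precisely why the splitting must be carried out on $\nabla v$.

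Concretely, given $v\in W^{1,p}(U)$ I would set $N:=\nabla v\,\mathbf{1}_{\{|\nabla v|\le 1\}}$ and $P:=\nabla v\,\mathbf{1}_{\{|\nabla v|>1\}}$, so that $\nabla v=N+P$ and
\begin{equation*}
 \int_U |N|^2\,\dxyz\le 2\int_U g_p(|\nabla v|)\,\dxyz,\qquad \int_U |P|^p\,\dxyz\le 2\int_U g_p(|\nabla v|)\,\dxyz .
\end{equation*}
Next I would realize $\nabla v$ as a sum of two gradients adapted to this decomposition: let $a\in W^{1,2}(U)$ solve the Neumann problem $\Delta a=\operatorname{div}N$ in $U$, $\partial_\nu a=N\cdot\nu$ on $\partial U$ (the compatibility condition holds automatically by the divergence theorem), and set $b:=v-a\in W^{1,p}(U)$. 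A direct computation gives $\Delta b=\operatorname{div}P$, $\partial_\nu b=P\cdot\nu$, i.e.\ $\nabla a$ and $\nabla b$ are the Helmholtz gradient-parts of $N$ and $P$. The classical $L^q$ a priori estimate for the Neumann Laplacian on a smooth bounded domain (equivalently, $L^q$-boundedness of the Helmholtz projection, $1<q<\infty$) then gives $\|\nabla a\|_{L^2(U)}\le C(U)\|N\|_{L^2(U)}$ and $\|\nabla b\|_{L^p(U)}\le C(U,p)\|P\|_{L^p(U)}$; all domains on which the lemma is invoked (intervals — where this is trivial — annuli, and the relaxed/non-relaxed regions) are smooth and connected, so this applies.

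Finally, take $\bar v:=|U|^{-1}\int_U v\,\dxyz=\bar a+\bar b$, where $\bar a,\bar b$ are the averages of $a,b$ (legitimate since $v=a+b$). Then $v-\bar v=(a-\bar a)+(b-\bar b)$, and applying Poincar\'e--Wirtinger in $L^2$ to $a$ and in $L^p$ to $b$, together with monotonicity of $g_p$, the inequality~\eqref{gp2}, and $g_p(t)\le\min(\tfrac12 t^2,t^p)$,
\begin{equation*}
 \int_U g_p(|v-\bar v|)\,\dxyz\le 2\!\int_U g_p(|a-\bar a|)+2\!\int_U g_p(|b-\bar b|)\le \int_U |a-\bar a|^2+2\!\int_U |b-\bar b|^p\le C\bigl(\|\nabla a\|_{L^2(U)}^2+\|\nabla b\|_{L^p(U)}^p\bigr)\le C(U,p)\!\int_U g_p(|\nabla v|)\,\dxyz,
\end{equation*}
which is the assertion.

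The one ingredient that is not completely elementary is the $L^q$ Neumann/Helmholtz estimate used to split $\nabla v=\nabla a+\nabla b$ with the two pieces controlled in $L^2$ and in $L^p$ respectively; I view verifying this splitting (and, if one prefers to remain self-contained, recalling the relevant elliptic theory) as the main point of the argument. A contradiction-and-compactness alternative — rescale a hypothetical bad sequence, apply Rellich to get a $W^{1,p}$-convergent subsequence, and conclude the limiting gradient vanishes, forcing a constant limit — also works, but there the lack of homogeneity of $g_p$ forces a somewhat awkward separate treatment of the regimes $\int_U g_p(|\nabla v_k|)\to 0$ and $\int_U g_p(|\nabla v_k|)\not\to 0$, so the direct splitting above seems preferable.
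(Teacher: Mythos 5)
Your proof is correct, but it takes a genuinely different route from the paper's. The paper invokes the Lipschitz-truncation (Friesecke--James--M\"uller) result: for $\lambda=1$ one obtains $V$ with $\|\nabla V\|_{L^\infty}\le C$, with $v=V$ outside a set of controlled measure and $\|\nabla(v-V)\|_{L^p}^p$ bounded by the high part of the gradient. Applying the usual Poincar\'e inequality in $L^2$ to $V$ and in $L^p$ to $v-V$, then recombining via $g_p(s+t)\le C(s^p+t^2)$, gives the bound. You instead split the \emph{gradient} $\nabla v=N+P$ by magnitude, realize $N$ and $P$ as $\nabla a+(\text{div-free})$ and $\nabla b+(\text{div-free})$ via the Neumann/Helmholtz projection, and apply Poincar\'e--Wirtinger in $L^2$ to $a$ and in $L^p$ to $b$. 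Both arguments are clean; what they buy is slightly different. The truncation result is valid on any bounded Lipschitz domain and requires only a Whitney/maximal-function argument, whereas the $L^p$-boundedness of the Helmholtz projection for all $1<p<\infty$ needs at least $C^1$ regularity of $\partial U$ (on merely Lipschitz domains the admissible range of $p$ shrinks). As you note, every $U$ on which the lemma is actually applied in the paper is smooth, so this is not a flaw — but it does make the paper's route the more robust one. One small inaccuracy in your framing: you say ``truncating $v$ would not truncate $\nabla v$ to the region where it is small,'' but the Lipschitz truncation used by the paper does exactly that — it modifies $v$ on a small set to produce a $V$ with $|\nabla V|\le C$ everywhere. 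The objection applies only to naive level-set truncation, not to the tool the paper actually uses; both approaches are, in this sense, ``truncating the gradient.''
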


\begin{proof}
 We first observe that since $g_p(t) \le \frac{1}{2} \min (t^p,t^2)$ and $g_p$ is convex, there exists $C$ such that 
 \begin{equation}\label{a4}
  g_p(s+t) \le C(s^p + t^2),\quad \textrm{ for every } s,t \ge 0.
 \end{equation}
 In the proof we will use the following truncation result proved in~\cite{bib-rigid1}:
 
 \begin{prop*}[Truncation]
  Suppose $U \subset \R^n$ is a bounded Lipschitz domain. Then there exists a constant $C(U,p)$ with the following property:
  For each $v \in W^{1,p}(U)$ and each $\lambda > 0$, there exists $V : U \to \R$ such that
  \begin{equation}\nonumber
  \begin{array}{cl}
   \mathrm{(i)} & \quad ||\nabla V||_{L^\infty} \le C\lambda \\
   \mathrm{(ii)} & \quad \left|\{ x \in U : v(x) \neq V(x)\}\right| \le \frac{C}{\lambda^p} \int_{\{ x\in U:|\nabla v(x)|>\lambda\}} |\nabla v|^p \dxyz,\\
   \mathrm{(iii)} & \quad ||\nabla v - \nabla V||_{L^p(U)}^p \le C \int_{ \{ x \in U:|\nabla v(x)|>\lambda\} } |\nabla v|^p \dxyz.
  \end{array}
  \end{equation}
 \end{prop*}
 
Let us denote 
\begin{equation}\nonumber
  K := \int_U g_p(|\nabla v|) \dxyz.
 \end{equation}
By the proposition with $\lambda = 1$ there exists $V \in W^{1,\infty}$ such that $|\nabla V| \le C$ and 
\[||\nabla v - \nabla V||_{L^p}^p \le C\int_{\{|\nabla v|>1\}} |\nabla v|^p \dxyz \le CK.\]
%
The standard Poincar\'e inequality implies
 \begin{multline}\label{a2}
  \int_U |V - \bar V|^2 \dxyz \le C \int_U |\nabla V|^2 \dxyz \le C\left( \int_{\{ v \neq V \}} |\nabla V|^2 \dxyz + \int_{\{ v = V \}} |\nabla V|^2 \dxyz\right) \\ \le C\left|  \left\{ v \neq V \right\} \right| + C\int_{\{ v = V \}} |\nabla V|^2 \dxyz \le CK.
 \end{multline}
We also get
\begin{equation}\label{a3}
 \int_U |V - v - a|^p \dxyz \le C\int_U |\nabla V - \nabla v|^p \dxyz \le CK.
\end{equation}
Using~\eqref{a4},~\eqref{a2}, and~\eqref{a3} we obtain
\begin{equation}\nonumber
 \int_U g_p\left(|v - (a + \bar V)|\right) \dxyz \le C\left( \int_U |V-\bar V|^2 \dxyz + \int_U |V - v - a|^p \dxyz\right) \le CK. 
\end{equation}
\end{proof}

\bibliographystyle{amsplain}
\bibliography{bella}

\end{document}